\newcommand\blfootnotea[1]{%
  \begingroup
  \renewcommand\thefootnote{}\footnote{#1}%
  \endgroup
}
\definecolor{lb}{RGB}{0, 100, 200}
\definecolor{green2}{RGB}{60, 120, 0}
\newlist{itemizec}{itemize}{2}
\setlist[itemizec,1]{label=\faCaretRight ,wide, parsep= 0.05pt, left = 11pt}
\def\E{\mathbb E}
\def\pE{\widetilde{\mathbb E}}
\def\R{\mathbb R}
\def\N{\mathbb N}
\def\eps{\epsilon}
\let\vec\mathbf
\newcommand{\bA}{\vec{A}}
\newcommand{\bI}{\vec{I}}
\newcommand{\poly}{\mathrm{poly}}
\newcommand{\polylog}{\mathrm{polylog}}
\newcommand{\cN}{\mathcal{N}}
\newcommand{\cL}{\mathcal{L}}
\newcommand{\cT}{\mathcal{T}}
\newcommand{\cF}{\mathcal{F}}
\newcommand{\cY}{\mathcal{Y}}
\newcommand{\cA}{\mathcal{A}}
\newcommand{\cC}{\mathcal{C}}
\newcommand{\cD}{\mathcal{D}}
\newcommand{\op}{\mathrm{op}}
\crefname{equation}{Equation}{Equations}
\crefname{lemma}{Lemma}{Lemmata}
\crefname{claim}{Claim}{Claims}
\crefname{fact}{Fact}{Facts}
\crefname{theorem}{Theorem}{Theorems}
\crefname{proposition}{Proposition}{Propositions}
\crefname{corollary}{Corollary}{Corollaries}
\crefname{remark}{Remark}{Remarks}
\crefname{definition}{Definition}{Definitions}
\crefname{question}{Question}{Questions}
\crefname{condition}{Condition}{Conditions}
\crefname{figure}{Figure}{Figures}
\newtheorem{theorem}{Theorem}[section]
\newtheorem{claim}[theorem]{Claim}
\theoremstyle{definition}
\newtheorem{fact}[theorem]{Fact}
\newtheorem{definition}[theorem]{Definition}
\newtheorem{question}[theorem]{Question}
\newtheorem{remark}[theorem]{Remark}
\theoremstyle{definition}
\definecolor{Red}{rgb}{1,0,0}
\definecolor{Blue}{rgb}{0,0,1}
\definecolor{DGreen}{rgb}{0,0.55,0}
\definecolor{Purple}{rgb}{.75,0,.25}
\definecolor{Grey}{rgb}{.5,.5,.5}
\newcommand{\iprod}[1]{\langle#1\rangle}
\newcommand{\Paren}[1]{\left(#1\right)}
\newcommand{\Brac}[1]{\left[#1\right]}
\date{}
\title{SoS Certifiability of Subgaussian Distributions\\  and its Algorithmic Applications
\blfootnotea{Authors are listed in alphabetical order.}
}
\begin{document}
\maketitle
\vspace{-0.4in}
\begin{center}
\renewcommand*{\thefootnote}{\fnsymbol{footnote}}
\begin{tabular}{ccc}
		{\large{Ilias Diakonikolas}}\footnotemark[2] & \hspace*{.1in} & {\large{Samuel B. Hopkins}}\footnotemark[3] \\
		{\large{UW Madison}} & \hspace*{.1in} & {\large{MIT}} \\
		{\large{\texttt{ilias@cs.wisc.edu}}} &  & {\large{\texttt{samhop@mit.edu}}} \vspace{.25in}\\
		{\large{Ankit Pensia}}\footnotemark[4]  & \hspace*{.1in}& {\large{Stefan Tiegel}}\footnotemark[5] \\
		{\large{Simons Institute, UC Berkeley}} & \hspace*{.1in}  & {\large{ETH Z\"urich}} \\
		{\large{\texttt{ankitp@berkeley.edu}}}  & \hspace*{.1in}& {\large{\texttt{stefan.tiegel@inf.ethz.ch}}} \\
			\end{tabular}
\footnotetext[2]{Supported by NSF Medium Award CCF-2107079 and an H.I. Romnes Faculty Fellowship.}
\footnotetext[3]{Supported by NSF CAREER award no. 2238080 and MLA@CSAIL.}
\footnotetext[4]{Most of this work was done while the author was supported by IBM Herman Goldstine Fellowship. }
\footnotetext[5]{Supported by the European Union’s Horizon research and innovation programme (grant agreement no. 815464). Most of this work was done while the author was visiting MIT.}

\vspace{0.4in}
\today
\end{center}

\renewcommand*{\thefootnote}{\arabic{footnote}}
\setcounter{footnote}{0}

\thispagestyle{empty}
\vspace{0.4in}

\begin{abstract}
We prove that there is a universal constant $C>0$ so that for every $d \in \N$, 
every centered subgaussian distribution $\cD$ on $\R^d$, and every even $p \in \N$, 
the $d$-variate polynomial 
$(Cp)^{p/2} \cdot \|v\|_{2}^p - \E_{X \sim \cD} \iprod{v,X}^p$
is a sum of square polynomials.
This establishes that every subgaussian 
distribution is \emph{SoS-certifiably subgaussian}---a condition 
that yields efficient learning algorithms for a wide variety of 
high-dimensional statistical tasks. %
As a direct corollary, we obtain computationally efficient algorithms with near-optimal guarantees
for the following tasks, when given samples from an arbitrary subgaussian distribution: 
robust mean estimation, list-decodable mean estimation, 
clustering mean-separated mixture models, robust covariance-aware mean estimation, robust covariance estimation, 
and robust linear regression.
Our proof makes essential use of Talagrand's generic chaining/majorizing measures theorem.
\end{abstract}

\newpage
\setcounter{page}{1}

\section{Introduction} %
\label{sec:introduction}

\paragraph{Motivation: Distributional Assumptions {\em vs} Accuracy in Robust Statistics} 
Algorithmic robust statistics aims to design computationally efficient 
estimators that achieve (near-)optimal accuracy in the presence of a 
significant fraction of corrupted data. The prototypical problem 
in this field is {\em robust mean estimation}: 
the input is a multiset $S$ consisting of $n$ points in $\R^d$ 
and a contamination parameter $\eps \in (0, 1/2)$ such that, roughly speaking, an 
unknown $(1-\eps)$-fraction of the points in $S$ are drawn from 
an unknown distribution $P$ in a known family $\cal{D}$.
Importantly, no 
assumptions are made on the remaining $\eps$-fraction of the points; they 
could be arbitrary or even adversarially selected.
The goal of the 
learner is to estimate the unknown mean $\mu$ of $P$. 

While the statistical limits of this basic task were 
well-understood early on for a range of distribution families 
$\cal{D}$~\cite{DonLiu88a,HubRon09}, our understanding of its 
{\em computational complexity} has 
developed only fairly recently~\cite{DiaKKLMS16-focs, LaiRV16,DiaKan22-book}. 
Specifically, for the textbook setting that $\cal D$ is the class of 
Gaussian distributions,~\cite{DiaKKLMS16-focs} gave a polynomial sample 
and time algorithm that approximates the target mean within $\ell_2$-error 
$\tilde{O}(\eps)$. This upper bound nearly matches the information-theoretic 
limit of $\Theta(\eps)$ on the best possible error, and is 
constant factor optimal for efficient algorithms in broad (yet 
restricted) models of computation~\cite{DiaKS17,BreBHLS21,DiaKPP24-sos}.

\looseness=-1The Gaussian assumption on the clean data is typically insufficient to 
accurately model a range of realistic applications. Motivated by 
this limitation, a flurry of research in algorithmic 
robust statistics has pursued efficient learning algorithms for 
more general distribution families. One of the most general 
assumptions in the literature posits that $\cal D$ is 
the class of bounded covariance distributions 
(specifically, that for each $P \in \cal{D}$ its 
covariance $\vec\Sigma_P$ is spectrally bounded by the identity). For this class of 
distributions, one can efficiently and robustly approximate the mean within 
$\ell_2$-error of $O(\eps^{1/2})$, which is information-theoretically optimal up 
to a constant factor~\cite{DiaKKLMS17, SteCV18}. 

While the latter error guarantee is best possible for worst-case bounded 
covariance distributions, 
it would be preferable to have an error dependence closer to 
$\tilde{O}(\eps)$---the bound achievable in the Gaussian case. This leads to the following natural question:
\begin{center}
{\em Can we obtain computationally efficient robust mean estimators\\ 
for broad classes of distributions with error $O(\eps^{1/2+c})$, 
for some constant $c>0$?}
\end{center}
\looseness=-1Analogous algorithmic questions can be posed for a number of high-dimensional estimation tasks, including robust linear regression and covariance estimation, 
clustering of mixture models, and list-decodable learning. 
See~\Cref{table:summary,sec:apps-detailed} for concrete algorithmic applications 
of the main result obtained in this work to a range of high-dimensional estimation tasks. 

\looseness=-1
\paragraph{Robust Estimation \& {\em Subgaussian} Distributions}
A natural nonparametric family broadly generalizing 
the Gaussian distribution---for which a qualitatively similar error guarantee 
is {\em information-theoretically} possible\footnote{Specifically, for the class of subgaussian distributions, the mean can be robustly estimated within error $O(\eps\sqrt{\log(1/\eps)})$, which is information-theoretically optimal within constant factors (see, e.g.,~Chapter~1 of \cite{DiaKan22-book}). On the other hand, known computational lower bounds~\cite{DiaKKPP22-colt} suggest that obtaining error $O(\eps^{1-c})$ requires complexity $d^{\Omega(1/c)}$. See \Cref{rem:tradeoffs-robust-mean} for a more detailed explanation.}---is the class of {\em subgaussian} distributions.  
A distribution is subgaussian if all its linear projections 
have tail probabilities decaying at least as fast as Gaussian tails. 
For our purposes, an equivalent definition in terms of moments rather than tail bounds is appropriate:
\begin{definition}[Subgaussian distributions; see, e.g.,~{\cite{Vershynin18}}] \label{def:sg}
    For $s>0$, a distribution $P$ over $\R^d$ with mean $\mu$ is 
    $s$-subgaussian if for all $v \in \R^d$,
    \begin{align*}
       \forall m\geq 1:  \left( \E_{X \sim P}[|\langle v, X - \mu \rangle|^m] \right)^{1/m}
        \leq C s \sqrt{m}\|v\|_2\,,   
    \end{align*}    
    where the universal constant $C>0$ is chosen so that $\cN(0,\bI_d)$ is $1$-subgaussian.
For convenience, we will henceforth call $O(1)$-subgaussian distributions simply ``subgaussian distributions''.
\end{definition}

Subgaussian distributions~\cite{Kahane1960} are widely studied in  
computer science and statistics; see, 
e.g.,~\cite{LedTal91,BouLM13,Vershynin18,Wainwright19, Talagrand21}.  
Subgaussianity captures a core structural property 
of ``reasonable'' data---decaying tails of linear projections---that 
is particularly useful in the context of statistical estimation, 
while amounting to a significantly weaker assumption than Gaussianity.
As alluded to above, for any subgaussian distribution,
it is information-theoretically possible to robustly 
estimate the mean within error $\tilde{O}(\eps)$. 
More generally, for a range of robust estimation tasks, 
the information-theoretically optimal error 
depends mainly on the probabilities of tail events of linear projections.
For such tasks, optimal estimation rates for Gaussians and subgaussian distributions 
are usually very close to each other (see \Cref{sec:apps-detailed} for concrete examples).

\looseness=-1The main challenge is determining 
whether there exist \emph{computationally efficient} %
algorithms that can achieve the improved error rates 
which are information-theoretically possible under the subgaussian assumption.
Concretely, prior to our work, the best known error guarantee 
for robust mean estimation of general subgaussian distributions was $O(\eps^{1/2})$, 
i.e., the same as for the much broader class of 
bounded covariance distributions---even though 
error $\tilde{O}(\eps)$ is possible in exponential time. 
As stated above, for the very special case of the Gaussian distribution, 
a polynomial time algorithm with $\tilde{O}(\eps)$ error 
was previously known~\cite{DiaKKLMS16-focs}.

\paragraph{\emph{Certifiable} Subgaussianity and Sum-of-Squares Proofs} 
A natural way to move forward algorithmically is to consider 
a ``computationally friendly'' notion of a subgaussian distribution.
By \Cref{def:sg}, the moments of every linear projection of a subgaussian distribution 
are bounded. We say that a distribution is {\em certifiably subgaussian} 
if this moment-boundedness in all directions has a short certificate 
in the form of a \emph{sum of squares proof}---in particular, if
the multivariate polynomials $p_m(v) = (C \sqrt{m}\|v\|_2)^m - \E[\iprod{X - \mu,v}^m]$ 
are sums of squares for all even $m$.

Certifiably subgaussian distributions form a subclass %
of subgaussian distributions.
Importantly, the improved error rates attainable 
for subgaussians can typically be obtained by 
{\em efficient algorithms} under the stronger assumption 
of certifiable subgaussianity, via the sum-of-squares method.

To set up our main result, we need a formal definition of certifiable subgaussianity.
To start, we say that a multivariate distribution $P$ over $\R^{d}$ with mean $\mu$ has 
$(B_m,m)$-bounded moments, or is $(B_m,m)$-bounded, 
if for all vectors $v$ it satisfies 
$\E_{X \sim P}\left[\left\langle X - \mu, v\right\rangle^m\right] \leq B_m^m\|v\|_2^m$.
Our first definition requires that this inequality has an SoS proof.
\begin{definition}[Certifiably Bounded Distributions~\cite{KotSte17, HopLi18}]
\label{def:cert-bdd}
Let $m \in \N$ be even and $B_m > 0$. 
We say that a distribution $P$ over $\R^{d}$ with mean $\mu$ is 
$(B_m,m)$-certifiably bounded if the polynomial 
$p(v) := B_m^m\|v\|_2^m - \E_{X \sim P}\left[\left\langle X - \mu, v\right\rangle^m\right]$ is a sum of squares polynomial in $v$.%
\footnote{We remark that the definitions in \cite{KotSte17,HopLi18} also allowed 
unit norm constraints on the variable $v$. That is, a distribution $P$ was termed  
certifiably bounded if the polynomial 
$p(v) := B_m^m\|v\|_2^m - \E_{X \sim P}\left[\left\langle X - \mu, v\right\rangle^m\right]$ 
satisfies the identity $p(v) = h(v) (\|v\|_2^2-1) +\sum_i q_i^2(v)$, for 
some polynomials $h(\cdot)$, $q_1(\cdot),\dots,q_m(\cdot)$ that 
are each of degree $O(m)$.
Equivalently, in the notation defined later, 
$\left\{\sum_{i=1}^d v_i^2 = 1\right\} \sststile{O(m)}{v}  p(v) \geq 0$. 
In contrast, \Cref{def:cert-bdd} is stricter and requires that 
$ \sststile{}{v}  p(v) \geq 0$, which is equivalent to $\sststile{m}{v}  p(v) \geq 0$.

See \Cref{rem:unit-norm} for further details. }
\end{definition}

{A distribution is called {\em certifiably subgaussian} if it 
satisfies \Cref{def:cert-bdd} for 
the appropriate value of the parameter $B_m$ 
corresponding to the moment bounds of subgaussian distributions.}

\begin{definition}[Certifiably Subgaussian Distributions~\cite{KotSte17, HopLi18}] 
\label{def:cert-sg}
For $s>0$, we say that a distribution $P$ 
is $s$-certifiably subgaussian if it is 
$(Cs\sqrt{m}, 2m)$-certifiably bounded for all even $m \in \N$, 
where $C$ is the universal constant from \Cref{def:sg}.
If $s= O(1)$, we usually omit the parameter $s$ and 
call such distributions ``certifiably subgaussian''.
\end{definition}

The notion of certifiability naturally arises while designing 
efficient algorithms that access higher-order moments of the 
samples. 
For example, consider the problem of computing the maximum of
$\|\bA v\|_{\ell_m}^m := \sum_{i=1}^n (a_i^\top v)^m$ over unit vectors $v\in\R^d$, 
known as the ``$2$-to-$m$-norm,'' of $\bA$, or the tensor injective norm 
of the $m$-tensor $\sum_{i=1}^n a_i^{\otimes m}$. 
It turns out that this problem arises frequently 
when constructing robust estimation algorithms.

While computing the injective tensor norm is believed to be computationally hard 
(even to approximate) in the worst-case (over $a_i$'s) 
for $m \geq 4$~\cite{BarBHKSZ12},
it is polynomial-time approximable 
in some average-case settings:
specifically, if each row $a_i$ is sampled i.i.d.\ from an isotropic certifiably bounded distribution, 
then the sum-of-squares method succeeds with high probability.

A recent line of work, starting with~\cite{KotSte17,HopLi18}, showed how to 
leverage the Sums-of-Squares ``proofs to algorithms'' approach in order to design
significantly more accurate polynomial-time robust mean estimators 
for distributions that satisfy~\Cref{def:cert-bdd}. Specifically, 
these works gave SDP-based algorithms to achieve error of $O{_k}(\eps^{1-1/k})$, 
for any $k \geq 2$, with sample and computational complexities $(d/\eps)^{\poly(k)}$. 
In particular, this implies that one can achieve error $O(\eps^{1/2+c})$, 
where $c \in (0, 1/2)$ is a universal constant, in $\poly_{{1/2 - c}}(d/\eps)$ time. 
Moreover, as explained in \Cref{rem:tradeoffs-robust-mean} below, 
there is evidence that this tradeoff between computational complexity and error guarantee 
is essentially best possible, even for simple explicit classes of subgaussian distributions. 

\begin{remark}[Information-Computation Tradeoffs for Robust Mean Estimation] \label{rem:tradeoffs-robust-mean}
While the information-theoretically optimal error rate for robustly estimating 
the mean of a subgaussian distribution is $\tilde{\Theta}(\eps)$ 
and can be achieved with ${O}(d/\eps)$ many samples, 
\cite{DiaKKPP22-colt} gave SQ lower bounds 
providing evidence that any computationally efficient algorithm for this task 
with error $O(\eps^{1-1/t})$, for $t \geq 2$, requires sample complexity $d^{\Omega(t)}$. 
This lower bound also applies to low-degree polynomial tests (via~\cite{BreBHLS21}) 
and to SoS algorithms~\cite{DiaKPP24-sos}.
\end{remark}

Beyond robust mean estimation, the approach 
of~\cite{KotSte17,HopLi18} has led to qualitatively improved 
efficient algorithms for a range of estimation tasks involving certifiably 
subgaussian distributions, 
including robust linear regression~\cite{KliKM18,BakPra21}, 
robust covariance estimation~\cite{KotSte17, KotSS18}, 
robust sparse estimation~\cite{DiaKKPP22-colt}, 
clustering mixture models (even without outliers) under near-optimal 
pairwise separation~\cite{HopLi18, KotSteinhardt17}, 
list-decodable mean estimation~\cite{KotSteinhardt17, DiaKKPP22-neurips}, 
learning under privacy constraints~\cite{KotMV21}, and 
testable learning~\cite{GolKSV23}. Establishing the certifiability 
of (a larger subclass of) subgaussian distributions directly expands 
the reach of known algorithms for these estimation tasks.

\paragraph{This Work: Characterization of Certifiable Subgaussianity?}

The aforementioned progress notwithstanding, \Cref{def:cert-sg} is somewhat unsatisfying for the 
following reason.
On the one hand, certifiably subgaussian distributions are by definition
a subset of subgaussian distributions.
On the other hand, it is a priori unclear which distributions satisfy \Cref{def:cert-sg}. 
Prior work observed that 
product and rotationally invariant subgaussian distributions (and linear transformations thereof) are certifiably subgaussian.
Moreover, 
\cite{KotSteinhardt17} showed that Poincare distributions are certifiably bounded.
However, their proof approach inherently fails for the class of subgaussians.\footnote{In particular, 
their proof relies on the so-called $(m,\sigma)$-moment property from \cite[Question 1.7]{KotSteinhardt17}, 
which is satisfied by Poincare distributions, with the resulting certifiable moment $B_m$ 
being $\Theta_m(\sigma)$. However, the isotropic subgaussian distribution $P = 0.5 \cN(0, 0.5\bI_d) + 0.5 \cN(0, 1.5\bI_d)$, 
which is not log-concave, does not satisfy this property---even for $m=2$---for any $\sigma =o(\sqrt{d})$.  
}

In summary, the aforementioned (sufficient) conditions 
for certifiable subgaussianity are fairly strong---numerous subgaussian distributions do not satisfy them.
That is, there is a significant gap in our understanding of the class 
of certifiably subgaussian distributions and how it compares 
with the class of all subgaussian distributions. 
This leads to the central question explored (and resolved) in this paper:
\begin{question} \label{ques:main}
{\em Can we characterize the class of certifiably subgaussian distributions ? } 
\end{question}
We believe that \Cref{ques:main} merits investigation in its own 
right and because of the range of potential implications 
on the computational complexity of several high-dimensional 
estimation tasks in robust statistics, clustering mixture models, 
and beyond.

\looseness=-1\Cref{ques:main} is not new. 
A very special case of this question, corresponding to certifiability of the fourth moments, 
was posed as one of the main open questions (Question 1.6) in~\cite{KotSteinhardt17}. Quoting \cite{KotSteinhardt17}:
\begin{quote}
``Is it the case that sum-of-squares certifies moment tensors 
for all sub-Gaussian distributions? Conversely, are there sub-Gaussian distributions that sum-of-squares
cannot certify? Even for 4th moments, this is unknown [...]''
\end{quote}
Moreover, \Cref{ques:main} has been explicitly highlighted in tutorial talks 
by experts on the topic; see, e.g.,~\cite{Li18-youtube,Hopkins18-ttic,Kothari20-youtube}. 
Interestingly, a common hypothesis within that community was that 
the class of certifiably subgaussians is a {\em proper} subset 
of subgaussians.\footnote{We note that the KLS conjecture is known to imply 
 that the class of logconcave distributions is certifiably bounded~\cite{KotSteinhardt17}.
 Note, however, that there exist subgaussian distributions that are not logconcave; 
 so even assuming the KLS conjecture does not imply that every subgaussian distribution is certifiably bounded.}
While this hypothesis turns out to be false (as shown in this paper), 
until our work it was plausible for a number of reasons; 
see discussion paragraph in the following subsection.

\subsection{Main Result: {\em All} Subgaussian Distributions are Certifiably So} \label{ssec:main-result}

As our main contribution, we answer \Cref{ques:main} 
by establishing the following theorem. 

\begin{restatable}[Certifiability of Subgaussian Distributions]{theorem}{ThmCertSub}
\label{thm:certifiability}
  There exists a universal constant $C > 0$ such that the following holds.
  Let $X \sim P$ be an $s$-subgaussian random vector over $\R^d$.  
  Then $P$ is $(Cs\sqrt{m}, m)$-certifiably bounded for any even $m$.
  In particular, $P$ is $Cs$-certifiably subgaussian.
\end{restatable}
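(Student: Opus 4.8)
The plan is to dualize via sum-of-squares duality and then estimate the resulting quantity using Talagrand's generic chaining.

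\paragraph{Step 1: SoS duality.} Fix an even $m$ and assume without loss of generality that $P$ is centered. The cone of degree-$m$ sum-of-squares polynomials in $\R^d$ is closed and convex, so $p(v) := (Cs\sqrt m)^m\,\|v\|_2^m - \E_{X\sim P}\iprod{X,v}^m$ lies in it iff every degree-$m$ pseudo-expectation $\pE_v$ satisfies $\pE_v[p]\ge 0$. Since $p$ is homogeneous of degree $m$ it suffices to treat pseudo-expectations normalized by $\pE_v[\|v\|_2^m]=1$ (the degenerate case $\pE_v[\|v\|_2^m]=0$ is routine), and, interchanging $\E_X$ with the linear functional $\pE_v$, the theorem reduces to the estimate
\[
\pE_v\big[\mu_m(v)\big]\ \le\ (Cs\sqrt m)^m,\qquad \mu_m(v):=\E_{X\sim P}\iprod{X,v}^m,\quad \pE_v[\|v\|_2^m]=1.
\]
Two features of $\mu_m$ drive the argument. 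First, it is a genuine (not pseudo) degree-$m$ polynomial, and $N(v):=\mu_m(v)^{1/m}=\big(\E_X|\iprod{X,v}|^m\big)^{1/m}$ is an honest norm on $\R^d$; $s$-subgaussianity of $P$ is precisely the statement $N(v)\le Cs\sqrt m\,\|v\|_2$, i.e.\ $\mu_m(v)=\sup_{u\in K}\iprod{u,v}^m$ for a symmetric convex body $K\subseteq Cs\sqrt m\cdot B_2^d$. Second, $\mu_m$ is the moment tensor $\E_X[X^{\otimes m}]$ contracted with $v^{\otimes m}$, so the displayed estimate says exactly that the degree-$m$ SoS relaxation of the $2$-to-$m$ (injective) norm of this moment tensor exceeds the true value by at most a $C^m$ factor; the crude ``flattening'' relaxation is off by a factor polynomial in $d$, and eliminating that polynomial loss is the whole point.

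\paragraph{Step 2: Generic chaining and majorizing measures.} To control $\pE_v[\mu_m(v)]$ with no $d$-dependence, I would run a generic-chaining decomposition at dyadic scales: build a Talagrand admissible sequence of partitions of the relevant metric space (the body $K$, equivalently the random vector $X$ after a truncation that peels off its subgaussian tail), telescope $\iprod{X,v}$ (respectively $\iprod{u,v}$ for $u\in K$) along the chain, and then push the $m$-th power inside via the degree-$m$ SoS form of the weighted power-mean inequality, $\big(\sum_k a_k\big)^m \preceq \sum_k \lambda_k^{-(m-1)}a_k^m$ for $m$ even and any positive weights with $\sum_k\lambda_k=1$. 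Applying $\E_X$ (which preserves the SoS cone) and then $\pE_v$ to each chaining link — using, among other things, the easy pseudo-expectation estimates $\pE_v[\iprod{w,v}^{m'}]\le\|w\|_2^{m'}\,\pE_v[\|v\|_2^{m'}]\le 1$ for even $m'\le m$ — one is left with a chaining series whose total must come out as $O((Cs\sqrt m)^m)$. This is exactly where Talagrand's majorizing-measures theorem is indispensable: it produces an admissible sequence whose chaining cost is comparable, up to universal constants, to the Gaussian quantity $\E_{g\sim\cN(0,\bI_d)}\sup_{u\in K}\iprod{u,g}$ and to its higher-moment analogues governed by $\gamma_2(K,\ell_2)$, rather than to Dudley's entropy integral, which would leak an unnecessary $\polylog(d)$. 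Subgaussianity enters both through the tail truncation and through the fact that the chaining increments must be measured in the subgaussian metric of $P$; once this is arranged, the Gaussian side collapses to a dimension-free quantity of order $\sqrt m$, which pins the final bound at $(Cs\sqrt m)^m$, and hence $\mu_m(v)\preceq_{\mathrm{SoS}}(Cs\sqrt m)^m\|v\|_2^m$. Since $m$ is arbitrary, $P$ is $Cs$-certifiably subgaussian.

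\paragraph{The main obstacle.} The hard part is Step 2, and specifically reconciling generic chaining with the degree-$m$ SoS proof system. Generic chaining beats the crude (Dudley / tensor-flattening) estimates precisely by a concentration/union-bound along the chaining tree, which in the classical setting is powered by the subgaussian tails of a \emph{genuine} stochastic process; a degree-$m$ pseudo-expectation, by contrast, only ``sees'' moments of order at most $m$, and reproducing the needed union bound within that budget (e.g.\ by an $L^p$-to-maximum trick) is delicate. Making the bookkeeping go through — so that every chaining link is certified by a legitimate low-degree SoS identity, so that the series telescopes to the Gaussian scale $O((Cs\sqrt m)^m)$ rather than to a $d$- or $\polylog(d)$-dependent quantity, and so that the unbounded tail of $P$ is absorbed without spoiling either of these — is the technical heart of the proof, and is where Talagrand's theorem does the essential work.
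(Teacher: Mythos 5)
Your Step 1 matches the paper, and you correctly identify Talagrand's majorizing-measures theorem as the essential external input. But your Step 2 proposes a fundamentally different—and, as far as I can see, unresolved—route, and the obstacle you flag at the end is exactly where it breaks. You want to run a chaining decomposition \emph{inside} the degree-$m$ SoS proof system: telescope $\iprod{X,v}$ along an admissible sequence, push the $m$-th power through each chaining link via an SoS power-mean inequality, and sum. The difficulty, which you name but do not resolve, is that generic chaining's gain over Dudley comes from a tail/union bound along the chaining tree, and a fixed degree-$m$ pseudoexpectation $\pE_v$ gives you no mechanism to reproduce that union bound. Worse, once you take $\E_X$ there is no randomness left to chain over: $\mu_m(v)=\E_X\iprod{X,v}^m$ is a single deterministic polynomial, and $\pE_v$ is a single fixed functional, so the classical ``supremum of a random process'' structure that chaining needs is simply absent at this stage of your reduction.

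The paper's proof avoids running chaining inside SoS entirely, by inserting two moves you are missing. First, it replaces the population moment $\E_Y\iprod{v,Y}^m$ by an empirical average $\tfrac1n\sum_i\iprod{v,Y_i}^m$ and uses Jensen to pull $\E_{Y_{1:n}}$ \emph{outside} the supremum over pseudoexpectations; this manufactures a genuine stochastic process in the samples, indexed by $\pE$. Second, and this is the key technical step, it linearizes that nonlinear empirical process: the pseudoexpectation H\"older inequality (\Cref{fact:peholder}, \Cref{fact:lpnorm}) shows
\[
\Paren{\pE\Brac{\tfrac1n\littlesum_i\iprod{v,Y_i}^m}}^{1/m}
= \sup_{\pE' \text{ extends } \pE}\frac{\pE'\Brac{\tfrac1n\littlesum_i\iprod{v,Y_i}w_i^{m-1}}}{\Paren{\pE'\Brac{\tfrac1n\littlesum_i w_i^m}}^{1-1/m}},
\]
so the quantity to be bounded becomes the $m$-th power of a \emph{linear} process in $(Y_1,\dots,Y_n)$, to which Talagrand's comparison inequality (and its higher-moment form, \Cref{fact:canonical-subgaussian-moment}) applies as a black box. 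That comparison swaps subgaussian samples for Gaussians, and certifiability of the empirical Gaussian moment tensor closes the argument. So Talagrand's theorem is used once, externally, to compare two stochastic processes indexed by pseudoexpectations; no admissible sequence or chaining bookkeeping is ever constructed within the SoS proof system. Your proposal would need to invent an SoS-internal substitute for the majorizing-measures union bound, and you have not shown how to do that; the paper's linearization is precisely the device that makes this unnecessary.
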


Before we summarize the algorithmic applications of our main result, 
some remarks are in order. Interestingly, our proof of \Cref{thm:certifiability} 
establishes {\em the existence} of sum-of-squares 
certificates for arbitrary subgaussian distributions 
{\em without explicitly constructing the 
corresponding polynomials}. 
This is one of very few instances in the  SoS algorithmic statistics literature, 
where the underlying proofs are {\em implicit}.
(Another example is the recent work \cite{bakshi2024efficient}.)

At a high level, our approach is as follows: 
we use duality to reformulate our goal as that of analyzing the supremum 
of a certain \emph{non-linear} empirical process.
This reformulation allows us to leverage a deep result of Talagrand in probability theory, 
stating that the expected behavior of any \emph{linear} subgaussian process 
can be controlled by the analogous Gaussian behavior~\cite{Talagrand21}. 
Our main technical insight is to reduce the setting of the aforementioned 
nonlinear process to a linear process, enabling us to use Talagrand's result.  
We believe that our proof approach may yield analogous SoS-proofs 
for broader classes of distributions (see \Cref{sec:concl} for a discussion). 

\looseness=-1
\paragraph{Discussion}
\Cref{thm:certifiability} is surprising for a few reasons.
First, an efficient reduction of~\cite{BarBHKSZ12}, 
adapted to the present setting by Hopkins and Li~\cite{HopLi19}, 
implies the following: under the Small Set Expansion Hypothesis (SSEH)~\cite{RagSte10}, 
for any universal constants $m \in \N$ and $s>0$,  
there exists a multivariate distribution $P$ that is 
$(O(\sqrt{i}), 2i)$-bounded for all $i \in [m]$, while not 
being $(s, 4)$-certifiably bounded. That is, $P$ has 
``subgaussian moments'' up to order $2m$, while not even 
having certifiably bounded moments up to order $4$. This 
hardness result has led to the speculation by the 
community (see, e.g., \cite{Li18-youtube,Hopkins18-ttic})
that there exist subgaussian distributions that are not certifiably subgaussian. 
It is important to note that there is no contradiction between 
the latter result and \Cref{thm:certifiability}, 
because subgaussianity posits that moments of {\em all} orders are bounded.\footnote{It is not difficult 
to show that it suffices to assume bounds on moments of order up to $d \polylog(d)$; 
see the discussion in \Cref{sec:concl}.}

Additionally, some of the algorithmic implications of \Cref{thm:certifiability} 
on efficiently learning high-dimensional distributions 
stand in contrast with conventional wisdom. Concretely, \Cref{thm:certifiability} 
implies that the algorithmic task of clustering mixtures of {\em arbitrary subgaussian} 
distributions---under mean separation assumptions---is no harder computationally 
than the task of clustering separated Gaussian mixtures (see \Cref{ssec:mixtures}). 
In contrast, it has been a folklore conjecture that the subgaussian case 
ought to be harder at least in some regimes; 
see, e.g., the discussion in ~\cite{ChaSV17}.\footnote{For example, as already mentioned, 
robust mean estimation of subgaussian distributions is computationally harder 
than that of the Gaussian distribution (cf.\ \Cref{rem:tradeoffs-robust-mean}).}

\subsection{Algorithmic Implications of \Cref{thm:certifiability}}
\label{sec:implications}
Here we briefly state a number of algorithmic implications 
of \Cref{thm:certifiability} (see \Cref{table:summary}). 
We defer the full statements of these results to \Cref{sec:algo-implications}.

To formally state our results, we need the notion of certifiably 
\emph{hypercontractive} subgaussian distributions, defined below.   
\begin{definition}[Hypercontractive Subgaussian Distributions]
    \label{def:hypercontractive}
    We say that a distribution $Q$ over $\R^d$ with mean $\mu$ is $s$-hypercontractive-subgaussian if 
    for all even $m\in \N$ and vectors $v \in\R^d$ it holds 
    $\left(\E[\langle v, X- \mu\rangle^m]\right)^{1/m} \leq C s \sqrt{m} \left(\E[\langle v, X- \mu\rangle^2]\right)^{1/2}$.
\end{definition}
That is, the $m$-th moment in a direction $v$ scales with the standard deviation in direction $v$.
In contrast, \cref{def:sg} posits that the $m$-th moment obeys a uniform bound over all directions $v$. 
The distribution families satisfying \Cref{def:sg,def:hypercontractive} 
are incomparable.\footnote{For example, $\cN(0,\vec \Sigma)$ satisfies \Cref{def:sg} 
only when $\vec\Sigma \preceq O(1)\vec I$, whereas it satisfies \Cref{def:hypercontractive} for all $\vec\Sigma$.
On the other hand, any univariate Bernoulli distribution satisfies \Cref{def:sg} (with constant $s$), 
while it does not satisfy \Cref{def:hypercontractive} (with constant $s$) if the bias goes to $0$.}
Moreover, these two different families of distributions lead to distinct 
behaviors for the tasks of (robust) mean estimation under Euclidean norm and Mahalanobis norm.

We also define the following certifiable variant of \Cref{def:hypercontractive}:
\begin{definition}[Certifiably Hypercontractive Subgaussian Distributions]
    \label{def:cert-hypercontractive}
    We say that a distribution $Q$ over $\R^d$ with mean $\mu$
    and covariance $\vec \Sigma$ is $s$-certifiably-hypercontractive-subgaussian 
    if for all even $m \in \N$, the polynomial $q_m(v):= \left(C s \sqrt{m} \, \left(\E_{X \sim Q}[\langle v, X - \mu \rangle^2] \right) \right)^{m/2} -  \E_{X \sim Q}[\langle v,X - \mu\rangle^m] $ is a sum of square polynomials in $v$. 
\end{definition}
Hypercontractive subgaussianity and subgaussianity are related 
via a whitening operation using the matrix $\vec\Sigma^{\dagger/2}$;  
here, for a PSD matrix $A$, $A^{\dagger}$ denotes the pseudoinverse of $A$, 
$A^{\dagger/2}$ is the PSD square root of $A^{\dagger}$, 
and $\vec\Sigma$ is the covariance matrix of an underlying probability distribution.

\begin{fact}
A distribution $Q$ with covariance $\vec\Sigma$ is $s$-(certifiably)-hypercontractive-subgaussian if and only the distribution of $\vec \Sigma^{\dagger/2} X$ is $s$-(certifiably)-subgaussian.
\end{fact}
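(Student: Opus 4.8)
The plan is to view the claimed equivalence as nothing more than the linear change of variables $v \mapsto \vec\Sigma^{1/2}v$ (and its ``inverse'' $u \mapsto \vec\Sigma^{\dagger/2}u$), combined with the elementary fact that a sum-of-squares decomposition is preserved under substituting linear forms for the variables: if $p = \sum_i q_i^2$ then $p(Lv) = \sum_i q_i(Lv)^2$. Fix $X \sim Q$ with mean $\mu$ and covariance $\vec\Sigma$, put $Y := \vec\Sigma^{\dagger/2}X$ with mean $\nu := \vec\Sigma^{\dagger/2}\mu$, and let $\Pi := \vec\Sigma^{1/2}\vec\Sigma^{\dagger/2} = \vec\Sigma^{\dagger/2}\vec\Sigma^{1/2}$ be the orthogonal projection onto $V := \mathrm{range}(\vec\Sigma)$; one checks that $Y$ has covariance $\vec\Sigma^{\dagger/2}\vec\Sigma\vec\Sigma^{\dagger/2} = \Pi$.

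First I would record two elementary identities. For any $w \perp V$ we have $\E[\langle w, X-\mu\rangle^2] = w^\top\vec\Sigma w = 0$, hence $\langle w, X-\mu\rangle = 0$ almost surely; running $w$ over a basis of $V^\perp$ gives $X - \mu = \Pi(X-\mu)$ a.s., and therefore $Y - \nu = \vec\Sigma^{\dagger/2}(X-\mu) \in V$ a.s.\ as well. Using $\vec\Sigma^{1/2}\vec\Sigma^{\dagger/2} = \Pi$ and symmetry of $\Pi$, this yields the pointwise (a.s.)\ identity $\langle \vec\Sigma^{1/2}v,\, Y-\nu\rangle = \langle \Pi v,\, X-\mu\rangle = \langle v,\, X-\mu\rangle$, whence $\E[\langle v, X-\mu\rangle^m] = \E[\langle \vec\Sigma^{1/2}v,\, Y-\nu\rangle^m]$ for all $v$ and $m$; combined with $v^\top\vec\Sigma v = \|\vec\Sigma^{1/2}v\|_2^2$, these two identities express the hypercontractive moment bound for $Q$ at direction $v$ as exactly the subgaussian moment bound for $Y$ at direction $\vec\Sigma^{1/2}v$. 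Since, moreover, the subgaussian bound for $Y$ at an arbitrary direction $u$ reduces to the one at $\Pi u \in V$ — because $Y - \nu \perp (\bI-\Pi)u$ a.s.\ and $\|\Pi u\|_2 \le \|u\|_2$ — this already settles the non-certifiable half of the statement.

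For the certifiable half, write $p_m(u) := (Cs\sqrt m)^m\|u\|_2^m - \E[\langle u, Y-\nu\rangle^m]$, so that ``$Y$ is $s$-certifiably subgaussian'' means $p_m$ is SoS for every even $m$, and let $q_m$ be the polynomial of \Cref{def:cert-hypercontractive}. If $p_m$ is SoS then so is $p_m(\vec\Sigma^{1/2}v)$, and by the two identities above $p_m(\vec\Sigma^{1/2}v) = (Cs\sqrt m)^m(v^\top\vec\Sigma v)^{m/2} - \E[\langle v, X-\mu\rangle^m] = q_m(v)$, so $Q$ is $s$-certifiably-hypercontractive-subgaussian. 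Conversely, if $q_m$ is SoS then $q_m(\vec\Sigma^{\dagger/2}u)$ is SoS, and using $\vec\Sigma^{1/2}\vec\Sigma^{\dagger/2} = \Pi$ together with $Y - \nu \in V$ a.s.\ this polynomial equals $(Cs\sqrt m)^m\|\Pi u\|_2^m - \E[\langle u, Y-\nu\rangle^m]$, which differs from $p_m(u)$ only by $(Cs\sqrt m)^m\big(\|u\|_2^m - \|\Pi u\|_2^m\big)$. The only genuinely nonroutine point — and the sole place the possible rank-deficiency of $\vec\Sigma$ enters — is verifying that this correction polynomial is itself SoS; for that I would use the algebraic identity $\|u\|_2^m - \|\Pi u\|_2^m = \big(u^\top(\bI-\Pi)u\big)\cdot\sum_{j=0}^{m/2-1}\big(\|u\|_2^2\big)^j\big(\|\Pi u\|_2^2\big)^{m/2-1-j}$ and note that $u^\top(\bI-\Pi)u$, $\|u\|_2^2$, and $\|\Pi u\|_2^2$ are each SoS, and that products and sums of SoS polynomials are SoS. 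Adding this certificate to the one for $q_m(\vec\Sigma^{\dagger/2}u)$ exhibits $p_m$ as SoS and completes the argument; when $\vec\Sigma \succ 0$ this correction term vanishes and the proof collapses to the remark that $\vec\Sigma^{1/2}$ and $\vec\Sigma^{-1/2}$ are mutually inverse linear substitutions.
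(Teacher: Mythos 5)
Your argument is correct; the paper states this as a \emph{Fact} without supplying a proof, so there is no proof in the paper to compare against. The whitening substitution $v \mapsto \vec\Sigma^{1/2}v$ together with the observation that an SoS decomposition $p = \sum_i q_i^2$ is preserved under a linear change of variables is the natural route. I checked the two key identities — that $X - \mu$ is supported on $\mathrm{range}(\vec\Sigma)$ so that $\langle \vec\Sigma^{1/2}v, Y-\nu\rangle = \langle v, X-\mu\rangle$ almost surely, and that $q_m(\vec\Sigma^{\dagger/2}u)$ differs from $p_m(u)$ by $(Cs\sqrt{m})^m\bigl(\|u\|_2^m - \|\Pi u\|_2^m\bigr)$ — and both are right. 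You have also correctly isolated and dispatched the only genuinely non-routine issue, namely the rank-deficient case where $\vec\Sigma^{1/2}$ and $\vec\Sigma^{\dagger/2}$ are not mutually inverse: your difference-of-$k$-th-powers factorization with $a = \|u\|_2^2$, $b = \|\Pi u\|_2^2$, and SoS factors $u^\top(\bI - \Pi)u$, $a$, $b$ does show that the correction polynomial is SoS, and the degrees of all SoS witnesses remain $\le m$ as required.
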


As a consequence of \Cref{thm:certifiability}, we obtain the following implication: 
\begin{theorem} \label{thm:certifiability-hyper}
    Let $P$ be an $s$-hypercontractive-subgaussian distribution over $\R^d$.
    Then $P$ is $Cs$-certifiably-hypercontractive-subgaussian over $\R^d$. 
\end{theorem}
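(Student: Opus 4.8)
The plan is to derive \Cref{thm:certifiability-hyper} from \Cref{thm:certifiability} via the whitening reduction recorded in the Fact immediately preceding the statement. Let $P$ be $s$-hypercontractive-subgaussian with mean $\mu$ and covariance $\vec\Sigma$, let $X \sim P$, and set $Y := \vec\Sigma^{\dagger/2}(X-\mu)$ (the distribution of $\vec\Sigma^{\dagger/2}X$ from the Fact, up to a harmless recentering that affects none of the relevant definitions). The non-certifiable forward direction of the Fact says that $Y$ is $s$-subgaussian in the sense of \Cref{def:sg}; \Cref{thm:certifiability} then upgrades this to ``$Y$ is $Cs$-certifiably subgaussian'', i.e.\ $(Cs\sqrt m\,\|w\|_2)^m - \E[\iprod{w,Y}^m]$ is a sum of squares in $w$ for every even $m$. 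The certifiable backward direction of the Fact converts this into the sought certificate for $P$, which is exactly the statement that $P$ is $Cs$-certifiably-hypercontractive-subgaussian. So the whole argument is the three-step chain: Fact $\Rightarrow$ \Cref{thm:certifiability} $\Rightarrow$ Fact.

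The only point worth spelling out is why the backward direction of the Fact holds, i.e.\ why an SoS certificate for $Y$ pulls back to one for $P$. This is the standard fact that precomposing a sum of squares with a linear map again gives a sum of squares: if $q(w) = \sum_i r_i(w)^2$ then $q(\vec\Sigma^{1/2}v) = \sum_i r_i(\vec\Sigma^{1/2}v)^2$ is a sum of squares in $v$. Taking $w = \vec\Sigma^{1/2}v$ in the certificate for $Y$, one has $\|w\|_2^2 = v^\top\vec\Sigma v = \E_{X\sim P}[\iprod{v,X-\mu}^2]$, and $\iprod{w,Y} = \iprod{v,\vec\Sigma^{1/2}\vec\Sigma^{\dagger/2}(X-\mu)} = \iprod{v,X-\mu}$ almost surely, using that $\vec\Sigma^{1/2}\vec\Sigma^{\dagger/2}$ is the orthogonal projection onto $\mathrm{range}(\vec\Sigma)$ and that $X-\mu$ is supported there. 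Since $m$ is even, $\|w\|_2^m = (v^\top\vec\Sigma v)^{m/2}$ is a genuine polynomial in $v$, so the pulled-back certificate is precisely the polynomial $q_m$ from \Cref{def:cert-hypercontractive} (after the harmless rewriting $(Cs\sqrt m)^m\,(v^\top\vec\Sigma v)^{m/2} = \big((Cs)^2 m\cdot \E_{X\sim P}[\iprod{v,X-\mu}^2]\big)^{m/2}$, which only renames the universal constant).

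I do not expect a genuine obstacle: the theorem is essentially a change-of-variables corollary of the main result, and all the real work is inside \Cref{thm:certifiability}. The one mildly delicate bookkeeping issue is the behavior of the pseudoinverse when $\vec\Sigma$ is rank-deficient: one must check that $\vec\Sigma^{1/2}\vec\Sigma^{\dagger/2}$ and $\vec\Sigma^{\dagger/2}\vec\Sigma^{1/2}$ both coincide with the orthogonal projection onto $\mathrm{range}(\vec\Sigma)$ and that $X-\mu \in \mathrm{range}(\vec\Sigma)$ with probability one, so that the displayed identities (and the Fact itself, in this generality) hold verbatim. The constant in the conclusion is the universal constant supplied by \Cref{thm:certifiability}, up to a fixed rescaling absorbed into the normalization of \Cref{def:cert-hypercontractive}.
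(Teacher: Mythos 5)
Your proof is correct and follows exactly the route the paper intends: the paper does not spell out an argument for \Cref{thm:certifiability-hyper}, merely introducing the whitening \emph{Fact} (``$Q$ is $s$-(certifiably)-hypercontractive-subgaussian iff $\vec\Sigma^{\dagger/2}X$ is $s$-(certifiably)-subgaussian'') and then asserting the theorem as a ``consequence of \Cref{thm:certifiability}.'' You have simply made the implicit chain explicit, including the two details the paper silently relies on---that a sum of squares remains a sum of squares after precomposition with the fixed linear map $v \mapsto \vec\Sigma^{1/2}v$, and that $\vec\Sigma^{1/2}\vec\Sigma^{\dagger/2}$ acts as the identity on the support of $X-\mu$ when $\vec\Sigma$ is degenerate---so the write-up is, if anything, more complete than the paper's.
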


We now state algorithmic implications of \Cref{thm:certifiability,thm:certifiability-hyper} in the table below, 
with details deferred to \Cref{sec:algo-implications}.  
For many of the estimation tasks below, the resulting algorithmic guarantees 
are qualitatively optimal, matching existing computational lower bounds; see \Cref{sec:algo-implications}.

\begin{table}[H]
\centering
\caption{We state some implications of \Cref{thm:certifiability} below.
For all of these applications, $\epsilon$ denotes the fraction of (arbitrary) outliers 
(except for clustering, where the implications are new even for $\eps=0$).
Here $m\in \N$ denotes an algorithmic parameter; the sample complexity $n$ 
of the resulting algorithm scales as $n = \poly(d^m,1/\eps)$, and the runtime scales as $(nd)^{\poly(m)}$. 
By ``No general algorithm'' below, we mean that no prior algorithm was known to 
achieve  error that did not scale (polynomially) with the condition number 
of the covariance matrix.\protect\footnotemark     ~For brevity, we hide the dependence 
on other problem-specific parameters, restriction on parameter regimes, 
and  polynomial dependence on $m$ in the final error guarantees.
}
\label{table:summary}
\begin{tabular}{@{}lcccc@{}}
\toprule
Estimation Task                                                                        & \begin{tabular}[c]{@{}c@{}}Inlier\\Distribution\end{tabular} & \begin{tabular}[c]{@{}c@{}}Information-\\
theoretic\\ Error\end{tabular}  & \begin{tabular}[c]{@{}c@{}}Previous Best\\ Guarantee in \\ Polynomial Time\end{tabular} 
& \begin{tabular}[c]{@{}c@{}} New Guarantees \\ from \Cref{thm:certifiability}\end{tabular} \\ \midrule
\addlinespace[0.7em]
\begin{tabular}[c]{@{}l@{}}Mean estimation: \\ Euclidean norm\end{tabular}               & subgaussian                    & $\widetilde{\Theta}(\eps)$                                   & 
            
           \begin{tabular}[c]{@{}c@{}}$\sqrt{\eps}$ \\ \scriptsize\cite{DiaKKLMS17,SteCV18}
           \end{tabular}
                                                                                  &
                         \begin{tabular}[c]{@{}c@{}}
$\eps^{1-1/m}$  \\ \scriptsize\cite{HopLi18,KotSS18}\\ \scriptsize (\Cref{cor:robust-euclidean})\end{tabular}                                  \\ 
\addlinespace[0.7em]
\begin{tabular}[c]{@{}l@{}}List-decodable \\ Mean estimation\end{tabular}          & subgaussian                 & $\widetilde{\Theta}(\eps)$                                     & 
\begin{tabular}[c]{@{}c@{}}$\sqrt{\frac{1}{1-\eps}}$ \\ \,\,\,\,\scriptsize \cite{ChaSV17} \end{tabular}
                                                                        & \begin{tabular}[c]{@{}c@{}}
$ \left(\frac{1}{1-\eps}\right)^{-\Omega(\frac{1}{m})}$  \\ \scriptsize\cite{KotSS18}\\ \scriptsize (\Cref{cor:list-decodable-euclidean})\end{tabular}                                            \\ 
\addlinespace[0.7em]
\begin{tabular}[l]{@{}l@{}}Mixture models:  \\
Mixture of $k$\\
$\Delta$-separated \\
components
\end{tabular} &   \begin{tabular}[c]{@{}l@{}}Each component \\is   subgaussian                                                     \end{tabular}  
& 
$\Delta\gtrsim \sqrt{\log k}$ &                                     \begin{tabular}[c]{@{}c@{}}$\Delta \gtrsim k^{\Omega(1)}$ \\ \,\,\,\,\cite{AchMcs05} \end{tabular}
                                                 & \begin{tabular}[c]{@{}c@{}}
$\Delta \gtrsim k^{O\left(\frac{1}{m}\right)}$  \\ \scriptsize\cite{HopLi18,KotSS18}\\ \scriptsize (\Cref{thm:clustering})\end{tabular}                                                            \\ 
\addlinespace[1em]
\begin{tabular}[c]{@{}l@{}}Mean estimation: \\ Mahalanobis norm\end{tabular}             & \begin{tabular}[c]{@{}c@{}}hypercontractive \\ subgaussian\end{tabular}                                          & $\widetilde{\Theta}(\eps)$               & \begin{tabular}[c]{@{}c@{}} No general \\  algorithm\end{tabular}                                                                        & \begin{tabular}[c]{@{}c@{}}
$\eps^{1-1/m}$  \\ \scriptsize\cite{HopLi18,KotSS18}\\ \scriptsize (\Cref{cor:cov-estimation})\end{tabular}                                      \\ 
\addlinespace[0.7em]
\begin{tabular}[c]{@{}l@{}}Covariance estimation: \\ Relative spectral \\
norm\end{tabular}          & \begin{tabular}[c]{@{}c@{}}hypercontractive \\ subgaussian\end{tabular}      & $\widetilde{\Theta}(\eps)$                                                  & \begin{tabular}[c]{@{}c@{}} No general \\  algorithm\end{tabular}                                                                        & \begin{tabular}[c]{@{}c@{}}
$ \eps^{1-2/m}$  \\ \scriptsize\cite{KotSS18}\\ \scriptsize (\Cref{cor:cov-estimation})\end{tabular}                                            \\
\addlinespace[0.7em]
\begin{tabular}{@{}l@{}}Linear regression:\\ Arbitrary noise\end{tabular} &           \begin{tabular}[c]{@{}c@{}}hypercontractive \\ subgaussian\end{tabular}                                        & $\widetilde{\Theta}(\eps)$                  &
\begin{tabular}[c]{@{}c@{}} No general \\  algorithm\end{tabular}
& \begin{tabular}[c]{@{}c@{}}
$\eps^{1 - \frac{2}{m}}$  \\ \scriptsize\cite{BakPra21}\\ \scriptsize (\Cref{cor:rob-regression})\end{tabular}                                                            \\ \bottomrule
\end{tabular}
\end{table}
\footnotetext{Recall that the condition number of the (unknown) covariance matrix of the underlying \emph{inliers} could be arbitrarily large, which renders those guarantees vacuous in the general setting.}

\bigskip

\label{fact:sseh-moments-hardness}

\subsection{Overview of Techniques} \label{ssec:techniques}

We give an overview of the proof of \Cref{thm:certifiability}.
For the purpose of this intuitive explanation, we will establish 
the slightly weaker statement that the polynomial 
$(C m )^m \|v\|_2^m - \E_{X \sim P} \iprod{v,Y}^m$ is a sum of squares.
Substituting a stronger concentration bound for subgaussian processes 
at one key point in the proof leads to the improvement 
from $m^m$ to $m^{m/2}$---see the formal proof in \Cref{sec:resilience_to_certifiable_moments}.

\paragraph{Duality} Our starting point is the duality between 
sum of squares polynomials and \emph{pseudoexpectations}.
A degree-$m$ pseudoexpectation $\pE$ over a variable $v= (v_i)_{i=1}^d$ 
is an $\R$-valued linear operator on the vector space 
of degree-$m$ polynomials in $v$---that is, $\pE$ assigns a 
real number to every polynomial $p(v)$ with degree at most $m$.
Such a linear operator is a pseudoexpectation if it satisfies: 
$\pE[1] = 1$ (normalization) and $\pE[p(v)^2] \geq 0$ for every $p$ 
with degree at most $m/2$ (positivity).
We refer the reader to \Cref{sec:sum-of-squares-background} for further details.

Duality between pseudoexpectations and sum of squares polynomials 
states that a degree at most $m$ polynomial $p(v)$ is a sum of squares if and only if 
for all degree-$m$ pseudoexpectations, $\pE[p(v)] \geq 0$; 
see \Cref{fact:sos-duality,fact:sos-duality-unconstrained}.
In our context, duality means that it will be enough to show 
\[
\pE [ \E_{Y \sim v} \iprod{v,Y}^m] \leq (C m)^m \cdot \pE [\|v\|_2^m]
\]
for every pseudoexpectation $\pE$ of degree $m$, or, equivalently,
\[
\sup_{\pE \text{ of degree } m} \frac{ \pE [ \E_{Y \sim v} \iprod{v,Y}^m] }{ \pE [\|v\|_2^m]} \leq (C m)^m \, .
\]

\paragraph{A Nonlinear Empirical Process}
The first step of our proof is to replace $\E_{Y \sim v} \iprod{v,Y}^m$ 
with an empirical average over $n$ samples.
In fact, using Jensen's inequality and linearity of the operators, 
it suffices to show the following result (see \Cref{sec:cert-dist-moments} for details):
\begin{align}
\label{eq:our-tech-bdd-moments-samples}
 \E_{Y_1,\dots,Y_n \stackrel{\text{i.i.d.}}{\sim} P}\left[\sup_{\pE}  \frac{\pE\left[\frac{1}{n}\sum_{i=1}^n\langle v, Y_i\rangle^m \right]}{\pE \|v\|_2^m}  \right] \leq (C m)^m \, ,
\end{align}
where $Y_1,\dots,Y_n$ are i.i.d.\ samples from $P$, 
for some sufficiently large $n$.\footnote{In fact, using uniform convergence arguments, 
it can be seen that this step is not loose for $n$ large enough.}
Importantly, we note that the expectation over the samples is now outside the supremum. 

Thus, duality allows us to reduce our task to analyzing the (expected) supremum
of the empirical process in \eqref{eq:our-tech-bdd-moments-samples}.
An empirical process, associated to a function class $\cF$ and a distribution $P$, 
refers to the collection of random variables $\left\{\sum_{i=1}^n f(Y_i)\right\}_{f \in \cF}$, 
where $Y_1,\dots,Y_n$ are  (empirical) i.i.d.\ samples from $P$.
In our case, the function class is indexed by the collection of degree-$m$ pseudoexpectations, 
where the function associated to a pseudoexpectation $\pE$ maps $Y$ to $\pE[\iprod{v,Y}^m] / \pE[\|v\|_2^m]$.

\paragraph{Connections to Gaussian Processes}
A fundamental result of Talagrand concerns the special case 
of {\em linear} processes (when $\cF$ is a set of linear functions).
This result, known as generic chaining or majorizing measures, states that for any $s$-subgaussian 
distribution $P$ over $\R^{k}$ with zero mean and any fixed set $T \subset \R^k$, 
\begin{align}
\label{eq:gen-ch-tal}
\E_{Y \sim P}\left[\sup_{t \in T} \langle t, Y \rangle \right] \lesssim s \cdot \E_{X \sim \cN(0,\bI_k)}\left[\sup_{t \in T} \langle t, X \rangle \right]\,.
\end{align}  
That is, we can bound from above the (expected) behavior of the supremum 
of \emph{linear} functions with subgaussian inputs 
with the analogous behavior under Gaussian inputs.

In fact, even more is true.
Talagrand's result even shows concentration bounds for the supremum, 
captured by the moment bound for any even $p \in \N$, namely,
\begin{align}
\label{eq:gen-ch-tal-conc}
    \left ( \E_{Y \sim P} \left[\sup_{t \in T} \langle t, Y \rangle^p \right] \right )^{1/p} \lesssim s \cdot \sqrt{p} \cdot \E_{X \sim \cN(0,\bI_k)}\left[\sup_{t \in T} \langle t, X \rangle \right] \, ,
\end{align}
whenever $T$ is symmetric.

Our goal is to use generic chaining to bound the empirical process 
in \eqref{eq:our-tech-bdd-moments-samples}.
Trying to massage \eqref{eq:our-tech-bdd-moments-samples} into 
the form of \eqref{eq:gen-ch-tal}, first observe that if $P$ is subgaussian, 
then the concatenated vector $Y' = (Y_1,\dots,Y_n) \in \R^{nd}$ of $n$ independent samples 
from $P$ is also subgaussian.
So, the left-hand side of \eqref{eq:our-tech-bdd-moments-samples} 
is a supremum over a collection of degree-$m$ polynomials 
of the subgaussian random variable $Y'$.
If there were an analogue of generic chaining for degree-$m$ polynomials, 
we would then be able to replace $Y'$ with an analogous Gaussian, 
after which we would be in good shape---indeed, certifiable subgaussianity 
of the Gaussian distribution itself means that 
if we replace $Y_1,\ldots,Y_n$ in \eqref{eq:our-tech-bdd-moments-samples} 
with samples $X_1,\ldots,X_n \sim \cN(0,\bI_d)$, for large-enough $n$, we get
\begin{align}
\label{eq:our-techniques-gaussian-bound}
 \E_{X_1,\dots,X_n \stackrel{\text{i.i.d.}}{\sim} \cN(0,\bI_d)}\left[\sup_{\pE}  \frac{\pE\left[\frac{1}{n}\sum_{i=1}^n\langle v, X_i\rangle^m \right]}{\pE [\|v\|_2^m]}  \right] \leq (C \sqrt m)^m \, .
\end{align} 
Unfortunately, no degree-$m$ variant of \Cref{eq:gen-ch-tal} is possible:
\Cref{eq:gen-ch-tal} is known to fail for simple non-linear functions, 
e.g., degree-two polynomials~\cite{Talagrand21}.
Our main technical innovation lies in enabling 
the use of \Cref{eq:gen-ch-tal} for the specific non-linearity 
involved in our setting.

\paragraph{Linearizing the Nonlinear Process}
The key idea is to replace \Cref{eq:our-tech-bdd-moments-samples} 
with an equivalent linear empirical process.
Suppose that $\pE$ is a degree-$m$ pseudoexpectation 
in variables $v_1,\ldots,v_d$, with $m$ even.
Observe that $\pE \sum_{i=1}^n \iprod{v,Y_i}^m$ is akin to the $\ell_m$-norm, 
raised to the power $m$, of the vector whose entries are $\iprod{v,Y_i}$.
Recall that by H\"older duality of norms, the $\ell_m$-norm of a vector $x$ 
can be written as a linear optimization problem over a convex set:
\begin{align}
\label{eq:our-techniques-holder}
\|x\|_{\ell_m} = \sup_{\|w\|_{\ell_{\frac{m}{m-1}}} \leq 1} \iprod{x,w} \, .
\end{align}
Our plan is to ``lift'' this equality into pseudoexpectations.\footnote{We thank 
David Steurer for pointing out this interpretation of our proof.}

We say that a degree-$m$ pseudoexpectation $\pE'$ in $d+n$ variables $v_1,\ldots,v_d,w_1,\ldots,w_n$ 
\emph{extends} $\pE$, if $\pE' p(v) = \pE p(v)$ for every polynomial in the variables $v$.
Then, using that H\"older's inequality has a simple sum of squares proof, 
we can obtain the following analogue of~\Cref{eq:our-techniques-holder} 
(see \Cref{fact:peholder}):
\begin{align}
\label{eq:our-techniques-pseudo-holder}
\Paren{\pE \left [ \sum_{i = 1}^n \iprod{v,Y_i}^m \right ] }^{1/m} = \sup_{\pE' \text{ extends } \pE} \, \, \, \frac{\pE' \left [ \sum_{i=1}^n \iprod{v,Y_i} w_i^{m-1} \right ] }{\Paren{\pE' \left [ \sum_{i =1}^n w_i^m \right ]}^{1-1/m} } \;.
\end{align} 
Applying \Cref{eq:our-techniques-pseudo-holder} to the left-hand side 
of \Cref{eq:our-tech-bdd-moments-samples}, we get
\[
\E_{Y_1,\dots,Y_n \stackrel{\text{i.i.d.}}{\sim} P}\left[\sup_{\pE}  \frac{\pE\left[\frac{1}{n}\sum_{i=1}^n\langle v, Y_i\rangle^m \right]}{\pE [\|v\|_2]^m}  \right]
= \E_{Y_1,\dots,Y_n \stackrel{\text{i.i.d.}}{\sim} P} \sup_{\pE'} \Paren{ \frac{ \pE' \left [ \sum_{i=1}^n \iprod{v,Y_i} w_i^{m-1} \right ]}{ \Paren{\pE' \left [ \sum_{i =1}^n w_i^m \right ]}^{1-1/m} \cdot \Paren{ \pE' \left [ \|v\|_2^m \right] }^{1/m}} }^m \, . 
\]
Now we have expressed our nonlinear empirical process in the form 
of the $m$-th power of a linear empirical process.
This implies we can apply the concentration bound~\Cref{eq:gen-ch-tal-conc} 
to replace the samples $Y_1,\ldots,Y_n \sim P$ with Gaussian samples 
$X_1,\ldots,X_n \sim \cN(0,\bI)$, to obtain 
\begin{align}
& \E_{Y_1,\dots,Y_n \stackrel{\text{i.i.d.}}{\sim} P}\left[\sup_{\pE}  \frac{\pE\left[\frac{1}{n}\sum_{i=1}^n\langle v, Y_i\rangle^m \right]}{\pE [\|v\|]^m}  \right] \nonumber\\
& \quad \leq (Cs\sqrt{m})^m \cdot \Paren{ \E_{X_1,\dots,X_n \stackrel{\text{i.i.d.}}{\sim} \cN(0,\bI)} \sup_{\pE'} \frac{ \pE' \left [ \sum_{i=1}^n \iprod{v,X_i} w_i^{m-1} \right ]}{ \Paren{\pE' \left [ \sum_{i =1}^n w_i^m \right ]}^{1-1/m} \cdot \Paren{ \pE' \left [ \|v\|_2^m \right] }^{1/m}} }^m  \, ,
\label{eq:our-techniques-getting-to-gaussian}
\end{align}
where $C$ is some universal constant.

\paragraph{Bounding the Gaussian Process}
To complete the proof, we need to analyze the Gaussian process 
whose supremum is represented on the right-hand side of \Cref{eq:our-techniques-getting-to-gaussian}.
Using \Cref{eq:our-techniques-pseudo-holder} followed by \Cref{eq:our-techniques-gaussian-bound}, 
the right-hand side is equal to
\[
(C s \sqrt{m})^m \cdot \Paren{ \E_{X_1,\dots,X_n \stackrel{\text{i.i.d.}}{\sim} \cN(0,\bI)}\left[\sup_{\pE}  \frac{\pE\left[\frac{1}{n}\sum_{i=1}^n\langle v, X_i\rangle^m \right]}{\pE \|v\|_2^m}  \right] } \leq (C' sm)^m \;,
\]
for $C'$ some universal constant, which is what we wanted to show.

\subsection{Related and Prior Work} \label{ssec:related}
\paragraph{Certifiable Boundedness of Probability Distributions} 
Certifiable boundedness and subgaussianity have extensive applications 
in algorithmic statistics, as discussed in \Cref{sec:implications}.
For a detailed treatment of algorithmic robust statistics, 
the reader is referred to~\cite{DiaKan22-book} .
Until our work, only certain structured distributions were known to have certifiably-bounded moments.
Distributions that either have independent coordinates 
or are rotationally invariant were shown to be certifiably bounded in \cite{HopLi18,KotSS18}.
Furthermore, \cite{KotSS18} proved that \emph{Poincare} distributions 
also have certifiable bounded moments (with the certifiable bound scaling with the Poincare constant).
The class of certifiable distributions can be further expanded by observing 
that the certifiable bounded moments property is (approximately) 
preserved under various natural operations, such as: (i) affine transformations, 
(ii) convolutions,  (iii) taking a weighted mixture of distributions whose means 
are close to each other, and (iv) small perturbations of moment-tensors.\footnote{The last of these 
in particular implies that the uniform distribution over a set of i.i.d.\ 
samples from a certifiable distribution is, with high probability, also certifiable.
}
The distributions obtainable by such operations were, before our work, the only ones known to have certifiably-bounded moments.

Recent work \cite{bakshi2024efficient} studied sum of squares proofs 
to certify \emph{anticoncentration}, a distinct property of some high-dimensional 
probability distributions which has been extensively useful in algorithmic statistics.

\paragraph{Injective Norms}
Our work shows that strong sum-of-squares upper bounds 
exist which bound from above the injective norms of a certain 
family of $m$-tensors---those arising as the $m$-th moments of a subgaussian distribution.
Sum-of-squares upper bounds on injective tensor norms 
(equivalently, upper bounds on homogeneous polynomials over the unit sphere) 
have been studied extensively for several classes of tensors (equivalently, polynomials).
\cite{doherty2012convergence,bhattiprolu2017weak} study worst-case $m$-tensors.
\cite{hopkins2015tensor,raghavendra2017strongly,potechin2020machinery,BarBHKSZ12,ge2015decomposing,guruswami2024certifying} study random $m$-tensors drawn from various probability distributions.

Under the exponential time hypothesis, approximating the injective norm 
of a tensor to any constant factor in polynomial time is not possible; 
this holds even for tensors which arise as the $m$-th moments 
of a probability distribution \cite{BarBHKSZ12}.
Under the small-set expansion hypothesis, constant-factor approximating 
injective tensor norm for $m$-th moment tensors is hard in polynomial time even 
for the moment tensors of distributions which have $Cm$ subgaussian moments, 
for any constants $C > 0$ and $m\geq 4$ \cite{BarBHKSZ12,HopLi19}.
This shows that (assuming the small-set expansion hypothesis) our result 
cannot be strengthened to apply to the $m$-th moment tensors 
of all distributions with a constant number of subgaussian moments.

The injective tensor norm of a tensor of the form $\sum_{i=1}^n a_i^{\otimes m}$
is equivalently the $2$-to-$m$ norm of the matrix $\bA$ whose rows 
are the vectors $a_1,\ldots,a_n$.
Approximation algorithms and hardness results for such ``hypercontractive norms'' 
have been studied recently in \cite{bhattiprolu2023inapproximability,bhattiprolu2019approximability}.
Our result shows that if $\bA$ is a random matrix with subgaussian rows, 
for $n$ sufficiently large, there is with high probability 
a sum of squares proof which bounds from above the $2$-to-$m$ norm of $\bA$.

\paragraph{Generic Chaining and Majorizing Measures} Generic chaining has 
seen numerous applications in theoretical computer science, 
especially in dimensionality reduction---see \cite{nelson2016chaining}.

\section{Preliminaries}

We now introduce the notation and state the background required to establish our main result.

\paragraph{Notation}
For $n\in\N$, we shall use $[n]$ to denote the set $\{1,\dots,n\}$.
For a random variable $Y$ over $\cY$ (which for us will shall be $\R^d$),
a function $h:\cY \to \R$, and a natural number $q \in \N$, we define the $L_q(Y)$-norm of $h$ as follows:
\begin{align}
    \|h(Y)\|_{L_q(Y)} := \left(\E_{Y}[\left|h(Y)\right|^q]\right)^{\frac{1}{q}}.
\end{align}
When the random variable $Y$ is clear from the context, 
we simply write $\|h(Y)\|_{L_q}$.
For a collection of random variables $Y_1,\dots,Y_n$, we use $Y_{1:n}$ to denote the concatenated vector $(Y_1,\dots,Y_n)$.
For distributions $P_1,\dots,P_m$, we denote the resulting product distribution by $\prod_{i=1}^mP_i$.
For the special case when all $P_i=P$, the product distribution is denoted by $P^{\otimes m}$.

For a vector $x$, we use $\|x\|_2$ to denote its Euclidean norm.
We use $\lesssim, \gtrsim$ to hide positive constants, that is, $a \lesssim b$ means that $a \leq C b$ for an absolute constant $C$ independent of the problem parameters.
Throughout this paper, the maximum over the empty set is set to be zero; this is because we shall be taking maximum over non-negative numbers.

\subsection{Subgaussian Distributions}

We will use the following simple claim that a concatenated vector of independent subgaussian samples is also subgaussian.
    \begin{claim}[Product of subgaussians is subgaussian]
    \label{claim:product}
        Let $Y_1,\dots,Y_n$ be $n$ independent samples from $P_1,\dots,P_n$, each over $\R^d$, with $P_i$ being a zero-mean $s_i$-subgaussian distribution.
        Let $Y' = Y_{1:n}$ be the concatenated vector in $\R^{nd}$.
        Then $Y'$ is zero mean and $O(\max_i s_i)$-subgaussian.
    \end{claim}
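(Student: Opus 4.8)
The plan is to reduce the multivariate claim to the elementary one-dimensional fact that a sum of independent, centered, real subgaussian random variables is again subgaussian, with the subgaussian parameter composing like a standard deviation (the square root of the sum of squares). First I would fix an arbitrary direction $v\in\R^{nd}$ and decompose it as $v=(v_1,\dots,v_n)$ with each $v_i\in\R^d$, so that $\langle v,Y'\rangle=\sum_{i=1}^n\langle v_i,Y_i\rangle$. Since each $P_i$ has zero mean, the summand $Z_i:=\langle v_i,Y_i\rangle$ is a centered real random variable, and \Cref{def:sg} gives $\|Z_i\|_{L_m}\le C s_i\sqrt m\,\|v_i\|_2$ for every $m\ge1$; moreover the $Y_i$, hence the $Z_i$, are independent. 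Thus $\langle v,Y'\rangle=\sum_i Z_i$ is a sum of independent centered real subgaussian variables whose subgaussian ``constants'' are at most $C s_i\|v_i\|_2$.

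Next I would invoke the standard equivalence between the moment and moment-generating-function characterizations of real subgaussian variables (see, e.g., \cite[Proposition 2.5.2]{Vershynin18}): a centered real $Z$ with $\|Z\|_{L_m}\le K\sqrt m$ for all $m$ satisfies $\E e^{\lambda Z}\le e^{C'K^2\lambda^2}$ for all $\lambda\in\R$, and conversely such an MGF bound implies $\|Z\|_{L_m}\le C''K\sqrt m$, where $C',C''$ are absolute. Applying the forward direction to each $Z_i$ with $K_i:=C s_i\|v_i\|_2$, multiplying MGFs by independence, and then applying the converse direction to $\sum_i Z_i$, I obtain, for every $m\ge1$,
\[
\|\langle v,Y'\rangle\|_{L_m}\ \le\ C''\sqrt m\,\Big(\littlesum_{i=1}^n K_i^2\Big)^{1/2}
\ \le\ C''C\sqrt m\,\big(\max_i s_i\big)\Big(\littlesum_{i=1}^n\|v_i\|_2^2\Big)^{1/2}
\ =\ O\!\big(\max_i s_i\big)\sqrt m\,\|v\|_2 .
\]
Since $v\in\R^{nd}$ was arbitrary and $\E[Y']=0$ by linearity of expectation, this is precisely the statement that $Y'$ is zero-mean and $O(\max_i s_i)$-subgaussian in the sense of \Cref{def:sg}.

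There is no real obstacle here; the only mild care needed is bookkeeping of absolute constants so that the final bound has the form ``$O(\max_i s_i)\sqrt m\,\|v\|_2$'' with the constant $C$ of \Cref{def:sg} in the correct place — and since the claim only asks for an $O(\cdot)$ bound on the subgaussian parameter, even this is immaterial. An alternative route that avoids the MGF detour is to apply a Rosenthal/Marcinkiewicz–Zygmund-type inequality for sums of independent centered variables directly at the level of $L_m$-norms, namely $\|\sum_i Z_i\|_{L_m}\lesssim \sqrt m\,(\sum_i\|Z_i\|_{L_2}^2)^{1/2}+ m\max_i\|Z_i\|_{L_m}/\sqrt m$-type control, and then substitute the subgaussian moment bounds for the $Z_i$; this yields the same conclusion. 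Either way, once the decomposition $\langle v,Y'\rangle=\sum_i\langle v_i,Y_i\rangle$ is in place, the argument is routine.
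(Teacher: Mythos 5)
Your proof is correct and takes essentially the same route as the paper: both fix a direction $v'=(v_1,\dots,v_n)\in\R^{nd}$, decompose $\langle v',Y'\rangle=\sum_i\langle v_i,Y_i\rangle$, factor the MGF over the independent blocks, and then translate back to the moment characterization. The paper simply packages the moment--MGF equivalence as a single ``alternate definition'' (\Cref{fact:def-subgaussianity}) and applies it once in each direction, whereas you invoke the one-dimensional moment$\leftrightarrow$MGF equivalence explicitly; this is the same argument with the bookkeeping arranged slightly differently.
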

 \begin{proof}
We begin by stating an alternative characterization of subgaussian distributions.
\begin{fact}[Alternate definition of subgaussianity; see, for example, {\cite[Section 2.5]{Vershynin18}}]
\label{fact:def-subgaussianity}    
A distribution $P$ over $\R^d$ with mean $\mu$ is $\Theta(s)$-subgaussian, 
for $s\geq0$, if and only if for all  vectors $v \in \R^{d}$ and $\lambda>0$, it holds that $\E[\exp(\lambda\langle v, X\rangle)] \leq 2\exp\left(s^2 \lambda^2\|v\|_2^2\right)$.  
\end{fact}

\noindent  Fix a  vector $v' \in \R^{nd} $ to be $v' = (v_1,\dots,v_n)$, 
           where $v_i \in \R^d$. Using \Cref{fact:def-subgaussianity} and independence of $Y_i$'s,
        we see that
    $\E[\exp(v^\top Y')] = \E[\exp(\sum_i v_i^\top Y_i)]
    = \prod_i \E[\exp(v_i^\top Y_i)]
    \leq \prod \exp(s_i^2\|v_i\|_2^2) = \exp(\sum_i s_i^2\|v_i\|_2^2) = \exp\left(\frac{\sum_i s_i^2\|v_i\|_2^2}{\|v\|_2^2} \|v\|_2^2\right)\leq \exp(\|v\|_2^2 \max_i s_i^2)$, where we use that $\|v\|_2^2 = \sum_i \|v_i\|_2^2$.
    The claim follows by  another application of \Cref{fact:def-subgaussianity}.
    \end{proof}

\subsection{Generic Chaining and Majorizing Measures}
The following result is the well-known majorizing measures result of Talagrand. 
\begin{fact}[\cite{Talagrand21}]
\label{fact:canonical-subgaussian}
Let $T \subset \R^m$ be a fixed set.
Let $G \sim \cN(0,\bI_m)$ and let $Y$ be a zero-mean $s$-subgaussian vector over $\R^m$.
Then
\begin{align}
    \E\left[\sup_{t \in T} \langle t, Y\rangle \right]  
\lesssim s   \,\cdot   \E\left[\sup_{t \in T} \langle t, G\rangle \right]
\end{align}  
\end{fact}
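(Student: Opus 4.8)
This is the comparison (``Gaussian domination'') form of Talagrand's majorizing measures theorem, and the plan is to deduce it by sandwiching both expectations between the Talagrand $\gamma_2$ functional of $T$ under the Euclidean metric. Recall that for a metric space $(T,\rho)$ one sets $\gamma_2(T,\rho) := \inf \sup_{t\in T}\sum_{k\geq 0} 2^{k/2}\,\mathrm{diam}_\rho(A_k(t))$, the infimum over all admissible partition sequences $\{\cA_k\}_{k\geq 0}$ of $T$ (so $|\cA_0|=1$, $|\cA_k|\leq 2^{2^k}$, each $\cA_{k+1}$ refining $\cA_k$), with $A_k(t)$ the piece of $\cA_k$ containing $t$. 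By a routine monotone-limit argument it suffices to treat finite $T$, and after replacing $T$ by $T-T$ (a constant-factor change) I may assume the index set is symmetric so that all the suprema are nonnegative.

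\textbf{The subgaussian process.} By \Cref{fact:def-subgaussianity}, for all $t,t'\in T$ the increment $\iprod{t-t',Y}$ is centered and obeys $\E[\exp(\lambda\iprod{t-t',Y})]\leq 2\exp(O(s^2)\,\lambda^2\|t-t'\|_2^2)$; thus $(\iprod{t,Y})_{t\in T}$ is a stochastic process whose increments are subgaussian with respect to the metric $\rho(t,t'):=O(s)\|t-t'\|_2$. I would then invoke the \emph{generic chaining upper bound}---the elementary half of Talagrand's machinery, valid for any process with subgaussian increments---to get $\E[\sup_{t\in T}\iprod{t,Y}]\lesssim\gamma_2(T,\rho)=O(s)\cdot\gamma_2(T,\|\cdot\|_2)$, the last step being immediate from homogeneity of $\gamma_2$ under rescaling the metric.

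\textbf{The Gaussian process and the conclusion.} For $G\sim\cN(0,\bI_m)$ the increments of $(\iprod{t,G})_{t\in T}$ are exactly $\cN(0,\|t-t'\|_2^2)$, so the \emph{majorizing measures lower bound} (the Fernique--Talagrand theorem) gives $\E[\sup_{t\in T}\iprod{t,G}]\gtrsim\gamma_2(T,\|\cdot\|_2)$. Chaining the two estimates yields $\E[\sup_{t\in T}\iprod{t,Y}]\lesssim s\cdot\gamma_2(T,\|\cdot\|_2)\lesssim s\cdot\E[\sup_{t\in T}\iprod{t,G}]$, and one removes the finiteness reduction by taking the supremum over finite subsets and applying monotone convergence.

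\textbf{Main obstacle.} Every step above except one is bookkeeping; the single deep ingredient is the majorizing measures lower bound $\E[\sup_{t\in T}\iprod{t,G}]\gtrsim\gamma_2(T,\|\cdot\|_2)$, whose proof via Talagrand's partitioning/growth-functional scheme is long and delicate. I would not reprove it here but cite \cite{Talagrand21}; this is presumably exactly why the statement is recorded as a Fact rather than proved in full.
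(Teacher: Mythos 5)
Your argument is correct and is precisely the route taken by the references the paper cites for this Fact (\cite[Corollary~8.6.3]{Vershynin18}, \cite{Talagrand21}): sandwich both expectations around the $\gamma_2$ functional of $T$, using the elementary generic-chaining upper bound for the process with subgaussian increments, the homogeneity of $\gamma_2$ under rescaling the metric, and the deep majorizing-measures lower bound for the Gaussian process. The paper records this as a Fact with a citation rather than proving it, so there is no in-paper proof to compare against; your write-up correctly identifies the one non-elementary ingredient (the Fernique--Talagrand lower bound) and appropriately cites rather than reproves it.
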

This is a deep result in probability theory, stating that the maximum of a canonical subgaussian process is upper bounded by that of corresponding Gaussian process.
The version stated above is from {\cite[Corollary 8.6.3]{Vershynin18}}.
As mentioned in \Cref{ssec:techniques}, we shall need stronger concentration properties of Gaussian processes to replace the $m^m$ factor with $m^{m/2}$, 
which we record below.
\begin{fact}
\label{fact:canonical-subgaussian-moment}
    In the setting of \Cref{fact:canonical-subgaussian},  the following result holds for the higher moments of the supremum $\sup_{t \in T} \langle t, Y\rangle$:
for all $q\geq 1$, it holds that
\begin{align}
\left\|\sup_{t \in T} \langle t, Y\rangle\right\|_{L_q}
\lesssim s   \,\cdot   \E\left[\sup_{t \in T} \langle t, G\rangle \right] + s \sqrt{q} \cdot \sup_{t \in T} \|t\|_2
\end{align}
\end{fact}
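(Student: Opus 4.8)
The statement is a standard concentration consequence of the generic chaining, and I would derive it from the tail‑bound form of Talagrand's comparison inequality combined with a routine tail‑to‑moment conversion. Assume $T\neq\emptyset$ (otherwise both sides vanish) and fix any $t_0\in T$. Writing $X_t:=\langle t,Y\rangle$, the process $(X_t)_{t\in T}$ has subgaussian increments with respect to the Euclidean metric: since $Y$ is $s$-subgaussian, $\|X_s-X_t\|_{\psi_2}=\|\langle s-t,Y\rangle\|_{\psi_2}\lesssim s\,\|s-t\|_2$. The tail version of the generic chaining / comparison inequality (see Talagrand~\cite{Talagrand21}, or the deviation form of the comparison inequality in~\cite[Chapter 8]{Vershynin18}) then gives, for every $u\geq0$,
\[
\P\!\left[\,\sup_{t\in T}\big(X_t-X_{t_0}\big)\;\geq\; C s\Big(\gamma_2(T,\|\cdot\|_2)\;+\;u\,\mathrm{diam}(T,\|\cdot\|_2)\Big)\,\right]\;\leq\;2\exp(-u^2),
\]
where $\gamma_2$ denotes Talagrand's functional and $C$ is a universal constant.

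By the majorizing measures theorem in the Gaussian case, $\gamma_2(T,\|\cdot\|_2)\asymp\E[\sup_{t\in T}\langle t,G\rangle]$, and trivially $\mathrm{diam}(T,\|\cdot\|_2)\leq 2\sup_{t\in T}\|t\|_2$. Hence, setting $a:=C's\,\E[\sup_{t\in T}\langle t,G\rangle]$ and $b:=C's\,\sup_{t\in T}\|t\|_2$ for an appropriate universal $C'$, we obtain $\P[\,W\geq a+bu\,]\leq 2e^{-u^2}$ for all $u\geq0$, where $W:=\sup_{t\in T}(X_t-X_{t_0})\geq 0$ (nonnegative because $t_0\in T$). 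I would then convert this into a moment bound via the elementary fact that such a tail implies $\|W\|_{L_q}\lesssim a+b\sqrt q$ for all $q\geq1$: write $\E[W^q]=\int_0^\infty q r^{q-1}\P[W\geq r]\,dr$, bound $\P[W\geq r]\leq 1$ on $[0,a]$, and on $[a,\infty)$ substitute $r=a+bu$ so that the Gaussian tail and the estimate $\Gamma(q/2)\le C^q(q/2)^{q/2}$ make the whole integral $\lesssim (a+b\sqrt q)^q$ with a $q$-independent constant.

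To finish, apply the triangle inequality for $\|\cdot\|_{L_q}$ together with $\sup_{t\in T}\langle t,Y\rangle = W + \langle t_0,Y\rangle$:
\[
\Big\|\sup_{t\in T}\langle t,Y\rangle\Big\|_{L_q}\;\leq\;\|W\|_{L_q}+\|\langle t_0,Y\rangle\|_{L_q}\;\lesssim\;\big(a+b\sqrt q\big)+s\sqrt q\,\|t_0\|_2\;\lesssim\; s\,\E\!\Big[\sup_{t\in T}\langle t,G\rangle\Big]+s\sqrt q\,\sup_{t\in T}\|t\|_2,
\]
where $\|\langle t_0,Y\rangle\|_{L_q}\lesssim s\sqrt q\,\|t_0\|_2$ because $\langle t_0,Y\rangle$ is a scalar $O(s\|t_0\|_2)$-subgaussian random variable, and the last step recalls the definitions of $a,b$ and uses $\|t_0\|_2\le\sup_{t\in T}\|t\|_2$.

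The argument is mostly bookkeeping once the generic chaining tail inequality is in hand; the genuinely deep input—Talagrand's majorizing‑measures machinery—is used entirely as a black box, exactly as in \Cref{fact:canonical-subgaussian}. The one place requiring care is the passage from the two‑sided deviation bounds in which generic‑chaining tail inequalities are usually stated (controlling $\sup_{t,t'}|X_t-X_{t'}|$) to the one‑sided supremum $\sup_t X_t$; centering at a fixed $t_0\in T$ resolves this at the cost of the single extra term $\|\langle t_0,Y\rangle\|_{L_q}$, which is precisely what produces the additive $s\sqrt q\,\sup_t\|t\|_2$ and explains why—in contrast to the symmetric‑$T$ variant sketched in \Cref{ssec:techniques}—no symmetry assumption on $T$ is needed here.
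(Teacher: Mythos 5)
Your proof is correct. The paper gives no proof of this fact, instead citing Theorems~8.5.5 and 8.6.1 and Exercise~8.6.5 of Vershynin's book; your argument---the generic-chaining tail bound, the majorizing-measures lower bound $\gamma_2(T,\|\cdot\|_2)\lesssim\E[\sup_{t\in T}\langle t,G\rangle]$, a standard Gaussian tail-to-moment conversion, and a triangle inequality to handle the centering at $t_0\in T$---is precisely the route those references supply, and your closing remark correctly explains why the diameter term renders the symmetry assumption used in the informal variant of \Cref{ssec:techniques} unnecessary here.
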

Observe that the $q$-dependent term---which dictates the tail growth of the supremum---depends only on the diameter of $T$ ($\max_{t \in T}\|t\|_2$), which is usually  
much smaller than the $\E\left[\max_{t \in T} \langle t, G\rangle \right]$.
This decomposition of the higher moment into two terms encodes high probability concentration around the expectation of the supremum.\footnote{It is also possible to derive our desired inequalities by integrating the tail probability, but we choose the version above for simplicity.}
This moment bound can be inferred from \cite[Theorem 8.5.5]{Vershynin18} and \cite[Theorem 8.6.1]{Vershynin18}; see also \cite[Exercise 8.6.5]{Vershynin18}.

\subsection{Sum-of-Squares: Proofs and Duality}
\label{sec:sum-of-squares-background}
We refer the reader to \cite{BarSte16-sos-notes,FleKP19-sos} for further background on sum-of-squares proofs.
In this section, we state the background necessary for our purposes.
\begin{definition}[Sum-of-Squares Proofs]
    	Let $v_1,\ldots,v_d$ be indeterminates and let $\cA$ be a set of polynomial inequalities $\{ p_1(v) \geq 0,\ldots,p_m(v) \geq 0 \}$.
	For a polynomial $r$, a Sum-of-Squares (SoS) proof of the inequality $r(v) \geq 0$ from axioms $\cA$ is a set of polynomials $\{r_S(v)\}_{S \subseteq [m]}$ such that each $r_S$ is a sum of square polynomials and $r(x) = \sum_{S \subseteq [m]} r_S(v) \prod_{i \in S} p_i(v).$
	If the polynomials $r_S(v) \cdot \prod_{i \in S} p_i(v)$ have degree at most $t$ for all $S \subseteq[m]$, we say that this proof is of degree $t$ and denote it by $\cA \sststile{t}{v} r(v) \geq 0$.  
	When $\cA$ is empty, we simply write $\sststile{t}{v} r(t) \geq 0$.

\end{definition}
Our analysis crucially relies on the following objects called pseudoexpectations, which are dual objects to SoS proofs (as detailed shortly).
\begin{definition}[Pseudoexpectations]
    Let $v_1,\ldots,v_d$ be indeterminates.
	A degree-$t$ pseudoexpectation $\pE$ over variables $v$ is a linear map $\pE \, : \, \R[v_1,\ldots,v_d]_{\leq t} \rightarrow \R$
	from degree-$t$ polynomials to $\R$ such that $\pE \Brac{p(v)^2} \geq 0$ for any $p$ of degree at most $t/2$ and $\pE 1 = 1$.
	If $\cA = \{p_1(v) \geq 0, \ldots, p_m(v) \geq 0\}$ is a set of polynomial inequalities, we say that $\pE$ satisfies $\cA$ if for every $S \subseteq [m]$,
	the following holds: $\pE [s(v)^2 \prod_{i \in S} p_i(v)] \geq 0$  for all polynomials $s(\cdot)$ such that $s(v)^2 \prod_{i \in S} p_i(v)$ has degree at most $t$. 
    When we need to emphasize that the pseudoexpectation $\pE$ is over variable $v$, we denote it by $\pE_v$.
\end{definition}

The following fact establishes the duality between SoS proofs and Pseudoexpectations.
\begin{fact}[Duality Between SoS Proofs and Pseudoexpectations: Constraints with Archimedian Property~\cite{JosHen16-archimedian}]
\label{fact:sos-duality}
We say that a system of polynomial inequalities $\cA$ over $x$ is Archimedean if for some real $M > 0$ it contains the inequality $\|x\|^2 \leq M$.
For every Archimedian system $\cA$ and every polynomial $p$ and every degree $m$, exactly one of the following holds.
\begin{enumerate}[leftmargin=*]
\item For every $\eps > 0$, there is an SoS proof of  $\cA \sststile{m}{x} p(x) + \eps \geq 0$.
\item There is a degree-$m$ pseudoexpectation satisfying $\cA$ but $\pE [p(x)] < 0$.
\end{enumerate}
\end{fact}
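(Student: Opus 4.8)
The plan is to read the statement as an instance of finite-dimensional conic duality, with the Archimedean hypothesis supplying the interior point that makes the duality lossless. I would work in the finite-dimensional space $V := \R[x]_{\le m}$ and let $C \subseteq V$ denote the convex cone of polynomials of degree at most $m$ that admit a degree-$m$ SoS proof from $\cA$; concretely, $C$ is the set of nonnegative combinations of products $s(x)^2 \prod_{i \in S} p_i(x)$ of degree at most $m$, where $s$ is an arbitrary polynomial and $S$ ranges over subsets of the inequalities in $\cA$. Unwinding the definitions, a linear functional $L : V \to \R$ with $L(1) = 1$ is a degree-$m$ pseudoexpectation satisfying $\cA$ precisely when $L \ge 0$ on $C$; write $K$ for this set. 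In this language the two alternatives become: (1) $p + \eps \in C$ for all $\eps > 0$; versus (2) there is $L \in K$ with $L(p) < 0$.

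The step I expect to be the main obstacle is the first genuine step: using the Archimedean axiom $M - \|x\|^2 \ge 0$ to show $1 \in \mathrm{int}(C)$. It suffices to establish the truncated Archimedean property that for every monomial $x^\alpha$ with $|\alpha| \le m$ there is a constant $N_\alpha > 0$ with $N_\alpha \pm x^\alpha \in C$; indeed, taking suitable convex combinations of $1$ and the elements $1 \pm x^\alpha / N_\alpha$ and then averaging over $\alpha$ exhibits a coordinate box around $1$ inside $C$. The bounds $N_\alpha \pm x^\alpha \in C$ follow from routine SoS manipulations that begin with $M - x_i^2 = (M - \|x\|^2) + \sum_{j \ne i} x_j^2 \in C$, telescope this up to $M^k - x^{2\beta} \in C$ for even monomials with $|2\beta| = 2k \le m$, and finish using the SoS identity $\pm 2 x^\alpha \le (x^\beta)^2 + (x^\gamma)^2$ for $\beta + \gamma = \alpha$. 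The delicate point throughout is the degree budget: every identity must certify degree at most $m$, not merely hold as a polynomial identity, and keeping track of this is exactly what distinguishes the truncated quadratic module from the full one. (This degree bookkeeping is essentially the content of the quantitative Positivstellensatz in \cite{JosHen16-archimedian}.) Once $\mathrm{int}(C) \ne \emptyset$, I would also invoke the standard fact that $\mathrm{int}(\overline{C}) = \mathrm{int}(C) \subseteq C$ for a convex set with nonempty interior.

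The remaining steps are standard convex analysis. Since $1 \in \mathrm{int}(C)$, any nonzero $L$ in the dual cone $C^* = \{L \in V^* : L \ge 0 \text{ on } C\}$ has $L(1) > 0$ — a linear functional that is nonnegative on a convex cone and vanishes at an interior point of that cone is identically zero — so $C^* \setminus \{0\} = \R_{>0}\cdot K$. The bipolar theorem for convex cones in finite dimensions then yields $\overline{C} = (C^*)^* = \{q \in V : L(q) \ge 0 \text{ for all } L \in K\}$.

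Finally, I would verify the dichotomy. Suppose (2) fails, so $L(p) \ge 0$ for every $L \in K$, and fix $\eps > 0$. Then $L(p + \tfrac{\eps}{2}) = L(p) + \tfrac{\eps}{2} \ge 0$ for all $L \in K$, so $p + \tfrac{\eps}{2} \in \overline{C}$ by the previous display; hence $p + \eps = (p + \tfrac{\eps}{2}) + \tfrac{\eps}{2}\cdot 1 \in \overline{C} + \mathrm{int}(\overline{C}) \subseteq \mathrm{int}(\overline{C}) = \mathrm{int}(C) \subseteq C$, using that $D + \mathrm{int}(D) \subseteq \mathrm{int}(D)$ for any convex cone $D$ (here $D = \overline{C}$). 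So $p + \eps$ admits a degree-$m$ SoS proof from $\cA$, which is (1). Conversely, if (1) holds then $p + \eps \in C \subseteq \overline{C}$ for every $\eps > 0$, so $L(p) \ge -\eps$ for every $L \in K$ and every $\eps > 0$, whence $L(p) \ge 0$ and (2) fails. Thus exactly one of (1), (2) holds. This also pins down the role of the Archimedean hypothesis: without it the dual cone $C^*$ can contain nonzero functionals with $L(1) = 0$ (``unnormalized moments''), and the normalization used to pass between $C^*$ and $K$ — and hence the clean dichotomy — breaks down.
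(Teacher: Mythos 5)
The paper does not prove this statement; it is recorded as a Fact and cited to the strong-duality result of Josz and Henrion. There is thus no ``paper's own proof'' to compare against, so I am evaluating your argument on its own terms.

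Your proof is correct and is the standard conic-duality argument that underlies the cited reference. You set up the truncated preordering $C \subseteq \R[x]_{\le m}$, identify the normalized dual cone with the set $K$ of degree-$m$ pseudoexpectations satisfying $\cA$, and correctly isolate the single nontrivial use of the Archimedean hypothesis: producing an interior point of $C$ (namely the constant $1$) by proving the truncated bounds $N_\alpha \pm x^\alpha \in C$ via telescoping $M - x_i^2 \Rightarrow M^k - x_i^{2k}$ and the elementary SoS bound $\pm 2x^\alpha \le x^{2\beta} + x^{2\gamma}$. With $1 \in \mathrm{int}(C)$ in hand, you correctly observe (i) that any nonzero $L \in C^*$ has $L(1) > 0$, hence $C^* \setminus \{0\} = \R_{>0}\cdot K$, and (ii) via the bipolar theorem and $\overline{C} + \mathrm{int}(\overline{C}) \subseteq \mathrm{int}(\overline{C}) = \mathrm{int}(C) \subseteq C$, that $L(p) \ge 0$ for all $L \in K$ implies $p + \eps \in C$. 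Both implications of the dichotomy, and their exclusivity, then follow as you say. Two small points worth making explicit in a full write-up: the step from ``$M^k - x_i^{2k} \in C$'' to ``$M^k - x^{2\beta} \in C$'' for mixed monomials needs a short additional telescoping (or an appeal to an SoS form of weighted AM--GM) to keep the degree at $2k$, and the case $K = \emptyset$ should be noted to be consistent with the dichotomy (then $C^* = \{0\}$, so $\overline{C} = \R[x]_{\le m}$, hence $C = \R[x]_{\le m}$ once $\mathrm{int}(C) \ne \emptyset$, so alternative (1) holds vacuously). Neither affects correctness.
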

The Archimedian property is necessary in some settings for strong duality when  $\cA$ is allowed to be nonempty. 
The next result does not require Archimedian property on $\cA$ if it
is non-empty.\footnote{The result in \cite[Theorem 6.1]{Laurent2009-sos} is applicable since $K = \R^d$, which has non-empty interior. 
}
\begin{fact}[{Duality Between SoS Proofs and Pseudoexpectations: The Unconstrained Setting~\cite[Theorem 6.1]{Laurent2009-sos}}]
\label{fact:sos-duality-unconstrained}
Let $p(\cdot)$ be a polynomial of degree at most $m$ for some even $m \in \N$. 
Then exactly one of the following holds: 
(i)  $\sststile{m}{x} p(x) \geq 0$, 
(ii) there is a degree-$m$ pseudoexpectation $\pE$ over $x$ with $\pE [p(x)] < 0$.
\end{fact}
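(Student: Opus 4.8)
The plan is to derive this from conic duality---specifically the bipolar theorem---for the cone of sum-of-squares polynomials. Work inside the finite-dimensional vector space $V = \R[x_1,\dots,x_d]_{\leq m}$, and let $\Sigma \subseteq V$ be the set of polynomials that are sums of squares of polynomials of degree at most $m/2$; this is plainly a convex cone with $0,1 \in \Sigma$. Its dual cone $\Sigma^* = \{L \in V^* : L(\sigma) \geq 0 \text{ for all } \sigma \in \Sigma\}$ is, because $\Sigma$ is generated as a cone by the squares $\{q^2 : \deg q \leq m/2\}$ (so $L(\sum_i q_i^2) = \sum_i L(q_i^2)$ by linearity), exactly the set of linear functionals $L$ on $V$ with $L(q^2) \geq 0$ for every $q$ of degree at most $m/2$---i.e.\ the unnormalized pseudoexpectations. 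The bipolar theorem states that $(C^*)^* = C$ for every nonempty closed convex cone $C$ (under the canonical identification $V^{**} = V$); applying it to $C = \Sigma$ gives: $p \in \Sigma$ if and only if $L(p) \geq 0$ for every unnormalized pseudoexpectation $L$. Equivalently, $p \notin \Sigma$ iff some unnormalized pseudoexpectation is strictly negative on $p$.

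Two bookkeeping steps then finish the argument. First, I would normalize the separating functional: since $1 = 1^2 \in \Sigma$, any $L \in \Sigma^*$ has $L(1) \geq 0$; if $L(1) > 0$, rescale so that $L(1) = 1$; if $L(1) = 0$, replace $L$ by $L + t\cdot\mathrm{ev}_0$ for small $t > 0$, where $\mathrm{ev}_0(p) := p(0)$ is itself a normalized degree-$m$ pseudoexpectation lying in $\Sigma^*$. This keeps the functional in $\Sigma^*$, makes its value on $1$ equal to $t > 0$, and---for $t$ small enough---keeps its value on $p$ strictly negative; rescaling then yields a genuine degree-$m$ pseudoexpectation $\pE$ with $\pE[p] < 0$. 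Second, the two alternatives are mutually exclusive: if $p = \sum_i q_i^2$ with $\deg q_i \leq m/2$ and $\pE$ is any degree-$m$ pseudoexpectation, then $\pE[p] = \sum_i \pE[q_i^2] \geq 0$ by positivity, contradicting $\pE[p] < 0$.

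The one genuinely nontrivial ingredient---and the step I expect to be the main obstacle---is that $\Sigma$ is \emph{closed} in $V$; without this, the bipolar theorem only gives $(\Sigma^*)^* = \overline{\Sigma}$, which downgrades alternative (i) to ``$p + \eps \in \Sigma$ for every $\eps > 0$'' (precisely the form in the Archimedean version, \Cref{fact:sos-duality}). Closedness of the bounded-degree SoS cone over $\R^d$ is classical; I would argue it as follows. By conic Carath\'eodory, every element of $\Sigma$ is a sum of at most $D := \dim V$ squares, so it suffices that the image of $(\R[x]_{\leq m/2})^{D}$ under the continuous map $(q_1,\dots,q_D) \mapsto \sum_i q_i^2$ is closed. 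Suppose $f_n = \sum_i q_{i,n}^2 \to f$ in $V$, and fix a ball $B \subseteq \R^d$; then $\int_B q_{i,n}^2 \leq \int_B f_n$, and the right side converges (it is a continuous linear functional evaluated at $f_n \to f$), hence is bounded. Since $q \mapsto (\int_B q^2)^{1/2}$ is a norm on the finite-dimensional space $\R[x]_{\leq m/2}$ (a polynomial vanishing a.e.\ on $B$ vanishes identically), it is equivalent to the coefficient norm, so the $q_{i,n}$ have uniformly bounded coefficients; passing to a subsequence, $q_{i,n} \to q_i$ for each $i$, whence $f = \sum_i q_i^2 \in \Sigma$. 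With closedness established, the biduality argument above goes through and yields the stated dichotomy.
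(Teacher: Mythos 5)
Your argument is correct, and it is essentially the standard proof underlying the cited Laurent result (which the paper does not spell out): one identifies the dual cone of the degree-$\leq m$ SoS cone $\Sigma$ with the unnormalized degree-$m$ pseudoexpectations, applies the bipolar theorem, and then the whole burden shifts to showing $\Sigma$ is closed. Your closedness argument---conic Carath\'eodory to cap the number of squares at $\dim V$, then domination $\int_B q_{i,n}^2 \leq \int_B f_n$ together with equivalence of the $L^2(B)$ norm and the coefficient norm on $\R[x]_{\leq m/2}$ to extract a convergent subsequence---is the classical one (going back to Robinson/Berg--Christensen--Jensen and reproduced in Laurent's survey). The normalization step is also handled correctly: when $L(1)=0$ you perturb by $t\cdot\mathrm{ev}_0 \in \Sigma^*$, which keeps $L'(q^2)\geq 0$, makes $L'(1)=t>0$, and for $t$ small enough preserves $L'(p)<0$ regardless of the sign of $p(0)$; rescaling then produces a genuine pseudoexpectation. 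The exclusivity of the two alternatives is immediate from $\pE[\sum q_i^2]=\sum\pE[q_i^2]\geq 0$. One cosmetic note: your parenthetical that without closedness one recovers "$p+\eps\in\Sigma$ for all $\eps>0$, precisely the Archimedean form" is the right intuition, but \Cref{fact:sos-duality} is really about the constrained setting with an Archimedean axiom set, where closedness of the quadratic module is the genuine obstruction; in the unconstrained case closedness is free, which is exactly what your argument exploits to upgrade to the clean dichotomy.
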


Beyond the duality above, the only non-trivial property of the pseudoexpectation operator that we shall use is the following
inequality:
\begin{fact}[Pseudoexpectations H\"older; for $t=3$, see {\cite[Corollary A.5]{BarBHKSZ12}}]
Let $m$ be even and let $\pE$ be a degree-$m$ pseudoexpectation over the variables $(g_i)_{i=1}^n$ and $(h_i)_{i=1}^n$. Then
\label{fact:peholder}
\begin{align*}
\pE\left[ \frac{1}{n} \sum_i g_i^{m-1} h_i \right] \lesssim \left( \E\left[ \frac{1}{n} \sum_i g_i^m \right]\right)^{1  -\frac{1}{m}}     \left(\pE\left[\frac{1}{n} \sum_i h_i^m\right]\right)^{1/m} 
\end{align*}

\end{fact}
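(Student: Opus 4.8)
The plan is to deduce \Cref{fact:peholder} from a scalar, sum-of-squares–certified form of Young's inequality, applied coordinatewise and then optimized over a free scaling parameter.

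First I would prove the following scalar fact: for every real $\lambda>0$ and every even $m$, the bivariate polynomial
\[
R_\lambda(g,h)\;:=\;\tfrac{m-1}{m}\,\lambda\,g^{m}\;+\;\tfrac{1}{m\lambda^{m-1}}\,h^{m}\;-\;g^{m-1}h
\]
is a sum of squares of polynomials of degree $m/2$ in the indeterminates $g,h$. Its pointwise nonnegativity on $\R^2$ is a two-line case analysis: if $g^{m-1}h\le 0$ then the first two (nonnegative) terms already dominate; if $g^{m-1}h>0$ then $g,h$ have the same sign, $g^{m-1}h=|g|^{m-1}|h|$, and the inequality is the weighted AM--GM bound for $\lambda|g|^m$ and $\lambda^{-(m-1)}|h|^m$ with weights $\tfrac{m-1}{m},\tfrac1m$. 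Being a nonnegative binary form of even degree $m$, $R_\lambda$ is a sum of squares by Hilbert's classical theorem on nonnegative binary forms, and any such certificate uses only squares of forms of degree $m/2$. (One can also argue more explicitly: the substitution $g\mapsto\lambda^{1/m}g$, $h\mapsto\lambda^{-(m-1)/m}h$ reduces everything to $\lambda=1$, where $(g-h)^2\mid R_1$ and the cofactor is again a nonnegative binary form; for small $m$ this produces a fully explicit decomposition, e.g.\ $R_1=\tfrac14(g-h)^2\big(2g^2+(g+h)^2\big)$ when $m=4$.)

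Granting this, I would sum the scalar identity over $i\in[n]$ in the variables $g_i,h_i$ and divide by $n$, which exhibits a degree-$m$ sum-of-squares proof that $\tfrac{m-1}{m}\lambda\cdot\tfrac1n\sum_i g_i^{m}+\tfrac{1}{m\lambda^{m-1}}\cdot\tfrac1n\sum_i h_i^{m}-\tfrac1n\sum_i g_i^{m-1}h_i\ge 0$. Applying the degree-$m$ pseudoexpectation $\pE$---linear and nonnegative on sums of squares of degree at most $m$---and writing $A:=\pE\big[\tfrac1n\sum_i g_i^{m}\big]$, $B:=\pE\big[\tfrac1n\sum_i h_i^{m}\big]$ (both nonnegative, since $\sum_i g_i^m=\sum_i(g_i^{m/2})^2$ and likewise for the $h_i$), I obtain
\[
\pE\Big[\tfrac1n\textstyle\sum_i g_i^{m-1}h_i\Big]\;\le\;\tfrac{m-1}{m}\,\lambda\,A\;+\;\tfrac{1}{m\lambda^{m-1}}\,B\qquad\text{for every }\lambda>0.
\]

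Finally I would optimize over $\lambda>0$: when $A,B>0$ the choice $\lambda=(B/A)^{1/m}$ makes the right-hand side equal to exactly $A^{1-1/m}B^{1/m}$, which is the claim (in fact with constant $1$, so the stated $\lesssim$ follows trivially); when $B=0$, letting $\lambda\to 0^{+}$ forces the left side to be $\le 0$, and symmetrically when $A=0$ one lets $\lambda\to\infty$. I expect the only step requiring genuine thought to be the scalar claim of the second paragraph---upgrading pointwise nonnegativity of the Young polynomial to a sum-of-squares certificate \emph{of degree $m$}, so that it pairs with a degree-$m$ pseudoexpectation; the $t=3$ instance of this is \cite[Corollary A.5]{BarBHKSZ12}, and the general even-$m$ case is precisely where one invokes Hilbert's theorem (or the explicit $(g-h)^2$-factorization above). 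Everything else is routine.
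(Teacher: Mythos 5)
Your proof is correct and follows the same blueprint as the paper's: certify a scalar Young/AM--GM inequality by a degree-$m$ sum of squares, apply it coordinatewise, sum over $i$, and evaluate under $\pE$. The two write-ups differ in two places, neither changing the essential idea. First, the paper invokes the $m$-variable SoS-certified AM--GM inequality $\prod_{j=1}^m w_j \le \tfrac{1}{m}\sum_j w_j^m$ (citing \cite{BarKS15}), specialized with $m-1$ copies of $\overline g_i$ and one $\overline h_i$, whereas you work directly with the bivariate Young polynomial $R_\lambda$, prove it pointwise nonnegative, and then upgrade to SoS via Hilbert's theorem for nonnegative binary forms (or the explicit $(g-h)^2$ factorization). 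Your bivariate route is self-contained and avoids the $m$-ary AM--GM SoS lemma, at the cost of invoking Hilbert's theorem; the paper's route is arguably more mechanical given the cited lemma. Second, the paper normalizes the indeterminates $\overline g_i = g_i/(\pE[\sum_i g_i^m])^{1/m}$ upfront so that applying $\pE$ makes the right-hand side exactly $1$, whereas you keep a free scalar $\lambda$ and optimize at the end. Your free-parameter treatment handles the degenerate cases $\pE[\sum_i g_i^m]=0$ or $\pE[\sum_i h_i^m]=0$ cleanly (by sending $\lambda \to \infty$ or $\lambda \to 0^+$), a case the paper's normalization implicitly brushes aside. Both give the claim with the sharp constant $1$.
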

\begin{proof}
    We follow the proof strategy of \cite{BarBHKSZ12} with the fact that the AM-GM inequality has an SoS proof; see, for example, ~\cite{BarKS15} and references therein for even older proofs.
In particular, for any even $m$, there exists a degree-$m$ SoS proof in the variables $w$ of the inequality 
\begin{align}
    \label{eq:sos-am-gm}
    \sststile{m}{w_1,\dots,w_m} \qquad \prod_{i=1}^m w_i  \leq \frac{\sum_{i=1}^m w_i^m}{m} 
\end{align}
Fix any feasible pseudoexpectation over the variables $g$ and $h$.
Consider the variables $(\overline{g})_{i=1}^n,(\overline{h})_{i=1}^n$ defined as follows: for all $i\in[n]$, $\overline{g}_i := \frac{g_i}{\left(\pE[\sum_{i=1}^ng_i^m]\right)^{\frac{1}{m}}}$  and $\overline{h}_i = \frac{h_i}{\left(\pE[\sum_{i=1}^nh_i^m]\right)^{\frac{1}{m}}}$.
    Applying \Cref{eq:sos-am-gm} to the variables $\overline{g}_i$ and $\overline{h}_i$ for all $i\in[n]$ separately and taking an average,
    we see that 
    \begin{align*}
    \sststile{m}{\{\overline{g}_{i=1}^n,\overline{h}_{i=1}^n\} } \qquad \frac{1}{n} \sum_{i=1}^n \overline{g}_i^{m-1} \overline{h}_i   \leq \frac{1}{n} \sum_{i=1}^n \left( \frac{(m-1)\overline{g}_i^{m}}{m} + \frac{\overline{h}_i^m}{m} \right)       \,.  
    \end{align*}
    The desired result follows by applying pseudoexpectation $\pE$ to the both sides and noticing that the right hand side is equal to $1$.   
\end{proof}
In fact, as the following claims shows, the above bound is tight in the worst case.

\begin{fact}[Variational Characterization of Certifiable Moments]
\label{fact:lpnorm}
    Let $x_1,\dots,x_n $ be arbitrary vectors in $\R^d$ and $m$ be even.
    Let $\pE$ be an arbitrary degree-$m$ pseduoexpectation over $v:=(v_i)_{i=1}^d$ with $\pE[p(v)] > 0$ for a polynomial $p$ of degree at most $m$.
    Let $\cC$ be the set of all degree-$m$ pseudoexpectations $\pE'$ over $u:= (u_i)_{i=1}^n$ and $v$ with $\pE'[p(v)] > 0$ and $\pE'[\sum_i u_i^m]>0$.
    Then
    \begin{align*}
       \frac{\pE\left[\tfrac{1}{n}\sum_{i=1}^n \langle v,x_i\rangle^m \right]}{\pE[p(v)]} &\leq \left( \sup_{ \substack{\pE' \in \cC}} \pE'\left[\frac{\tfrac{1}{n}\sum_{i=1}^n u_i^{m-1}\langle v,x_i\rangle  }{ \left(\pE'[p(v)]\right)^{\frac{1}{m}} \left(\pE'\left[\tfrac{1}{n} \sum_{i=1}^n u_i^m\right]\right)^{1 - \tfrac{1}{m}} } \right]\right)^m\,\,.
    \end{align*}
\end{fact}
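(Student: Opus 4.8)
The plan is to establish the bound by \emph{producing a single witness} pseudoexpectation $\pE'\in\cC$ at which the bracketed quantity on the right-hand side already equals $\big(\pE[\tfrac1n\sum_{i=1}^n\langle v,x_i\rangle^m]/\pE[p(v)]\big)^{1/m}$. Since the right-hand side is a supremum and $m$ is even, so that $t\mapsto t^m$ is nondecreasing on the nonnegative reals, raising to the $m$-th power then yields the claim. Set $A:=\pE[\tfrac1n\sum_{i=1}^n\langle v,x_i\rangle^m]$ and $B:=\pE[p(v)]>0$. As $m$ is even, each $\langle v,x_i\rangle^m=(\langle v,x_i\rangle^{m/2})^2$ is a square, so $A\ge 0$. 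If $A=0$, then the left-hand side is $0$ while the right-hand side is the $m$-th power of a real number (interpreting the supremum over an empty $\cC$ as $0$ per our convention), hence nonnegative; the inequality is then trivial, so I may assume $A>0$ from now on.

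The heart of the proof is the construction of $\pE'$ by a substitution into $\pE$ that mimics the H\"older-optimal choice $u_i\propto\langle v,x_i\rangle^{m-1}$. Let $\Phi\colon\R[u_1,\dots,u_n,v_1,\dots,v_d]_{\le m}\to\R[v_1,\dots,v_d]_{\le m}$ be the $\R$-algebra homomorphism induced by $u_i\mapsto\langle v,x_i\rangle$ and $v_j\mapsto v_j$. Because each $\langle v,x_i\rangle$ is a linear form in $v$, the map $\Phi$ does not increase total degree, so it indeed lands in $\R[v]_{\le m}$; being a substitution, it satisfies $\Phi(q^2)=(\Phi q)^2$ and fixes constants. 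Define $\pE':=\pE\circ\Phi$. Then $\pE'$ is linear, $\pE'[1]=\pE[1]=1$, and $\pE'[q^2]=\pE[(\Phi q)^2]\ge 0$ for every $q$ of degree at most $m/2$ (as $\Phi q$ then has degree at most $m/2$); thus $\pE'$ is a valid degree-$m$ pseudoexpectation over $(u,v)$. Moreover $\pE'\in\cC$: we have $\pE'[p(v)]=\pE[p(v)]=B>0$ since $p$ involves only $v$, and $\pE'[\tfrac1n\sum_i u_i^m]=\pE[\tfrac1n\sum_i\langle v,x_i\rangle^m]=A>0$.

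Finally I would evaluate the bracketed expression at this $\pE'$. The denominator $(\pE'[p(v)])^{1/m}(\pE'[\tfrac1n\sum_i u_i^m])^{1-1/m}=B^{1/m}A^{1-1/m}$ is a positive scalar and pulls out of $\pE'$ by linearity. For the numerator, $\Phi$ sends $u_i^{m-1}\langle v,x_i\rangle$ to $\langle v,x_i\rangle^{m-1}\langle v,x_i\rangle=\langle v,x_i\rangle^m$, so $\pE'[\tfrac1n\sum_i u_i^{m-1}\langle v,x_i\rangle]=\pE[\tfrac1n\sum_i\langle v,x_i\rangle^m]=A$. Hence the bracketed quantity equals $A/(B^{1/m}A^{1-1/m})=(A/B)^{1/m}\ge 0$, so the supremum over $\cC$ is at least $(A/B)^{1/m}$; taking $m$-th powers gives that the right-hand side is at least $A/B$, i.e.\ the left-hand side.

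The last two steps are routine calculations with the linearity of $\pE'$. The step that requires care — the main obstacle — is verifying that $\Phi$ truly yields a genuine degree-$m$ pseudoexpectation (total degree is not inflated, and positivity on squares is inherited) and that $\pE'$ meets both strict-positivity constraints defining $\cC$; one must also separately handle the degenerate case $A=0$, in which the natural witness $\pE\circ\Phi$ fails to lie in $\cC$. (Conversely, applying the pseudo-H\"older inequality of \Cref{fact:peholder} to an arbitrary $\pE'\in\cC$ gives the reverse inequality up to a universal constant, so the displayed bound is essentially an identity — which is the sense in which \Cref{fact:peholder} is tight.)
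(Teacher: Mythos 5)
Your proof is correct and follows essentially the same route as the paper's: both construct a witness pseudoexpectation $\pE' = \pE\circ\Phi$ via the substitution $u_i\mapsto\langle v,x_i\rangle$, $v_j\mapsto v_j$, and observe that the bracketed quantity then equals $(A/B)^{1/m}$ with $A=\pE[\tfrac1n\sum_i\langle v,x_i\rangle^m]$, $B=\pE[p(v)]$. One small slip in your handling of the degenerate case $A=0$: the set $\cC$ is defined independently of $A$ and is generally nonempty, so ``supremum over an empty $\cC$'' is not the right justification; the inequality holds there simply because $m$ is even, so the $m$-th power of any real supremum is nonnegative, which is exactly what the paper's proof notes.
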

\begin{proof}
First, consider the setting that $\pE[\langle v,x_i\rangle^m] > 0$ for some $i \in [n]$; otherwise, the left hand side is zero and the right hand side is non-negative since $m$ is even. 
Consider the pseudoexpectation $\pE'$ over $v$ and $u$,
which is defined as follows:
for any polynomial $p$ over $u$ and $v$,
set $\pE'[p(u_1,\dots,u_n,v_1,\dots,v_d)]:= \pE[p(v^\top x_1,\dots,v^\top x_n, v_1,\dots,v_d)]$, i.e., we set $u_i$ to be $v^\top x_i$ for all $i\in[n]$.
It can be seen that $\pE'$ is a valid pseudoexpectation over $v$ and $u$, belonging to $\cC$.
For this choice, the right hand is exactly equal to the left hand side.
\end{proof}

\begin{remark}[Unit Norm Constraint in the Axiom]
    \label{rem:unit-norm}
    Prior works~\cite{HopLi18,KotSS18} focus on a stronger proof system 
    by incorporating $\|v\|_2=1$ into the axiom. To be precise, these works 
    define a distribution $P$ over $\R^d$ with mean $\mu \in \R^d$ 
    to be $(m,B_m)$-certifiably-bounded if 
    $\{\|v\|_2^2=1\} \sststile{m'}{v}B_m^m - \E_{X \sim P}[\langle v, X - \mu\rangle^m ] \geq 0$, 
    for some $m' = O(m)$.
    Instead of the degree of the SoS proof being restricted to $m'=O(m)$, 
    one could allow SoS proofs of larger degrees 
    at the expense of higher running times 
    in the resulting algorithms (usually scaling as $(dn)^{\poly(m')}$).
    In the presence of the $\|v\|_2^2=1$ constraint, 
    allowing $m'$ to be larger than $m$ might help 
    (as opposed to the unconstrained proof system, 
    where degree-$m$ proofs are necessary and sufficient).     
    Since we work with a weaker proof system (without the unit norm constraint), \Cref{thm:certifiability,thm:certifiability-hyper} 
    apply to the definitions in  \cite{HopLi18,KotSS18} 
    and the resulting algorithmic applications.
\end{remark}

\section{Meta Comparison Theorem: Inheriting SoS Certificates} 
\label{sec:meta-comp}

Our main technical workhorse is the following comparison result about pseudoexpectations between subgaussian distributions and Gaussians,
which is a consequence of \Cref{fact:canonical-subgaussian,fact:canonical-subgaussian-moment}.
While the statement is notational heavy, it simply states that the worst-case pseuedopextation of any polynomial that is linear in the subgaussian data is upper bounded by analogous Gaussian data (even after normalizing with (pseudo)-expected value of other polynomials).
In the first reading, we suggest skipping the statements about the moment bounds $\|\cdot\|_{L_q}$ below.

\begin{theorem}[Comparison Theorem for Pseudoexpectations of Linear Functions]
\label{thm:comparison}
Let $P_1,\dots,P_n$ be independent zero-mean $s$-subgaussian distributions on $\R^d$.\footnote{More generally, let $(Y_1,\dots,Y_n)$ be jointly $s$-subgaussian.}
Define $Q$ to be $\cN(0,\bI_d)$ over $\R^d$.

Let $u$ and $v$ be vector-valued variables of size $N$ and $d$, respectively.
Fix an even $t\in \N$ and $a_1,\dots,a_{n'} \in \R$. 
Let $h_1(\cdot),\dots,h_{n'}(\cdot)$ be $n'$ polynomials in $u$ and $v$ with degrees at most $t$, and 
let $p_1(\cdot),\dots,p_n(\cdot)$ be fixed polynomials over $u$ and $v$ with degrees less than $t-1$.

Let $\cT$ be any set of degree-$t$ pseudoexpectations over $u, v$.\footnote{We emphasize that the set $\cT$ is defined before looking at the samples $Y_{1:n}$ below.}
Then
\begin{align}
\label{eq:comparison-1}
    \E_{Y_{1:n} \sim \prod_{i=1}^nP_i}&
    \left[ \sup_{ \substack {\pE \in \cT: \\
    \pE[h_j(u,v)] \neq 0 \,\, \forall j\in[n']}} \pE_{u, v,w}\left[ \frac{\tfrac{1}{n}\sum_{i=1}^n p_i(u,v) \langle Y_i, v\rangle}{\prod_{j=1}^{n'}\left(\pE\left[h_j(u,v)\right]\right)^{a_j}} \right] \right]\\
&\lesssim
    s \cdot \E_{G_{1:n} \sim Q^{\otimes n}}
    \left[ \sup_{ \substack {\pE \in \cT: \\
    \pE[h_j(u,v)] \neq 0 \,\, \forall j\in[n']}} \pE_{u, v,w}\left[ \frac{\tfrac{1}{n}\sum_{i=1}^n p_i(u,v) \langle G_i, v\rangle}{\prod_{j=1}^{n'}\left(\pE\left[h_j(u,v)\right]\right)^{a_j}} \right] \right] \;. \nonumber 
\end{align}
Moreover, if $Z$ denotes the random variable inside the expectation on the left hand side above and $R$ denotes the number on the right hand side, then
$Z$ satisfies the following moment bound  $\forall q \in \N$:
\begin{align}
\label{eq:comparison-2}
    &\left\|Z \right\|_{L_q(Y_{1:n})}\lesssim
    R +  s \sqrt{q} \cdot  \sup_{ \substack {\pE \in \cT: \\
    \pE[h_j(u,v)] \neq 0 \forall j\in[n']}}  \frac{\left( \sum_{i=1}^n \sum_{j=1}^d \left(\pE\left[\frac{p_i(u,v) v_j}{n}\right]\right)^2 
    \right)^{\tfrac{1}{2}}}{\prod_{j=1}^{n'}\left|\pE\left[h_j(u,v)\right]\right|^{a_j}}  \,, 
\end{align}
where $Y_1,\dots,Y_n$ are again sampled from $\prod_{i=1}^nP_i$.
\end{theorem}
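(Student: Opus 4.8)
The idea is that, although $\cT$, the $p_i$'s, and the $h_j$'s involve nonlinear objects (polynomials and pseudoexpectations), the quantity whose supremum appears in \eqref{eq:comparison-1} is, for each fixed $\pE$, an honestly \emph{linear} function of the concatenated data vector $Y_{1:n}\in\R^{nd}$, indexed by a \emph{sample-independent} set. Once this is made explicit, the theorem is an immediate consequence of Talagrand's majorizing-measures comparison (\Cref{fact:canonical-subgaussian,fact:canonical-subgaussian-moment}).

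First I would linearize. Fix $\pE\in\cT$ with $\pE[h_j(u,v)]\neq0$ for all $j\in[n']$. Since each $p_i$ has degree less than $t-1$, the polynomial $p_i(u,v)v_k$ has degree at most $t$, so a degree-$t$ pseudoexpectation can evaluate it; writing $\iprod{Y_i,v}=\sum_{k=1}^d(Y_i)_k v_k$ and using linearity of $\pE$,
\begin{align*}
\pE\!\left[\frac{\tfrac1n\sum_{i=1}^n p_i(u,v)\iprod{Y_i,v}}{\prod_{j=1}^{n'}(\pE[h_j(u,v)])^{a_j}}\right]
= \sum_{i=1}^n\sum_{k=1}^d \frac{\pE[p_i(u,v)v_k]/n}{\prod_{j=1}^{n'}(\pE[h_j(u,v)])^{a_j}}\,(Y_i)_k
= \iprod{t_\pE,\,Y_{1:n}},
\end{align*}
where $t_\pE\in\R^{nd}$ is the vector whose $(i,k)$-th entry is $\tfrac{\pE[p_i(u,v)v_k]/n}{\prod_{j=1}^{n'}(\pE[h_j(u,v)])^{a_j}}$, and the identical formula holds with $Y_i$ replaced by $G_i$. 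The key point is that $t_\pE$ depends only on $\pE$ and on the fixed data $p_i,h_j,a_j$, never on the random samples; note also that $\|t_\pE\|_2$ is exactly the ratio appearing inside the second supremum in \eqref{eq:comparison-2}.

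Next I would set $T:=\{\,t_\pE : \pE\in\cT,\ \pE[h_j(u,v)]\neq0\ \forall j\in[n']\,\}\subseteq\R^{nd}$, which by the hypothesis that $\cT$ is fixed before the samples are drawn is a \emph{deterministic} subset of $\R^{nd}$. Then the left-hand side of \eqref{eq:comparison-1} equals $\E_{Y_{1:n}}[\sup_{t\in T}\iprod{t,Y_{1:n}}]$, the number $R$ on the right-hand side equals $s\cdot\E_{G_{1:n}\sim Q^{\otimes n}}[\sup_{t\in T}\iprod{t,G_{1:n}}]$, and $Z=\sup_{t\in T}\iprod{t,Y_{1:n}}$ (all using the stated convention that the supremum over $\emptyset$ is $0$). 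By \Cref{claim:product}, $Y_{1:n}$ is a zero-mean $O(s)$-subgaussian vector on $\R^{nd}$, whereas $G_{1:n}\sim\cN(0,\bI_{nd})$; applying \Cref{fact:canonical-subgaussian} to the fixed set $T$ yields \eqref{eq:comparison-1}, and applying \Cref{fact:canonical-subgaussian-moment} yields, for every $q\geq1$, $\|Z\|_{L_q}\lesssim s\cdot\E_{G_{1:n}}[\sup_{t\in T}\iprod{t,G_{1:n}}]+s\sqrt q\cdot\sup_{t\in T}\|t\|_2\lesssim R+s\sqrt q\cdot\sup_{t\in T}\|t\|_2$, which is precisely \eqref{eq:comparison-2}. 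If $(Y_1,\dots,Y_n)$ is merely jointly $s$-subgaussian, the appeal to \Cref{claim:product} is skipped and nothing else changes.

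The substantive part of the argument is the bookkeeping in the first two steps, not any computation: checking that $\pE\mapsto t_\pE$ is well-defined (the degree count for $p_i v_k$), that it is exactly linear in $Y_{1:n}$, and---most importantly---that the resulting index set $T$ is a single fixed set, identical on the subgaussian and the Gaussian sides, so that \Cref{fact:canonical-subgaussian,fact:canonical-subgaussian-moment} apply with the \emph{same} $T$ on both sides. The remaining points are routine: measurability of $Z$ (a supremum of linear, hence convex, functions of $Y_{1:n}$), the degenerate case $T=\emptyset$ (both sides vanish), and the case where $T$ is unbounded (the asserted inequalities then hold vacuously).
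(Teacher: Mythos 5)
Your proposal is correct and follows the paper's proof essentially verbatim: linearize by defining the vector $t_\pE\in\R^{nd}$ so that the pseudoexpectation ratio equals $\iprod{t_\pE,Y_{1:n}}$, collect these into a sample-independent set $T$, invoke \Cref{claim:product} to get that $Y_{1:n}$ is $O(s)$-subgaussian, and apply \Cref{fact:canonical-subgaussian,fact:canonical-subgaussian-moment} to the fixed $T$. Your added remarks on measurability, the empty case, and identifying $\|t_\pE\|_2$ with the second supremum in \eqref{eq:comparison-2} are correct housekeeping that the paper leaves implicit.
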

\begin{proof}

    For any $\pE \in \cT$ over $u$ and $v$,
    we denote $t_{\pE} \in \R^{nd}$ to be the concatentation of the vectors $\left(\tfrac{ \tfrac{1}{n} \pE[p_1(u,v) v]}{\prod_{j=1}^{n'}\left(\pE\left[h_j(u,v)\right]\right)^{a_j}},\dots, \tfrac{\tfrac{1}{n}\pE[p_n(u,v) v]}{\prod_{j=1}^{n'}\left(\pE\left[h_j(u,v)\right]\right)^{a_j}}\right)$, with each component being a $d$-dimensional vector.
    We set $T$ to be the collection of all such vectors obtained by pseudoexpectations $\pE$ in $\cT$ with $\pE[h(u,v)] \neq 0$.
    Importantly, the set $T$ does not depend on the randomness of $Y_i$'s; it is defined purely in terms of the pseudoexpectations $\cT$ and the polynomials $p_i$ and $h$, which themselves were defined without looking at the samples $Y_i$'s.
    
    If $Y' \in \R^{nd}$ denotes the concatenation of $(Y_1,\dots,Y_n)$ and $G' \in \R^{nd}$ denotes the concatenation of $(G_1,\dots,G_n)$,
    then it can be seen that 
    \begin{align}
    \langle Y', t_{\pE}\rangle &= \pE_{u, v,w}\left[ \frac{\tfrac{1}{n}\sum_{i=1}^n p_i(u,v) \langle Y_i, v\rangle}{\prod_{j=1}^{n'}\left(\pE\left[h_j(u,v)\right]\right)^{a_j}} \right]\,\qquad \text{and}\\
    \langle G', t_{\pE}\rangle &= \pE_{u, v,w}\left[ \frac{\tfrac{1}{n}\sum_{i=1}^n p_i(u,v) \langle G_i, v\rangle}{\prod_{j=1}^{n'}\left(\pE\left[h_j(u,v)\right]\right)^{a_j}} \right] \nonumber\,.
    \end{align}

    Thus, the desired inequality in \Cref{eq:comparison-1} is equivalent to
    \begin{align}
    \label{eq:rephrased-claim}
        \E_{Y' }\left[ \sup_{t \in T} \langle t, Y'\rangle \right]
        \lesssim        
        \E_{G' }\left[ \sup_{t \in T} \langle t,  G' \rangle \right]\,.
    \end{align}
    By \Cref{claim:product}, $Y'$ is an $O(s)$-subgaussian vector. Since the distribution of $G'$ is simply $\cN(0,\bI_{nd})$, the desired results \Cref{eq:rephrased-claim} and \Cref{eq:comparison-2} follow from \Cref{fact:canonical-subgaussian,fact:canonical-subgaussian-moment}.

\end{proof}
  
\section{Certifiable Subgaussianity and Resilience} %
\label{sec:resilience_to_certifiable_moments}
We explore the consequences of the meta comparison theorem (\Cref{thm:comparison}) in this section to establish \Cref{thm:certifiability}.
First, we introduce the notion of resilience and certifiable resilience in \Cref{sec:warmup}, 
which motivates the notion of certifiable generalized resilience in \Cref{sec:certi-gen-resilience}.
Next, we connect certifiable generalized resilience to certifiable empirical moments in \Cref{sec:cert-emp-moments}, and finally establish \Cref{thm:certifiability} in \Cref{sec:cert-dist-moments}.

\subsection{Warmup: Certifiable Resilience and Efficient Robust Mean Estimation}
\label{sec:warmup}
Resilience is a natural concept that arises in robust statistics~\cite{SteCV18}, which is related to  the stability conditions introduced in \cite{DiaKKLMS16-focs}.
In fact, the property of resilience information-theoretically suffices for robust mean estimation.
Since the bounded moments condition in \Cref{eq:our-tech-bdd-moments-samples} is non-linear (in the $Y_i$'s), we focus on the resilience property first, which happens to be linear.

\begin{definition}[Resilience]
\label{def:resilience-basic}
Let $S$ be a multiset $S = \{y_1,\dots,y_n\} \subset \R^d$, $\eps > 0$, and $\delta\geq 0$.
We say that $S$ satisfies $(\epsilon, \delta)$-resilience, or is $(\eps,\delta)$-resilient, if there exists $\mu \in \R^d$
such that for all subsets $S' \subset S$ with $|S'| \leq \eps |S|$ and for all unit vectors $v$, it holds that $\frac{1}{n}\sum_{x \in S'} \langle x -\mu, v\rangle  \leq \delta$. 
\end{definition}
For the purpose of obtaining 
an efficient algorithm, we need that the resilience can be certified (via a low-degree SoS proof). This leads to the following definition of certifiable resilience: 
\begin{definition}[Certifiable Resilience]
\label{def:resilience}
Let $S$ be a multiset $S = \{y_1,\dots,y_n\} \subset \R^d$, $\eps > 0$, $\delta\geq 0$, and $m \in \N$.
We say that $S$ is $(\eps,\delta,m)$-certifiably-resilient  if there exists a degree-$m$ SoS proof of
\begin{align}
    \label{eq:def-certifiable-resilience}
\tfrac{1}{n}\sum_i \langle v, w_i (x_i -  \mu) \rangle \leq \delta
\end{align}
 in the variables $(w_i)_{i=1}^n,(v_i)_{i=1}^d$ of degree $O(m)$ under the constraints: (i) $\sum v_i^2 = 1$, (ii) $\sum_{i=1}^n w_i \leq \eps n$, and (iii) for all $i \in [n]$, $w_i^2 = w_i$.   
\end{definition}

\looseness=-1Encouragingly, the polynomial inequality in the certifiable resilience condition \Cref{eq:def-certifiable-resilience} is linear in the empirical samples $Y_1,\dots,Y_n$, permitting the use of \Cref{thm:comparison}.
Leveraging the fact that a set of $n=\poly(d^m/\eps)$ i.i.d.\ Gaussian samples  is $(\eps,m,\sqrt{m}\eps^{1-1/m})$-certifiably-resilient, \Cref{thm:comparison} implies that a set of $\poly(d^m)$ i.i.d. samples from a subgaussian distribution is also $(\eps,\sqrt{m}\eps^{1 - \frac{1}{m}},m)$-{certifiably} resilient; See the next subsection for details. 
This structural result already suffices to overcome the $\sqrt{\epsilon}$ barrier for robust mean estimation  
of a generic subgaussian distribution; See also \Cref{rem:cert-res-all-samples}. 
Hence, while certifying $(\eps, o(\sqrt{\eps}))$-resilience is conjectured to be computationally hard for, say, $(C, 1000)$-bounded distributions~\cite{HopLi19},
the above discussion shows that subgaussian distributions are not the hard instances.

\subsection{Certifiable Generalized Resilience}
\label{sec:certi-gen-resilience}

With certifiable resilience (\Cref{def:resilience}) as motivation, we begin by certifying \emph{generalized} resilience, which is a stronger property.
In the first reading, we recommend the reader to skip the statement about the $L_q$ moment bounds in \Cref{eq:multi-scale-resilience-moment}.

\begin{theorem}
\label{thm:multi-scale-resilience}
Let $P$ be an $s$-subgaussian distribution over $\R^d$ with mean $\mu$ and let $n \geq \poly(d^m)$.

For a fixed $N \in \N$,  Let $u = (u_i)_{i=1}^N$ and $v=(v_i)_{i=1}^d$ be variables and let $t$ be even.
For a fixed $n\in \N$ and degree $m \in \N$, fix the polynomials $p_1,\dots,p_n$ over $u,v$ of degree at most $m'$. 
Let $\cT$ be any fixed set of degree $mm'$ pseudoexpectations over the variables $u,v$.%
Then  
\begin{align*}
\E_{Y_{1:n}}\left[
\max_{\substack{\pE \in \cT: \\
\pE[\|v\|_2^m]>0,\\
\pE\left[\sum_{i=1}^n p_i^m(u,v)\right] >0
}}
\frac{\pE\left[\tfrac{1}{n}\sum_{i=1}^n p_i^{m-1}(u,v) \langle Y_i ,v\rangle\right]}{\left(\pE[\|v\|_2^m]\right)^{1/m}\left( \pE \left[ \frac{1}{n}\sum_i p_i^{m}(u,v)\right] \right)^{1 - \frac{1}{m}}}
       \right]
       &\lesssim s \sqrt{m}\,\,\,.
\end{align*}
Moreover, if we define $Z$ to be the random variable inside the expectation above, 
then the following moment bound is true for all  $q \geq 1$: 
\begin{align}
\label{eq:multi-scale-resilience-moment}
\left\|Z       \right\|_{L_q(Y_{1:n})}
       \lesssim s \sqrt{m} +
       s \sqrt{q} \frac{\sqrt{d}}{n^{\frac{1}{m}}} \,.
\end{align}
\end{theorem}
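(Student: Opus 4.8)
The plan is to apply the Comparison Theorem (\Cref{thm:comparison}) to reduce the subgaussian statement to the analogous Gaussian one, and then invoke certifiable subgaussianity of $\cN(0,\bI_d)$ (which is classical) to bound the Gaussian side. Concretely, I would first rewrite the quantity inside the expectation using \Cref{fact:peholder}: the nonlinear object $\pE[\tfrac{1}{n}\sum_i p_i^m(u,v)\langle Y_i,v\rangle]$-type expressions are not directly linear in $Y_i$, so I introduce fresh variables $w = (w_i)_{i=1}^n$ and pass to pseudoexpectations $\pE'$ over $u,v,w$ that extend the given $\pE$. By the pseudoexpectation H\"older inequality applied with $g_i = p_i(u,v)$ and $h_i = w_i$ (or rather via \Cref{fact:lpnorm}), the ratio $\pE[\tfrac1n\sum_i p_i^{m-1}(u,v)\langle Y_i,v\rangle]/((\pE[\|v\|_2^m])^{1/m}(\pE[\tfrac1n\sum_i p_i^m(u,v)])^{1-1/m})$ equals (up to constants, and after taking the supremum over extensions) a supremum of the \emph{linear}-in-$Y_i$ quantity $\pE'[\tfrac1n\sum_i w_i^{m-1} p_i(u,v)\langle Y_i,v\rangle]$ divided by the normalizer $(\pE'[\|v\|_2^m])^{1/m}(\pE'[\tfrac1n\sum_i w_i^m])^{1-1/m}(\pE'[\tfrac1n\sum_i p_i^m(u,v)])^{\text{something}}$ — the point being that after this lifting the numerator is linear in $Y_i$ while the denominator is a product of powers of $\pE'$-values of polynomials not involving $Y_i$.

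Once in that form, I would set $\cT'$ to be the set of all degree-$(mm')$ (suitably, degree-$mm'$ in the enlarged variable set) pseudoexpectations over $u,v,w$ that extend some $\pE \in \cT$, with the polynomials $h_j$ in the denominator being $\|v\|_2^m$, $\tfrac1n\sum_i w_i^m$, and $\tfrac1n\sum_i p_i^m(u,v)$, and the $p_i$ of \Cref{thm:comparison} being $w_i^{m-1}p_i(u,v)$. Applying \Cref{thm:comparison} (both \eqref{eq:comparison-1} and the moment bound \eqref{eq:comparison-2}) replaces the samples $Y_{1:n} \sim P^{\otimes n}$ by Gaussians $G_{1:n} \sim \cN(0,\bI_d)^{\otimes n}$ at the cost of a factor $O(s)$, and reading the $L_q$ bound \eqref{eq:comparison-2}, the $q$-dependent term becomes $s\sqrt q$ times the diameter of the associated set $T$, which is $\sup_{\pE'} (\sum_{i,j}(\pE'[\tfrac1n w_i^{m-1}p_i(u,v)v_j])^2)^{1/2}/(\text{normalizer})$. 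The diameter is easily bounded: by pseudoexpectation Cauchy--Schwarz each term $\pE'[\tfrac1n w_i^{m-1}p_i(u,v)v_j]$ is controlled, and summing over $i \in [n]$ and $j \in [d]$ and dividing by $n$ gives the claimed $\sqrt d / n^{1/m}$ scaling (the $n^{1/m}$ coming from the $(\pE'[\tfrac1n\sum_i w_i^m])^{1-1/m}$ factor in the denominator interacting with the $\ell_2$ aggregation over the $n$ coordinates). The expectation term on the Gaussian side is then, undoing the H\"older lifting via \Cref{fact:lpnorm}, exactly $\E_{G_{1:n}}[\max_{\pE} \pE[\tfrac1n\sum_i p_i^{m-1}(u,v)\langle G_i,v\rangle]/((\pE[\|v\|_2^m])^{1/m}(\pE[\tfrac1n\sum_i p_i^m(u,v)])^{1-1/m})]$, which is $\lesssim \sqrt m$ by the known certifiable subgaussianity of the standard Gaussian together with a uniform-convergence/sample-complexity argument valid for $n \geq \poly(d^m)$ — this is the Gaussian base case that the paper treats separately.

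The main obstacle I anticipate is getting the H\"older-lifting step \eqref{eq:our-techniques-pseudo-holder} to be an exact equality (not just an inequality) in the pseudoexpectation world, so that passing back and forth between the nonlinear and linear formulations is lossless and the constant $s\sqrt m$ is preserved. The subtlety is that \Cref{fact:peholder} only gives one direction ($\leq$), and the matching lower bound requires exhibiting a specific extension $\pE'$ (namely the one that "sets $w_i = p_i(u,v)$", as in the proof of \Cref{fact:lpnorm}) and checking it is a valid pseudoexpectation on the enlarged variable set of degree $mm'$; one must verify that substituting a degree-$m'$ polynomial for each $w_i$ preserves the positivity constraints, which works precisely because we allowed degree $mm'$ in the enlarged system. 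A secondary technical point is carefully tracking the normalization constants through the whitening: the constant $C$ in $Cs\sqrt m$ must not blow up with $m$, which is why \eqref{eq:gen-ch-tal-conc}/\Cref{fact:canonical-subgaussian-moment} — with its $\sqrt q$ rather than $q$ tail dependence — is used rather than the cruder bound, and one has to make sure the transition from $L_q$ control to an in-expectation statement is done with $q = \Theta(m)$ so the combined factor is $(Cs\sqrt m)$ and not $(Cms)$.
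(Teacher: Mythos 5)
Your proposal misidentifies the structure of the quantity being bounded, and as a result it inserts a Hölder lifting step that is both unnecessary and, as formulated, not applicable here. Look at the expression
$\pE\bigl[\tfrac1n\sum_i p_i^{m-1}(u,v)\langle Y_i,v\rangle\bigr]$: the polynomials $p_i$ are polynomials in the indeterminates $u,v$ only, so each sample $Y_i$ appears exactly once, linearly. This is already a linear form in the concatenated sample vector $Y_{1:n}$, and \Cref{thm:comparison} applies to it \emph{directly}, with $p_i^{m-1}(u,v)$ playing the role of the ``$p_i$'' of that theorem and with the two normalizers $\|v\|_2^m$ and $\tfrac1n\sum_i p_i^m(u,v)$ serving as the $h_j$'s (with exponents $1/m$ and $1-1/m$). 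No auxiliary variables $w$ and no invocation of \Cref{fact:lpnorm} are needed or even well-posed here --- \Cref{fact:lpnorm} rewrites a degree-$m$ dependence $\pE[\tfrac1n\sum_i\langle v,x_i\rangle^m]$ as a sup over extensions, and the quantity at hand is not of that form. The lift-then-unlift plan you describe is in fact the strategy for the \emph{next} result (\Cref{cor:certifiable-moments}), where one starts from the genuinely nonlinear empirical moment $\pE[\tfrac1n\sum_i\langle v,Y_i\rangle^m]$ and uses \Cref{fact:lpnorm} to reduce it to the linear form handled by \Cref{thm:multi-scale-resilience}. Similarly, the $q=\Theta(m)$ device you worry about belongs to that later corollary, not to this theorem, where both the in-expectation statement and the $L_q$ statement come straight out of \Cref{eq:comparison-1} and \Cref{eq:comparison-2} respectively.

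After the direct application of \Cref{thm:comparison}, the remaining work is exactly where you gesture but do not fill in. On the Gaussian side one applies the one-sided pseudoexpectation H\"older bound (\Cref{fact:peholder}) to the numerator, which cancels the $(\pE[\tfrac1n\sum_i p_i^m])^{1-1/m}$ normalizer and leaves the known Gaussian bound $\E_G[\max_\pE (\pE[\tfrac1n\sum_i\langle G_i,v\rangle^m])^{1/m}/(\pE[\|v\|_2^m])^{1/m}]\lesssim\sqrt m$ (cited from prior work, for $n\geq\poly(d^m)$); here no exactness of the H\"older step is required, contrary to what you flag as ``the main obstacle.'' For the weak-deviation term your ``pseudoexpectation Cauchy--Schwarz'' hand-wave is not enough: one needs \Cref{fact:peholder} entrywise to bound $\pE[p_i^{m-1}(u,v)v_j]\leq(\pE[p_i^m(u,v)])^{1-1/m}(\pE[v_j^m])^{1/m}$, then the elementary SoS fact $\pE[v_j^m]\leq\pE[\|v\|_2^m]$ to sum over $j\in[d]$, and the inequality $\|x\|_r\leq\|x\|_1$ (with $r=2-2/m\geq1$) to sum over $i\in[n]$, yielding the stated $\sqrt d/n^{1/m}$.
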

To see that resilience (\Cref{def:resilience}) is a special case of 
\Cref{thm:multi-scale-resilience}, choose the polynomial $p_i(u,v):=u_i$ and enforce 
$\cT$ to satisfy the following constraints: (i) binary constraints $u_i^2=u_i$ for all $i$, 
(ii)  the cardinality constraint $\sum_i u_i^2 \leq \eps/n$ to $\cC$, 
and (iii) the unit vector constraint on $v$: $\sum v_i^2=1$.
For this special case, the result states that (on average), 
the pseudoexpectation of any $\pE[\tfrac{1}{n}\sum_i u_i \langle Y_i,v\rangle ]$ 
is at most $\sqrt{m}\eps^{1-1/m}$ given $\poly(d^m)$ i.i.d.\ samples from a subgaussian 
distribution, where we use that a set of $\poly(d^m)$ i.i.d.\ samples 
from a Gaussian distribution is $(\eps,\sqrt{m}\eps^{1-1/m},O(m))$-certifiably 
resilient for all $\eps$.

Some remarks are in order.
\begin{remark}[Certifiable Resilience at All Sample Sizes]
\label{rem:cert-res-all-samples}
In fact, the same  argument implies that for any arbitrary $n$ 
(not necessarily bigger than $\poly(d^m)$) and even $m$, 
the certifiable resilience parameter $\delta=\delta(\eps,m)$ of a set of n i.i.d.\ samples 
from a  subgaussian distribution is bounded above  
by that of Gaussians (up to a constant factor).  
\end{remark}

\begin{remark}[Certifiable Resilience Versus Robust Mean Estimation]
Since certifiable resilience suffices for computationally efficient robust mean estimation, 
combining \Cref{rem:cert-res-all-samples}, \Cref{rem:tradeoffs-robust-mean}, 
and the algorithmic result of \cite{DiaKKLMS16-focs} shows 
that certifying resilience of Gaussian data is computationally harder (under certain conjectures) 
than robust mean estimation of Gaussian inliers.\footnote{See also \cite{KotMZ22} 
who provided efficient robust mean estimation of Gaussians without certifying resilience.}
\end{remark}

We now provide the proof of \Cref{thm:multi-scale-resilience}.
\begin{proof}
By \Cref{thm:comparison},
it suffices to upper bound the expression for Gaussians (up to a factor of $s$).
Applying pseudoexpectation H\"older's inequality (\Cref{fact:peholder}) to the numerator,
we obtain that
\begin{align*}
       \E_{G_1,\dots,G_n \sim Q^{\otimes n}}&
    \left[
        \max_{\substack{\pE \in \cT: \\
    \pE[\|v\|_2^m]>0,\\
    \pE\left[\sum_{i=1}^n p_i^m(u,v)\right] >0
    }} 
        \pE_{u, v}\left[ \frac{\tfrac{1}{n}\sum_{i=1}^n p_i^{m-1}(u,v) \langle G_i, v\rangle}{
    \left(\pE[\|v\|_2^m]\right)^{1/m}
    \left( \pE \left[ \frac{1}{n}\sum_i p_i^{m}(u,v)\right] \right)^{1 - \frac{1}{m}}  }  \right]\right]\\ 
&\leq       \E_{G_1,\dots,G_n \sim Q^{\otimes n}}
        \left[ 
        \max_{\substack{\pE \in \cT: \\
    \pE[\|v\|_2^m]>0,\\
    \pE\left[\sum_{i=1}^n p_i^m(u,v)\right] >0
    }}
    \frac{\left( \pE\left[\frac{1}{n}\sum_i \langle G_i, v\rangle^m \right]  \right)^{\frac{1}{m}} \left( \pE\left[\frac{1}{n} \sum_i p_i^m(u,v)\right] \right)^{1 - \frac{1}{m}}}{\left(\pE[\|v\|_2^m]\right)^{1/m} \left( \pE \left[ \frac{1}{n}\sum_i p_i^{m}(u,v)\right] \right)^{1 - \frac{1}{m}}} \right]\\
    &=       \E_{G_1,\dots,G_n \sim Q^{\otimes n}}
    \left[\max_{\substack{\pE \in \cT: \\
        \pE[\|v\|_2^m]>0
        }}
    \frac{\left( \pE\left[\frac{1}{n}\sum_i \langle G_i, v\rangle^m \right]\right)^{\tfrac{1}{m}}}{\left(\pE[\|v\|_2^m]\right)^{1/m}}\right]\\
    &\lesssim \sqrt{m},
\end{align*}
where the last step uses that the empirical distribution over a large enough 
set of i.i.d.\ samples from the Gaussian distribution is certifiably bounded~\cite{KotSS18,HopLi18}.

We now control the \emph{weak deviation term}---the term that is multiplied to $s\sqrt{q}$--- in \Cref{eq:comparison-2} as follows:
for any $\pE \in \cT$, the corresponding term is the square root of the expression below
\begin{align*}
     \frac{\sum_{i=1}^n \sum_{j=1}^d \left(\pE\left[\frac{1}{n}p_i^{m-1}(u,v) v_j\right] 
    \right)^{2}}{ \left(\pE[\|v\|_2^m]\right)^{2/m}\left(\pE\left[ \frac{1}{n}\sum_i p_i^{m}(u,v)\right]\right)^{2\left(1 - \frac{1}{m}\right)} } 
&= \frac{1}{n^{\frac{2}{m}}}      \frac{\sum_{i=1}^n \sum_{j=1}^d \left(\pE\left[p_i^{m-1}(u,v) v_j\right] 
    \right)^{2}}{ \left(\pE[\|v\|_2^m]\right)^{2/m} \left(\pE\left[ \sum_i p_i^{m}(u,v)\right]\right)^{2\left(1 - \frac{1}{m}\right)} }  \\
&\leq  \frac{1}{n^{\frac{2}{m}}}      \frac{\sum_{i=1}^n \sum_{j=1}^d \left(\pE\left[p_i^{m}(u,v)\right] \right)^{2 - \frac{2}{m}} \left(\pE\left[v_j^m\right]\right)^{\frac{2}{m}}}{ \left(\pE[\|v\|_2^m]\right)^{2/m} \left(\pE\left[ \sum_i p_i^{m}(u,v)\right]\right)^{2\left(1 - \frac{1}{m}\right)} }  \tag*{(\Cref{fact:peholder})} \\
&=  \frac{1}{n^{\frac{2}{m}}} \frac{ \left(\sum_{j=1}^d \left(\pE\left[v_j^m\right]\right)^{\frac{2}{m}} \right)}{ \left(\pE[\|v\|_2^m]\right)^{2/m} }
\frac{\left(\sum_{i=1}^n \left(\pE\left[p_i^{m}(u,v)\right] \right)^{2 - \frac{2}{m}}\right)}{ \left(\pE\left[ \sum_i p_i^{m}(u,v)\right]\right)^{2\left(1 - \frac{1}{m}\right)}}
\,.    \\
\end{align*}
Defining $x \in \R^n$ to be $x_i := \pE[p_i^m(u,v)] \geq 0$ and $r := 2-\frac{2}{m} \geq 1$, the fraction corresponding to $p_i$'s above  is equal to $\tfrac{\sum_{i=1}^n \left(x_i \right)^{2 - \frac{2}{m}}}{ \left(\sum_i x_i\right)^{2\left(1 - \frac{1}{m}\right)} }     = 
     \frac{\|x\|_r^r}{\|x\|_1^r} \leq 1$, 
where we use that $\|x\|_1 \geq \|x\|_r$ for all $r \geq 1$.
Turning our attention to the fraction involving $v$, we
use the inequality that $\pE[v_j^m] \leq \pE[\|v\|_2^m]$.\footnote{This follows by the fact that $\|v\|_2^m - v_j^m = ( (\sum_{i \neq j} v_i^2) + v_j^2)^{m/2} -  (v_j^2)^{m/2}$ is a sum of squares polynomial in $v$.}
This implies that the numerator in the corresponding fraction is less than 
$\sum_{j=1}^d(\pE[\|v\|_2^m])^{2/m} = d(\pE[\|v\|_2^m])^{2/m}$. 
Therefore, the corresponding fraction is less than $d$. 
Hence, the weak deviation term is at most $\frac{\sqrt{d}}{n^{1/m}}$.
\end{proof}
\subsection{Certifiable Empirical Moments from Multi-scale Resilience}
\label{sec:cert-emp-moments}

We now use \Cref{thm:multi-scale-resilience} to establish the certifiablility of empirical moments of a subgaussian distribution.

\begin{theorem}
\label{cor:certifiable-moments}
Let $P$ be an $s$-subgaussian distribution over $\R^d$ with mean $\mu$ and let $n \geq \poly(d^m)$.
Suppose further that $n \geq \poly(d^{m})$.
Let $\cT'$ be the set of all degree-$m$ pseudoexpectations over $v=(v_i)_{i=1}^d$ such that $\pE[\|v\|_2^m] > 0$.
Then
$\E_{Y_1,\dots,Y_n \sim P^{\otimes n}}\left[\max_{\pE \in \cT'} \frac{\pE\left[\tfrac{1}{n}\sum_i \langle v,Y_i\rangle^m\right]}{\pE[\|v\|_2^m]} \right] \lesssim \left(O(s\sqrt{m})\right)^m$.
\end{theorem}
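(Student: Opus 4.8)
The strategy is to use \Cref{fact:lpnorm} to rewrite the ratio $\pE[\tfrac1n\sum_i\langle v,Y_i\rangle^m]/\pE[\|v\|_2^m]$---which is \emph{nonlinear} in the samples $Y_i$---as the $m$-th power of a ratio that is \emph{linear} in the samples, and then to invoke the moment bound of \Cref{thm:multi-scale-resilience} at $q=m$. (Throughout we may take $\mu=0$.) In other words, $\E[Z^m]$ is bounded, where $Z$ is the supremum of the linear empirical process appearing in \Cref{thm:multi-scale-resilience}.

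Fix a realization of $Y_{1:n}$ and a pseudoexpectation $\pE\in\cT'$. Applying \Cref{fact:lpnorm} with $p(v):=\|v\|_2^m$ and $x_i:=Y_i$ gives
\[
\frac{\pE\left[\tfrac1n\sum_i\langle v,Y_i\rangle^m\right]}{\pE\left[\|v\|_2^m\right]}
\;\le\;
\left(\sup_{\pE'\in\cC}\pE'\left[\frac{\tfrac1n\sum_i u_i^{m-1}\langle v,Y_i\rangle}{\left(\pE'\left[\|v\|_2^m\right]\right)^{1/m}\left(\pE'\left[\tfrac1n\sum_i u_i^m\right]\right)^{1-1/m}}\right]\right)^{m}=:\,Z^m\,,
\]
where $\cC$ is the set of degree-$m$ pseudoexpectations over $(u,v)$, with $u=(u_i)_{i=1}^n$, satisfying $\pE'[\|v\|_2^m]>0$ and $\pE'[\tfrac1n\sum_i u_i^m]>0$. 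Since $Z$ does not depend on $\pE$, taking the maximum over $\pE\in\cT'$ and then the expectation over $Y_{1:n}\sim P^{\otimes n}$ yields
\[
\E_{Y_{1:n}}\left[\max_{\pE\in\cT'}\frac{\pE\left[\tfrac1n\sum_i\langle v,Y_i\rangle^m\right]}{\pE\left[\|v\|_2^m\right]}\right]\;\le\;\E_{Y_{1:n}}\left[Z^m\right]\;=\;\left\|Z\right\|_{L_m(Y_{1:n})}^m\,,
\]
where we use that $m$ is even, so the maximized quantity is nonnegative and $Z\ge 0$, and hence raising to the $m$-th power preserves the inequality.

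It remains to bound $\|Z\|_{L_m(Y_{1:n})}$, and this is exactly what \Cref{thm:multi-scale-resilience} supplies. I would instantiate that theorem with $N=n$, with the polynomials $p_i(u,v):=u_i$ (so $m'=1$ and $mm'=m$), and with $\cT$ the set of all degree-$m$ pseudoexpectations over $(u,v)$; the side conditions $\pE[\|v\|_2^m]>0$ and $\pE[\sum_i p_i^m]>0$ then cut out precisely the set $\cC$ above, so that the random variable denoted $Z$ in \Cref{thm:multi-scale-resilience} coincides with our $Z$. Hence \Cref{eq:multi-scale-resilience-moment} with $q=m$ gives
\[
\left\|Z\right\|_{L_m(Y_{1:n})}\;\lesssim\; s\sqrt m \;+\; s\sqrt m\cdot\frac{\sqrt d}{n^{1/m}}\,.
\]
Since $n\ge\poly(d^m)$ we may assume $n\ge d^m$, so $\sqrt d/n^{1/m}\le d^{-1/2}\le 1$; therefore $\|Z\|_{L_m(Y_{1:n})}\lesssim s\sqrt m$, and $\E[Z^m]=\|Z\|_{L_m(Y_{1:n})}^m\le(O(s\sqrt m))^m$. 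Combining this with the previous display completes the proof.

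\textbf{Where the work is.} All the genuine content has already been packed into \Cref{thm:comparison,thm:multi-scale-resilience}; the one point to be careful about here is that one \emph{cannot} replace $\E[Z^m]$ by $(\E Z)^m$ (Jensen points the wrong way), so the plain expectation bound $\E[Z]\lesssim s\sqrt m$ does not suffice---this is precisely the reason the $L_q$ moment bounds are threaded through \Cref{thm:comparison,thm:multi-scale-resilience} in the first place. The estimate comes out clean because, at $q=m$, the ``weak deviation'' term $s\sqrt q\cdot\sqrt d/n^{1/m}$ is of the same $\sqrt m$ order as the main term and, moreover, is rendered negligible by the sample-size assumption $n\ge\poly(d^m)$.
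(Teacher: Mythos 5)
Your proof is correct and takes essentially the same route as the paper: Fact~\ref{fact:lpnorm} linearizes the ratio, the maximum over $\pE$ is bounded deterministically by $Z^m$, and \Cref{eq:multi-scale-resilience-moment} with $q=m$ (instantiated with $p_i(u,v)=u_i$) bounds $\|Z\|_{L_m}^m$, with the weak-deviation term killed by the assumption $n\ge\poly(d^m)$. The only difference is that you spell out the instantiation of \Cref{thm:multi-scale-resilience} and the $\E[Z^m]=\|Z\|_{L_m}^m$ step more explicitly than the paper does, which is welcome but not a different argument.
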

\begin{proof}
Let $\cT$ be the set of degree-$m$ pseudoexpectations over $(u_i)_{i=1}^n$ and $v$ with $\pE[\|v\|_2^m] > 0$.
Then \Cref{fact:lpnorm} implies the following deterministic inequality for any $\pE \in \cT'$:
\begin{align}
    \max_{\pE \in \cT'} \frac{\pE\left[\tfrac{1}{n}\sum_i \langle v,Y_i\rangle^m\right]}{\pE[\|v\|_2^m]}
    &\leq \left( \sup_{\substack{\pE \in \cT:\\
    \pE[\sum_iu_i^m] > 0}} \frac{\pE\left[\tfrac{1}{n}\sum_i u_i^{m-1} \langle v,Y_i\rangle\right]}{\left(\pE[\|v\|_2^m]\right)^{\frac{1}{m}} \left(\pE[\tfrac{1}{n}\sum_i u_i^{m}] \right)^{1 - \frac{1}{m}}}\right)^m\,.
\end{align}
Taking expectation over $Y_{1:n}$ on both sides and
applying \Cref{eq:multi-scale-resilience-moment} with $q = m$ from \Cref{thm:multi-scale-resilience} to upper bound the right hand side,
we get the desired conclusion.
 
\end{proof}

\subsection{Empirical Moments to Population Moments: Proof of \Cref{thm:certifiability}}
\label{sec:cert-dist-moments}
Finally, we complete the proof of \Cref{thm:certifiability} using \Cref{cor:certifiable-moments}.
\ThmCertSub*
\begin{proof}
Without loss of generality, we assume that $P$ is centered.

Let $\cT'$ be the set of $\Theta(m)$-pseudoexpectations with $\pE[\|v\|_2^m] >0$.
It suffices to restrict our attention to $\cT'$ because if $\pE[\|v\|_2^m] = 0$, then it is easy to see that $\pE[\langle x,v\rangle^m] \leq \|x\|_2^m \pE[\|v\|_2^m ]$ is also zero for any $x \in\R^d$.

To establish \Cref{thm:certifiability}, the duality between the pseudoexpectations and the SoS proofs (\Cref{fact:sos-duality-unconstrained}) implies that it suffices to show the following:
\begin{align*}
\max_{\pE \in \cT'} \pE\left[\frac{\E_{Y}\left[\langle v,Y\rangle^m\right]}{\pE[\|v\|_2^m]}\right] \lesssim \left(O(s\sqrt{m})\right)^m\,.
\end{align*}
Let $Y_1,\dots,Y_n$ be $n$ i.i.d.\ samples from $P$ for $n\gtrsim \poly(d^m)$.
    The following sequence of (in)equalities yield the desired claim:
    \begin{align*}
\max_{\pE \in \cT'} \pE&\left[\frac{\E_{Y}\left[\langle v,Y\rangle^m\right]}{\pE[\|v\|_2^m]}\right] \\
&= \max_{\pE \in \cT'} \pE\left[ \frac{{}\E_{Y_{1:n}}\left[\tfrac{1}{n}\sum_i \langle v,Y_i\rangle^m\right]}{\pE[\|v\|_2^m]}\right] \tag*{($Y, Y_1,\dots,Y_n$ are identically distributed)}\\
&= \max_{\pE \in \cT'}  \E_{Y_{1:n}} \left[ \frac{ \pE\left[ \tfrac{1}{n}\sum_i \langle v,Y_i\rangle^m\right]}{\pE[\|v\|_2^m]}\right] \tag*{(linearity of $\E$ and $\pE$)}\\
&\leq  \E_{Y_{1:n}}\left[\max_{\pE \in \cT'} \frac{\pE\left[ \tfrac{1}{n}\sum_i \langle v,Y_i\rangle^m\right]}{\pE[\|v\|_2^m]}\right] \tag*{(Jensen's inequality)}\\
&\leq  \left(O(s \sqrt{m})\right)^m,
    \end{align*}
    where the last step uses \Cref{cor:certifiable-moments}. 
    This completes the proof of \Cref{thm:certifiability}. 
\end{proof}

\section{Algorithmic Implications of \Cref{thm:certifiability}} 
\label{sec:apps-detailed}
\label{sec:algo-implications}
In this section, we state the 
algorithmic applications of our main result, some of which were summarized in  \Cref{table:summary}.

We begin by  defining the strong contamination model for the outliers that has become a standard in algorithmic robust statistics~\cite{DiaKKLMS16-focs,DiaKan22-book}. 
\begin{definition}[Strong Contamination Model]
Given a \emph{corruption} parameter $\eps \in (0,1)$ 
and a distribution $P$ on uncorrupted samples, 
an algorithm takes samples from $P$ with \emph{$\eps$-contamination} 
as follows: 
(i) The algorithm specifies the number $n$ of samples it requires. 
(ii) Then $n$ i.i.d.\ samples from $P$ are drawn but not yet shown to the algorithm. These 
samples are called inliers. 
(iii) An arbitrarily powerful adversary then inspects 
the entirety of the $n$ i.i.d.\ samples, 
before deciding to replace any subset of $\lceil \eps n \rceil$ 
samples with arbitrarily corrupted points, 
and returning the modified set of $n$ samples to the algorithm.    
We call the resulting set of samples to be $\eps$-corrupted.
\end{definition}

The rest of this section is organized as follows:
\Cref{ssec:mean-est} focuses on robust mean estimation in Euclidean norm,
\Cref{ssec:sparse-mean} focuses on robust sparse estimation,
\Cref{ssec:cov-est} focuses on covariance estimation and mean estimation in Mahalanobis 
norm, \Cref{ssec:linear-regression} focuses on linear regression, and finally 
\Cref{ssec:clustering} focuses on clustering of mixture models.

\subsection{Outlier-Robust Mean Estimation in Euclidean Norm}
\label{ssec:mean-est}
We begin by considering the problem of mean estimation of $\eps$-corrupted samples 
from a subgaussian distribution.
First, we focus on the regime when the outliers are in minority, 
i.e., $\eps < c<1/2$ for an absolute constant $c$.
When the inliers are sampled from an $s$-subgaussian distribution, 
the optimal error is $\Theta(s\eps\sqrt{\log(1/\eps)})$, 
achievable with sample complexity $\widetilde\Theta(d/\eps^2)$.
However, the existing algorithms that run in polynomial time 
only achieve error $\Omega(\sqrt{\eps})$, 
even if allowed a large degree polynomial $\poly(d/\eps)$ sample size
~\cite{DiaKKLMS17,SteCV18}.  
We break this $\sqrt{\eps}$ barrier for subgaussian distributions, achieving a smooth accuracy-sample-runtime tradeoff quantified by a parameter $t$.

\begin{theorem}[Robust Mean Estimation: Consequence of~\cite{HopLi18,KotSS18} and \Cref{thm:certifiability}]
\label{cor:robust-euclidean}
Let $c<1/2$ be a sufficiently small absolute constant and let $\eps \in (0,c)$.
Let $P$ be an $s$-subgaussian distribution over $\R^d$ with mean $\mu$.
Let $S$ be a set of $\eps$-corrupted samples from $P$ with $n:= |S|$.
For any $t = 2^j$ for some $j\in \N$,
if $n \gtrsim \poly(d^t,1/\eps)$, 
there is an algorithm that (i) takes as inputs $S$, $t$, $\eps$, and $s$, (ii) 
runs in  $(nd)^{\poly(t)}$ time, and (iii) outputs $\widehat{\mu} \in \R^d$ such that, with probability at least $0.9$, $\|\widehat{\mu}-\mu\|_2 \lesssim s \sqrt{t}\eps^{1- \frac{1}{t}}$. 
\end{theorem}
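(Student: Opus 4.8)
The plan is to combine the certifiable subgaussianity established in \Cref{thm:certifiability} with the known SoS-based robust mean estimation machinery of \cite{HopLi18,KotSS18}, treating the latter essentially as a black box. Concretely, \Cref{thm:certifiability} tells us that the population distribution $P$ is $(Cs\sqrt{m},m)$-certifiably bounded for every even $m$; by the standard sampling argument (item (iv) of the operations discussed in \Cref{ssec:related}, i.e.\ that a polynomial-size i.i.d.\ sample from a certifiably bounded distribution is, with high probability, again certifiably bounded with comparable parameters), the uniform distribution over the $n \gtrsim \poly(d^t, 1/\eps)$ clean samples is, with probability at least $0.99$, $(O(s\sqrt{t}), t)$-certifiably bounded. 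First I would make this transfer-to-the-empirical-distribution step precise, quoting the relevant perturbation-of-moment-tensors lemma; this is where the assumption $n \gtrsim \poly(d^t, 1/\eps)$ is consumed, and one must be slightly careful that the certifiable bound degrades only by a constant factor, which is exactly what the matrix-concentration / union-bound argument over low-degree pseudoexpectations gives (the feasible region of degree-$t$ pseudoexpectations in $d$ variables is a bounded-dimensional convex set, so an $\eps$-net argument suffices).

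Next I would feed this certifiably bounded empirical distribution into the SoS robust-mean-estimation algorithm of \cite{HopLi18,KotSS18}. That algorithm, given an $\eps$-corrupted set of $n$ points whose inlier empirical distribution is $(B_t, t)$-certifiably bounded, runs the degree-$O(t)$ SoS relaxation to find a pseudo-distribution over a "cleaned" reweighting of the points together with a candidate mean, and outputs a point $\widehat\mu$ with $\|\widehat\mu - \mu\|_2 \lesssim B_t \cdot \eps^{1-1/t}$; its running time is $(nd)^{O(t)}$ because it solves an SDP of that size, and its sample requirement is $\poly(d^t, 1/\eps)$. Plugging in $B_t = O(s\sqrt{t})$ yields the claimed error $\|\widehat\mu - \mu\|_2 \lesssim s\sqrt{t}\,\eps^{1-1/t}$. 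Two bookkeeping points: the algorithm as stated in those papers uses the proof system with the unit-norm constraint $\|v\|_2^2 = 1$, but by \Cref{rem:unit-norm} our (stronger, unconstrained) certificate implies theirs at degree $t$, so nothing is lost; and the restriction $t = 2^j$ is inherited from the recursive/tensoring structure of the \cite{HopLi18,KotSS18} analysis (they certify moments of order $t$ by bootstrapping from order $2$ through powers of two), so we simply quote their theorem for $t$ a power of two.

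The main obstacle, and the only place where genuine care is needed, is the first step: ensuring that the \emph{empirical} distribution over the (adversary's) inliers inherits certifiable boundedness with only constant-factor loss, uniformly enough that the downstream SoS algorithm's guarantee applies verbatim. In principle one could instead invoke \Cref{cor:certifiable-moments} directly, which already asserts that $\E_{Y_{1:n}}[\max_{\pE}\pE[\tfrac1n\sum_i\langle v,Y_i\rangle^m]/\pE\|v\|_2^m] \lesssim (O(s\sqrt m))^m$; combining this with Markov's inequality gives that with probability $0.99$ the empirical moments are certifiably bounded by $O(s\sqrt t)$, which is exactly the input hypothesis the \cite{HopLi18,KotSS18} algorithm needs. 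So in fact the cleanest route is: (i) apply \Cref{cor:certifiable-moments} with $m = t$ and Markov to get a certifiably $(O(s\sqrt t), t)$-bounded empirical inlier distribution with probability $\ge 0.99$; (ii) condition on this event and on the adversary's corruption of at most $\eps n$ points; (iii) invoke the SoS robust-mean-estimation theorem of \cite{HopLi18,KotSS18} verbatim; (iv) adjust constants so the total failure probability is at most $0.1$. I would write the proof in that order, with the bulk of the text being the precise citation and parameter-matching in step (iii), since all the probabilistic and SoS-theoretic content is already packaged in \Cref{cor:certifiable-moments} and in the cited algorithms.
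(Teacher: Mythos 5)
Your proposal is correct and follows essentially the same route the paper intends: the theorem is stated as a black-box combination of \Cref{thm:certifiability} (or, more directly, \Cref{cor:certifiable-moments} together with Markov's inequality to pass to the empirical sample) with the SoS robust mean estimation framework of the cited works, with \Cref{rem:unit-norm} handling the compatibility of proof systems. The paper gives no separate proof of this corollary, so your more detailed version---in particular the observation that \Cref{cor:certifiable-moments} plus Markov is the cleanest way to certify the empirical inlier moments---is a faithful unpacking of the intended argument.
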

\looseness=-1 Observe that the algorithm of \Cref{cor:robust-euclidean} 
uses $\poly(d^t)$ samples to achieve $\eps^{1-1/t}$ error, 
which is higher than the information-theoretic sample complexity 
(linear in the dimension $d$).
As mentioned in \Cref{rem:tradeoffs-robust-mean}, \cite{DiaKKPP22-colt} 
showed that the above sample complexity is qualitatively tight 
within a broad family of algorithms (SQ algorithms) 
for (hypercontractive) subgaussian inliers.

We next consider the regime of $\eps \in (1/2,1)$.
Since the outliers could be in majority, it is no longer possible 
to estimate the true mean accurately with a single hypothesis, 
motivating the notion of list-decodable learning~\cite{ChaSV17}.
In the list-decodable setting,
the algorithm is allowed to output a small list of vectors, 
say of size $O(1/\alpha))$, such that at least one of them 
is close to the true mean.
Existing polynomial time algorithms for list-decodable subgaussian mean estimation 
were stuck at $\Omega(\sqrt{1/\alpha})$ error~\cite{ChaSV17}, 
despite the information-theoretic error being $\sqrt{\log(1/\alpha)}$, 
achievable with $\Theta_\alpha(d)$ samples~\cite{DiaKS18-list}.
Combining \Cref{thm:certifiability} with the results of \cite{KotSS18}, 
we significantly improve on the $\sqrt{1/\alpha}$ error:
\begin{theorem}[List-decodable Mean Estimation: Consequence of~\cite{KotSS18} and \Cref{thm:certifiability}]
\label{cor:list-decodable-euclidean}
Consider the setting of \Cref{cor:robust-euclidean} and  define $\alpha:=1-\eps$.
If $n \gtrsim \poly(d^t,1/\alpha)$, 
there is an algorithm that (i) takes as inputs $S$, $t$, $\eps$, and $s$, 
(ii) runs in  $(nd)^{\poly(t)}$ time, 
and (iii)  outputs a list $\cL$ of $O(\frac{1}{\alpha})$-vectors such that, 
with probability at least $0.9$, there exists $\widehat{\mu} \in \cL$ 
satisfying $\|\widehat{\mu}-\mu\|_2 \lesssim s \sqrt{t}\alpha^{- \frac{1}{t}}$.     
\end{theorem}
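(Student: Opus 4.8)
The plan is to obtain \Cref{cor:list-decodable-euclidean} as a direct combination of \Cref{thm:certifiability} with the sum-of-squares based list-decodable mean estimation algorithm of \cite{KotSS18}. That algorithm is already known to achieve the stated guarantees \emph{provided} the inlier distribution is $(B_t, t)$-certifiably bounded in the sense of \Cref{def:cert-bdd}; the only reason it had not previously applied to arbitrary subgaussian inliers is that, before this work, such distributions were not known to be certifiably bounded. \Cref{thm:certifiability} removes exactly this obstacle, so what remains is bookkeeping.

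Concretely, first I would note that in the strong contamination model the $\eps$-corrupted set $S$ contains $\alpha n$ genuine i.i.d.\ samples from $P$ (the inliers); since $n \gtrsim \poly(d^t, 1/\alpha)$ the number of inliers is itself $\gtrsim \poly(d^t)$, which is what our empirical-moment transfer needs. By \Cref{thm:certifiability}, $P$ is $(Cs\sqrt{t}, t)$-certifiably bounded for the even integer $t = 2^j$. Applying \Cref{cor:certifiable-moments} to the inlier sample, together with Markov's inequality, yields that with probability at least $0.99$ the uniform distribution over the inliers is $(O(s\sqrt{t}), t)$-certifiably bounded around its empirical mean $\bar\mu$; and by standard subgaussian concentration $\|\bar\mu - \mu\|_2 \lesssim s\sqrt{d/(\alpha n)}$, which is negligible compared to the target error $s\sqrt{t}\,\alpha^{-1/t}$ once $n$ is a sufficiently large polynomial in $d^t$ and $1/\alpha$.

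Next I would feed this into the list-decodable algorithm of \cite{KotSS18}: given a dataset that is $\eps$-corrupted relative to a set whose empirical distribution is $(B, t)$-certifiably bounded with $t$ a power of two, that algorithm runs in $(nd)^{\poly(t)}$ time and, with probability at least $0.9$, outputs a list $\cL$ of $O(1/\alpha)$ vectors at least one of which lies within $O(B\,\alpha^{-1/t})$ of $\bar\mu$. Taking $B = O(s\sqrt{t})$ and combining with the bound on $\|\bar\mu - \mu\|_2$ via the triangle inequality produces $\widehat\mu \in \cL$ with $\|\widehat\mu - \mu\|_2 \lesssim s\sqrt{t}\,\alpha^{-1/t}$; a union bound over the two $0.99$-probability events and the $0.9$-probability success event (after rescaling constants) gives overall probability at least $0.9$. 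As noted in \Cref{rem:unit-norm}, our unconstrained-proof-system notion of certifiable boundedness is at least as strong as the unit-norm-constrained notion used in \cite{KotSS18}, so the algorithm applies verbatim.

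The main obstacle is entirely absorbed into \Cref{thm:certifiability} itself; the reduction above is essentially mechanical. The only mild technical points are (i) the sample-to-population transfer of certifiable boundedness, handled by \Cref{cor:certifiable-moments}, which is the source of the $\poly(d^t, 1/\alpha)$ sample requirement, and (ii) keeping track of the sharp $(Cs\sqrt{t}, t)$ moment bound rather than the cruder $(Ct, t)$ one, since this is precisely what yields the $s\sqrt{t}$---as opposed to $st$---prefactor in the final error guarantee. An entirely analogous argument, with the minority-outlier robust mean estimation guarantee of \cite{HopLi18,KotSS18} substituted for the list-decodable one, establishes \Cref{cor:robust-euclidean}.
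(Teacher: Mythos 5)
The paper does not include an explicit proof of \Cref{cor:list-decodable-euclidean}; it is stated as a direct consequence of \cite{KotSS18} combined with \Cref{thm:certifiability}, with the intended reduction left implicit. Your proof reconstructs exactly that reduction: invoke \Cref{thm:certifiability} to conclude $P$ is $(Cs\sqrt{t},t)$-certifiably bounded, transfer this to the empirical inlier sample via \Cref{cor:certifiable-moments} plus Markov, and then plug the resulting $(O(s\sqrt{t}),t)$-certifiably bounded empirical distribution into the SoS list-decoding machinery of \cite{KotSS18}. You also correctly flag the two points most worth flagging: that $n\gtrsim\poly(d^t,1/\alpha)$ is precisely what the sample-to-population transfer demands, and that the sharp $s\sqrt{t}$ (rather than $st$) prefactor is needed for the stated error. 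Your observation that \Cref{rem:unit-norm} guarantees the unconstrained certificate is at least as strong as the unit-norm-constrained notion used in \cite{KotSS18} closes the loop. This matches the paper's intent.

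Two small imprecisions worth tightening. First, \Cref{cor:certifiable-moments} (with $P$ centered) certifies moments of the empirical sample around the \emph{population} mean $\mu$, not around the empirical mean $\bar\mu$; your translation from ``bounded around $\mu$'' to ``bounded around $\bar\mu$'' needs a separate (routine) SoS shift argument using the bound on $\|\bar\mu-\mu\|_2$, rather than being automatic. Second, in the strong contamination model of the paper the surviving inliers are an \emph{adversarially chosen} size-$\alpha n$ subset of the original $n$ i.i.d.\ draws, not a fresh i.i.d.\ sample of size $\alpha n$; the clean way to handle this is to apply \Cref{cor:certifiable-moments} to the full $n$-sample and then use the fact that, for even $t$, moments over an $\alpha$-fraction subset are at most a $1/\alpha$ factor larger (degrading the moment bound to $O(s\sqrt{t})\alpha^{-1/t}$), exactly as is done in the \cite{KotSS18} analysis. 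These are bookkeeping issues consistent with the informal treatment in the paper itself, not conceptual gaps, and the overall approach is the intended one.
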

Note that the sample complexity of the algorithm above is larger 
than the information-theoretic sample complexity of $\Theta_\alpha(d)$. 
\cite{DiaKS18-list,DiaKRS23-sq} showed that the above sample bound 
is qualitatively optimal for a broad family of efficient algorithms, 
even when the inliers follow the Gaussian distribution.

\subsection{Outlier-Robust Sparse Mean Estimation }
\label{ssec:sparse-mean}

While the results of the previous section concern \emph{unstructured} 
mean estimation (the true parameter $\mu$ is potentially arbitrary), 
in several applications, the true signal $\mu$ may be structured.
An important structure in statistics is that of \emph{sparsity}: 
the mean $\mu$ is $k$-sparse, i.e., at most $k$ entries of $\mu$ 
are non-zero (importantly, the subset of $[d]$ corresponding to the 
nonzero entries is unknown to the learner).

In a similar vein to the unstructured setting, 
the information-theoretically optimal error is 
$\Theta(\eps\sqrt{\log(1/\eps)})$ and $\Theta(\sqrt{\log(1/(1-\eps))})$, 
for the minority of outliers and list-decodable settings, respectively.
Importantly, the information-theoretic sample complexity is now only 
$\Theta_{\eps,\alpha}(k \log d)$---as opposed to $\Theta(d)$ in the setting of the 
previous section. 
However, existing efficient algorithms for sparse mean estimation 
in the presence of outliers were similarly stuck at errors 
$\sqrt{\eps}$ and $\sqrt{1/\alpha}$~\cite{BalDLS17,DiaKKPP22-colt,DiaKKPP22-neurips}.

Combining the results of \cite{DiaKKPP22-colt,DiaKKPP22-neurips} with \Cref{thm:certifiability}, we obtain the following 
new algorithmic guarantees for robust sparse mean estimation.
\begin{theorem}[Robust Sparse Mean Estimation: Consequence of~\cite{DiaKKPP22-colt,DiaKKPP22-neurips} and \Cref{thm:certifiability}]
\label{lem:sparse-mean-estimation}
Consider the setting of \Cref{cor:robust-euclidean}, with the only difference 
that $\mu$ is $k$-sparse, and define $\alpha:=1-\eps$.
Let $S$ be a set of $\eps$-corrupted samples from $P$ with $n:= |S|$.
For any $t = 2^j$ for some $j\in \N$,
if $n \gtrsim \poly(k^t,1/\eps,1/\alpha,\log d)$, 
there is an algorithm that (i) takes as inputs $S$, $t$, $\eps$, $r$, 
(ii) runs in  $(nd)^{\poly(t)}$ time, and (iii) satisfies the following guarantees:
\begin{itemizec}
    \item (Minority of outliers) If $\eps \in (0,c)$, the algorithm outputs 
    $\widehat{\mu} \in \R^d$ such that, with probability at least $0.9$, 
    $\|\widehat{\mu}-\mu\|_2 \lesssim s \sqrt{t}\eps^{1- \frac{1}{t}}$. 
    \item (Majority of outliers)
    If $\eps \in [c,1)$, the algorithm outputs a list $\cL$ of 
    $O(\frac{1}{\alpha})$-vectors such that, with probability at least $0.9$, 
    there exists $\widehat{\mu} \in \cL$ satisfying 
    $\|\widehat{\mu}-\mu\|_2 \lesssim s \sqrt{t}\alpha^{- \frac{1}{t}}$.     
\end{itemizec}
\end{theorem}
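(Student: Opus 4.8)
The plan is to obtain \Cref{lem:sparse-mean-estimation} as a direct corollary of the robust sparse mean estimation algorithms of \cite{DiaKKPP22-colt, DiaKKPP22-neurips} (which together cover the minority- and majority-of-outliers regimes), fed with the new structural input provided by \Cref{thm:certifiability}. Those works establish their guarantees under a \emph{deterministic} hypothesis on the corrupted sample set: informally, that the uncorrupted inliers satisfy a \emph{sparse certifiable subgaussianity} condition, namely that for every coordinate subset $T \subseteq [d]$ of size $O(k)$ the empirical distribution of the inliers restricted to $T$ is $(Cs\sqrt t, t)$-certifiably bounded, in the unit-norm-constrained, degree-$O(t)$ SoS proof system of \cite{HopLi18, KotSS18}. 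Given a sample set satisfying this hypothesis, their SoS-based algorithms output, in time $(nd)^{\poly(t)}$, a single estimate (resp.\ a list of $O(1/\alpha)$ estimates) with the claimed error $s\sqrt t\,\eps^{1-1/t}$ (resp.\ $s\sqrt t\,\alpha^{-1/t}$). So the only thing left to supply is a proof that $\poly(k^t, 1/\eps, 1/\alpha, \log d)$ i.i.d.\ samples from an $s$-subgaussian distribution with $k$-sparse mean satisfy this hypothesis with probability at least $0.99$.

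First I would observe that every $O(k)$-coordinate marginal $P_T$ of an $s$-subgaussian distribution $P$ on $\R^d$ is itself $s$-subgaussian on $\R^T$ (immediate from \Cref{def:sg}), so \Cref{thm:certifiability} gives that $P_T$ is $(Cs\sqrt t, t)$-certifiably bounded; by \Cref{rem:unit-norm} this unconstrained, degree-$t$ certifiability implies the unit-norm-constrained version of \cite{HopLi18, KotSS18}. This handles the population level. Second, I would pass to the empirical version via the standard fact (see \Cref{ssec:related}) that certifiable boundedness is preserved under small perturbations of the moment tensor, so that for $n \gtrsim \poly(k^t)$ the empirical distribution restricted to a fixed $T$ is $(O(s\sqrt t), t)$-certifiably bounded except with probability $d^{-\Omega(k)}$; a union bound over the $\binom{d}{O(k)} \leq d^{O(k)}$ choices of $T$ --- which is exactly what forces the $\log d$ dependence into the sample complexity --- then yields the sparse certifiable subgaussianity of the inliers with the desired probability. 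Third, I would invoke the algorithms of \cite{DiaKKPP22-colt, DiaKKPP22-neurips} on this sample set, reading off the two bullets of the statement from the minority and majority cases of their analyses.

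I expect the only real difficulty to be bookkeeping: precisely reconciling the hypothesis under which \cite{DiaKKPP22-colt, DiaKKPP22-neurips} state their algorithmic guarantees with the conclusion of \Cref{thm:certifiability}. The three points to check are (i) that the SoS proof system and degree used there are, up to constants, those handled by \Cref{rem:unit-norm}; (ii) that their notion of sparse certifiability genuinely reduces to certifiability of the $O(k)$-coordinate marginals, so that the population-level \Cref{thm:certifiability} together with a union bound is all that is needed; and (iii) that the sample-complexity polynomial and the failure-probability amplification in those works are compatible with that union bound. None of these require ideas beyond \Cref{thm:certifiability} and standard moment-concentration arguments; all of the algorithmic content is inherited from the cited works.
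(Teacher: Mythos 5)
The paper itself gives no explicit proof of \Cref{lem:sparse-mean-estimation}: the result is simply stated as a direct consequence of the algorithmic framework of \cite{DiaKKPP22-colt,DiaKKPP22-neurips} combined with \Cref{thm:certifiability}. Your proposal reconstructs exactly that implicit argument --- feed those black-box SoS algorithms the certifiability guarantee from \Cref{thm:certifiability}, reduce the sparse certifiability hypothesis to certifiability of the $O(k)$-coordinate marginals (which inherit subgaussianity trivially from \Cref{def:sg}), and pass from population to empirical moments via concentration plus a $\binom{d}{O(k)}$-fold union bound that explains the $\log d$ factor in the sample complexity --- so it is essentially the same approach, just with the omitted bookkeeping made explicit. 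The only soft spot worth flagging: your claimed per-subset failure probability of $d^{-\Omega(k)}$ from $n\gtrsim\poly(k^t)$ samples is stated as if automatic, but in fact the $\log d$ must enter the sample complexity \emph{in order to achieve} that failure probability (the empirical moment tensor of degree-$t$ subgaussian products concentrates at subweibull rather than subgaussian rate), so the quantifier order there should be tightened; this is a presentation issue, not a logical gap, and the conclusion stands.
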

As in the dense setting, the algorithm above obtains qualitatively the optimal accuracy-
sample complexity tradeoff for the aforementioned broad families of 
algorithms~\cite{DiaKKPP22-colt,DiaKKPP22-neurips}.

\subsection{Outlier-Robust Mean and Covariance Estimation}
\label{ssec:cov-est}

We next consider the tasks of mean estimation in Mahalanobis norm 
and covariance estimation in relative spectral norm.
For the problem of mean estimation in Mahalanobis norm,
the goal is to ensure that $\|\Sigma^{-1/2}(\widehat{\mu} - \mu)\|_2$ is small, 
where $\Sigma$ is the \emph{unknown} covariance matrix of the inliers.
For both of these estimation tasks, we consider the case when the inliers follow a 
hypercontractive subgaussian distribution (\Cref{def:hypercontractive}).
The assumption of hypercontractivity is required to avoid the dependence on the condition 
number of the covariance matrix.
The information-theoretic optimal error for both problems 
is $\widetilde{\Theta}(\eps)$, and is attainable 
with $\Theta_\eps(d)$ samples~\cite{KotSS18}.

As opposed to the previous subsections where existing algorithms 
achieved non-trivial error, 
the algorithmic landscape of robust covariance-aware estimation 
for a generic hypercontractive subgaussian distribution is much less understood.
In particular, none of the existing algorithms were known 
to satisfy the following guarantees: (i) robustness to a constant fraction of corruption, 
and (ii) error guarantee  that does not scale polynomially 
with the condition number of the covariance. 
Combining \cite{KotSS18} with \Cref{thm:certifiability},
we obtain the following result:

\begin{theorem}[Robust Mean and Covariance Estimation in Mahalanobis Norm: 
Consequence of~\cite{KotSS18} and \Cref{thm:certifiability-hyper}]
\label{cor:cov-estimation}
There exist absolute constants $c,c'<1/2$ such that the following holds. Let $\eps \in (0,c)$ and let $P$ be an $s$-hypercontractive subgaussian distribution (\Cref{def:hypercontractive}) over $\R^d$ with mean $\mu$ and covariance $\vec \Sigma$.
Let $S$ be a set of $\eps$-corrupted samples from $P$ with $n:= |S|$.
Fix any $t \in \N$ such that $t = 2^j$, for some $j\in \N$, 
and  $s t \eps^{1- \frac{2}{t}} \leq c'$.
If $n \gtrsim \poly(d^t,1/\eps)$, 
there is an algorithm that (i) takes as inputs $S$, $t$, $\eps$, $s$, (ii) 
runs in  $(nd)^{\poly(t)}$ time, and (iii) outputs $\widehat{\mu}$, $\widehat{\vec \Sigma}$ with the following guarantees:
    \begin{itemizec}
        \item Mean Estimation in Euclidean norm: $\|\left(\widehat{\mu} - \mu\right)\|_2\lesssim  s \sqrt{t} \eps^{1-1/t} \|\vec\Sigma\|_\op$.
        \item Mean Estimation in Mahalanobis norm: $\|\vec\Sigma^{-1/2}\left(\widehat{\mu} - \mu\right)\|_2 \lesssim  s \sqrt{t} \eps^{1-1/t}$.
        \item Covariance estimation in relative spectral norm: $(1-\delta)\vec\Sigma \preceq \widehat{\vec\Sigma} \preceq(1+\delta) \vec\Sigma$ for $\delta \lesssim st \eps^{1-\frac{2}{t}}$.
    \end{itemizec}
\end{theorem}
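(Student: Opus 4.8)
The plan is to derive \Cref{cor:cov-estimation} as a black-box consequence of \Cref{thm:certifiability-hyper} together with the robust mean-and-covariance estimation algorithm of \cite{KotSS18}, whose guarantees are stated for distributions with \emph{certifiably} hypercontractive moments. First I would invoke \Cref{thm:certifiability-hyper} to conclude that the population law $P$ is $O(s)$-certifiably-hypercontractive-subgaussian; equivalently, by the whitening Fact preceding that theorem, the distribution of $\vec\Sigma^{\dagger/2}X$ is $O(s)$-certifiably-subgaussian. Concretely, this gives that for every even $i\le t$ the polynomial $(O(s\sqrt i))^i (v^\top\vec\Sigma v)^{i/2}-\E_{X\sim P}[\langle v,X-\mu\rangle^i]$ is a sum of squares, and since we have established the \emph{unconstrained} certifiability notion, this a fortiori yields the unit-sphere-constrained notion actually used in \cite{KotSS18} (see \Cref{rem:unit-norm}).

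Next I would transfer this from the population to the sample. Let $T$ denote the (unobserved) multiset of $n$ clean i.i.d.\ draws from $P$ before the adversary acts. For $n\gtrsim\poly(d^t,1/\eps)$, standard matrix concentration gives that the empirical second moment of $T$ is within a $(1\pm o(1))$ multiplicative factor of $\vec\Sigma$ in the Loewner order, and the $L_q$ moment refinements recorded in \Cref{thm:multi-scale-resilience} and \Cref{cor:certifiable-moments} --- applied to the whitened points, with $q$ a large constant multiple of $t$, then fed through Markov's inequality --- show that with probability at least $0.99$ the uniform distribution over $T$ is itself $O(s)$-certifiably-hypercontractive-subgaussian up to order $t$, via SoS proofs of degree $O(t)$. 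Here I would also use the standard fact (cf.\ the discussion in \Cref{ssec:related}) that certifiable bounded moments are stable under the small perturbations of the moment tensor incurred by sampling.

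With these structural facts in hand, the last step is to run the algorithm of \cite{KotSS18} with moment parameter $t$ on the $\eps$-corrupted set $S$, whose inlier portion $T$ we have just certified. That algorithm runs in $(nd)^{\poly(t)}$ time and, under the hypothesis $s t\eps^{1-2/t}\le c'$ --- needed so that the relative-spectral error it produces is bounded away from $1$, which is what lets its covariance-whitening substep be carried out without amplifying the remaining errors --- outputs $\widehat{\vec\Sigma}$ with $(1-\delta)\vec\Sigma\preceq\widehat{\vec\Sigma}\preceq(1+\delta)\vec\Sigma$ for $\delta\lesssim s t\eps^{1-2/t}$, together with $\widehat\mu$ satisfying $\|\widehat{\vec\Sigma}^{-1/2}(\widehat\mu-\mu)\|_2\lesssim s\sqrt t\,\eps^{1-1/t}$. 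Since $\widehat{\vec\Sigma}$ and $\vec\Sigma$ are then within a constant factor, this yields the Mahalanobis bound $\|\vec\Sigma^{-1/2}(\widehat\mu-\mu)\|_2\lesssim s\sqrt t\,\eps^{1-1/t}$, and the Euclidean bound follows from $\|\widehat\mu-\mu\|_2\le\|\vec\Sigma^{1/2}\|_\op\,\|\vec\Sigma^{-1/2}(\widehat\mu-\mu)\|_2$. A union bound over the two failure events of probability $\le 0.01$ each keeps the overall success probability at $\ge 0.9$.

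The step I expect to be the main obstacle is the empirical transfer above: ensuring that the \emph{finite-sample} inlier distribution inherits certifiable hypercontractivity \emph{up to order $t$} through a degree-$O(t)$ SoS proof and with the high probability the downstream algorithm demands, rather than only the in-expectation bound of \Cref{cor:certifiable-moments}. This is precisely what the $L_q$ versions in \Cref{thm:multi-scale-resilience} and \Cref{cor:certifiable-moments} were set up to supply; the remaining work is just verifying that the input interface of \cite{KotSS18} is matched by the certifiability notion we prove, and the genuinely hard content --- the existence of SoS certificates for an arbitrary subgaussian law --- has already been discharged in \Cref{thm:certifiability}.
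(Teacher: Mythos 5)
Your proposal is correct and follows the same route the paper implicitly takes: the paper offers no detailed proof of \Cref{cor:cov-estimation}---it is stated as a black-box consequence of the algorithm of \cite{KotSS18} (which is designed for certifiably-hypercontractive-subgaussian inliers) together with \Cref{thm:certifiability-hyper}---and your proposal fills in exactly that reduction: certify hypercontractivity of $P$ via \Cref{thm:certifiability-hyper}, pass this down to the (unobserved) clean sample via the $L_q$ refinements of \Cref{thm:multi-scale-resilience} and \Cref{cor:certifiable-moments}, and then invoke \cite{KotSS18} unchanged on the corrupted set $S$. The explicit observation that the unconstrained certifiability proven here implies the unit-sphere-constrained notion used in \cite{KotSS18} (\Cref{rem:unit-norm}) is the right interface check.

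One small discrepancy worth flagging: your derivation of the Euclidean bound from the Mahalanobis one gives $\|\widehat\mu-\mu\|_2 \le \|\vec\Sigma^{1/2}\|_\op\,\|\vec\Sigma^{-1/2}(\widehat\mu-\mu)\|_2 \lesssim s\sqrt{t}\,\eps^{1-1/t}\,\|\vec\Sigma\|_\op^{1/2}$, i.e.\ a factor of $\|\vec\Sigma\|_\op^{1/2}$, whereas the theorem statement has $\|\vec\Sigma\|_\op$. Your bound is the correct one implied by the Mahalanobis estimate; the statement in the paper appears to have a typo (the exponent $1/2$ is missing). This does not affect the validity of your argument.
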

As in prior cases, the accuracy-sample-time tradeoff achieved 
by the algorithm above is qualitatively optimal for SQ algorithms. 
Specifically, the optimality of the first two guarantees directly
follows from the SQ Lower bound of \cite{DiaKKPP22-colt}, 
because the covariance matrix in the provided hard instance is well-conditioned .
The third condition also follows from \cite{DiaKKPP22-colt}, 
because the aforementioned hard instance is also spectrally separated: 
the variance of the hidden direction in the aforementioned hard instance 
is $1 - \Theta(\eps^{1-\frac{2}{t}})$, as opposed to $1$ in the null case.

\begin{remark}[Related Robust Estimation Tasks]  
We remark that robustly estimating the covariance in 
Frobenius norm (with dimension-independent error) is information-theoretically 
impossible for generic hypercontractive subgaussian distributions; 
see, e.g.,~\cite{KotSS18}. 
Existing algorithms for robust covariance estimation in Frobenius norm 
require that the moments of degree-two polynomials are suitably 
bounded (as opposed to degree-one polynomials in \Cref{def:sg}).
Similarly, list-decodable covariance estimation (robustness to larger $\eps$) 
requires stronger properties than subgaussianity~\cite{KarKK19}. 
In fact, 
even for Gaussian marginals, there are inherent obstacles 
to achieve fully polynomial runtime~\cite{DiaKPPS21}.
\end{remark}

\subsection{Outlier-Robust Linear Regression}
\label{ssec:linear-regression}

We next consider the problem of linear regression, 
where the algorithm observes (corrupted) samples $(X,y)$ 
and the goal is to find a linear predictor that approximately 
minimizes the (average) squared loss over the inliers.
Existing algorithms for a generic (hypercontractive) subgaussian distribution 
incur polynomial dependence on the condition number of the covariance matrix 
of the covariates $X$~\cite{DiaKS19,DiaKKLSS19,PraSBR20,PenJL20,CheATJFB20,JamLST21}, which is undesirable in many applications.
Combining the algorithmic framework of \cite{BakPra21} 
with \Cref{thm:certifiability}, we obtain the following theorem:
\begin{theorem}[Robust Linear Regression: Consequence of~\cite{BakPra21} and \Cref{thm:certifiability-hyper}]
\label{cor:rob-regression}
Let $\cD$ be a distribution over $(X,y)$ with $X \in \R^d$ and $y\in\R$ such that $\cD$ is $s$-hypercontractive-subgaussian. 
Define $\ell:\R^D \to \R_+$ as $\ell(\beta):= \E_{(X,y)\sim \cD}\left[(y - \langle \beta,x \rangle)^2\right]$ to be the population loss and    $\beta^* := \arg\min_{\beta \in \R^d }\ell(\beta)$ as the optimal linear predictor on $\cD$.
Let $S$ be a set of $\eps$-corrupted samples from $\cD$ with $n:= |S|$.
\begin{itemizec}[leftmargin=*]
    \item (Case I: Arbitrary noise)
Fix any $t = 2^j$, for some $j\in \N$, 
such that $s t \eps^{1- \frac{2}{t}} \leq c'$.
If $n \gtrsim \poly(d^t,1/\eps)$ for some $\gamma \in (0,1)$, 
there is an algorithm that (i) takes as inputs $S$, $t$, $\eps$, $s$, $\gamma$, 
(ii) runs in $(nd)^{\poly(t)}$-time,  and (iii) outputs an 
estimate $\widehat{\beta}$ such that: 
\begin{itemize}
    \item $\|\vec \Sigma^{1/2} (\widehat{\beta}-\beta^*) \|_2 \lesssim s \sqrt{t}\eps^{1-\frac{2}{k}} \sqrt{\ell(\beta^*)}$
    \item $\ell(\widehat{\beta})  - \ell(\beta^*) \lesssim s^2 t \eps^{2 - \frac{4}{k}} \ell(\beta^*)$.
\end{itemize}
\item 
(Case II: Independent noise)
Assume that $y-X^\top\beta^*$ is independent of $X$ under $\cD$.
Fix any $t = 2^j$, for some $j\in \N$, 
such that $s t \eps^{1- \frac{2}{t}} \leq c'$.
If $n \gtrsim \poly(d^t,1/\eps)$ for some $\gamma \in (0,1)$, 
there is an algorithm that (i) takes as inputs $S$, $t$, $\eps$, $s$, $\gamma$, 
(ii) runs in $(nd)^{\poly(t)}$-time,  and 
(iii) outputs an estimate $\widehat{\beta}$ such that 
\begin{itemize}
    \item $\|\vec \Sigma^{1/2} (\widehat{\beta}-\beta^*) \|_2 \lesssim s \sqrt{t}\eps^{1-\frac{1}{k}} \sqrt{\ell(\beta^*)}$
    \item $\ell(\widehat{\beta})  - \ell(\beta^*) \lesssim s^2 t \eps^{2 - \frac{2}{k}} \ell(\beta^*)$.
\end{itemize}
\end{itemizec}
\end{theorem}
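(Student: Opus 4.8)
The plan is to treat \Cref{cor:rob-regression} as a black-box consequence of the robust regression framework of \cite{BakPra21}, whose algorithm and analysis apply verbatim to any distribution that is \emph{certifiably} hypercontractive in the sense of \Cref{def:cert-hypercontractive}, with the certifiability constant and the SoS degree entering the error bound and the runtime respectively. In that framework the certifiability hypothesis had to be assumed; \Cref{thm:certifiability-hyper} now discharges it for every $s$-hypercontractive subgaussian distribution, so the only remaining work is to check that the derived distributions the algorithm actually touches inherit certifiable hypercontractivity and then to match parameters.

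First I would record the certifiability inputs. Since $\cD$ is $s$-hypercontractive-subgaussian on $(X,y) \in \R^{d+1}$, \Cref{thm:certifiability-hyper} gives that $\cD$ is $Cs$-certifiably-hypercontractive-subgaussian. Certifiable hypercontractivity is preserved under linear maps: via the whitening equivalence stated just after \Cref{def:cert-hypercontractive} it reduces to certifiable subgaussianity after multiplication by $\vec\Sigma^{\dagger/2}$, and certifiable subgaussianity is closed under arbitrary linear images since the defining SoS identity simply pulls back. Hence the covariate marginal $P_X$ and, for every $\beta$, the joint law of $\bigl(X,\, y - \langle \beta, X\rangle\bigr)$ --- itself a linear image of $(X,y)$ --- are all $Cs$-certifiably-hypercontractive-subgaussian. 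This supplies exactly the per-direction moment control on $X$, on the residual $\eta^* := y - \langle \beta^*, X\rangle$, and on their joint behaviour that the analysis of \cite{BakPra21} invokes.

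Next I would pass to the empirical distribution: for $n \gtrsim \poly(d^t, 1/\eps)$ samples, the uniform distribution over the uncorrupted sample is, with high probability, $O(s)$-certifiably-hypercontractive at SoS degree $\poly(t)$ --- the hypercontractive analogue of \Cref{cor:certifiable-moments}, which follows by whitening from \Cref{cor:certifiable-moments} together with stability of certifiable subgaussianity under small perturbations of the moment tensors. With this in hand, running the algorithm of \cite{BakPra21} at SoS degree $\Theta(t)$ costs $(nd)^{\poly(t)}$ time, and its guarantee, instantiated with certifiability constant $O(s)$ and corruption level $\eps$, yields $\|\vec\Sigma^{1/2}(\widehat\beta - \beta^*)\|_2 \lesssim s\sqrt{t}\,\eps^{1-2/t}\sqrt{\ell(\beta^*)}$ and the matching excess-risk bound $\ell(\widehat\beta) - \ell(\beta^*) \lesssim s^2 t\,\eps^{2-4/t}\ell(\beta^*)$; the hypothesis $s t \eps^{1-2/t} \le c'$ is precisely what places us in the regime where these bounds are non-vacuous (the empirical second-moment matrix of the good points is then a spectral approximation of $\vec\Sigma$ and the least-squares objective is well posed). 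For Case~II the extra assumption $\eta^* \perp X$ lets one replace the Cauchy--Schwarz control of cross terms $\pE[(v^\top X)^2 (\eta^*)^2]$ by the exact factorization $\E[(\eta^*)^2]\cdot\pE[(v^\top X)^2]$, a constant multiple of a polynomial in $v$ and hence SoS-friendly, which sharpens the dependence from $\eps^{1-2/t}$ to $\eps^{1-1/t}$, exactly as in \cite{BakPra21}.

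I expect the only genuinely delicate point to be verifying that the certifiability hypothesis consumed by \cite{BakPra21} is exactly certifiable hypercontractivity of \emph{linear forms} (\Cref{def:cert-hypercontractive}) rather than, say, certifiable hypercontractivity of degree-two polynomials of $X$: our main theorem controls only moments of linear functionals, so a degree-two requirement could not be supplied this way. Resolving this amounts to confirming that the \cite{BakPra21} framework is built around the linear-form notion and derives every higher-degree inequality it needs --- covariance approximation, gradient concentration, cross-term bounds --- by SoS manipulations from that single input together with the joint certifiable hypercontractivity of $(X,\eta^*)$ established above. The remaining bookkeeping (SoS degree versus $t = 2^j$, the $\poly(d^t)$ sample size, and propagating the $O(s)$ constant through the final error expressions) is routine.
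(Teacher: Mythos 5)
Your proposal correctly captures the paper's intended argument. The paper provides no explicit proof of \Cref{cor:rob-regression} --- it is presented as a direct consequence of the robust regression framework of \cite{BakPra21} once \Cref{thm:certifiability-hyper} discharges the certifiable-hypercontractivity hypothesis that framework consumes, which is exactly your route, and your supporting observations (closure of certifiable hypercontractivity under linear images via whitening and substitution, passing to the empirical distribution as in \Cref{cor:certifiable-moments}, and parameter bookkeeping) are all sound. The one subtlety you flag --- whether \cite{BakPra21} truly consumes only linear-form certifiable hypercontractivity --- is the right thing to check, and your own Case~II discussion already indicates the resolution: the framework handles cross-terms such as $\pE[(v^\top X)^2(\eta^*)^2]$ by SoS Cauchy--Schwarz from the joint linear-form certificate on $(X,y)$, which is precisely why the arbitrary-noise bound degrades to $\eps^{1-2/t}$ relative to the independent-noise $\eps^{1-1/t}$; no degree-two certificate beyond what \Cref{thm:certifiability-hyper} provides is required.
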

Observe that the result above does not apply to list-decodable linear regression. This is because 
(certifiable) subgaussianity is not sufficient for list-decodable linear regression~\cite{KarKK19}. Similarly to list-decodable covariance estimation, 
even for Gaussian marginals, there are inherent obstacles 
to achieve fully polynomial runtime~\cite{DiaKPPS21}.

\subsection{Clustering Mixture Models} 
\label{ssec:clustering}
\label{ssec:mixtures}
Finally, we study the problem of clustering and/or 
parameter estimation for mixture models. Here the algorithm 
observes ($\eps$-corrupted) samples from $\sum_{i=1}^k \alpha_i P_i$, 
where $\alpha_i \geq 0$ denotes the mixing weights and $P_i$ denotes 
the individual component distributions.
We consider the regime where  $\eps \ll \min_i\alpha_i$, 
so that the outliers do not completely remove a cluster.  
In fact, even for clean data ($\eps = 0$), our \Cref{thm:clustering} is new. 
A prototypical statistical task is parameter estimation, 
where the goal is  estimate 
the mean of individual components 
assuming that their means are pairwise separated. 
To be precise, let $\Delta:= \min_{i \neq j}\|\E_{P_i}[X] - \E_{P_j}[X]\|_2$ 
denote the separation radius.
While $\Delta \gg \sqrt{\log k}$ is information-theoretically sufficient (and necessary) 
to estimate $\E_{X \sim P_i}[X]$~\cite{RegVij17}, current algorithms for generic 
subgaussian distributions need a much larger separation radius, $\Delta \gtrsim k^{\Omega(1)}$.
Combining \Cref{thm:certifiability} with \cite{HopLi18,KotSS18}, we obtain the following 
algorithmic guarantee:
\begin{theorem}[Clustering Subgaussian Mixtures: Consequence of~\cite{HopLi18,KotSS18} 
and \Cref{thm:certifiability}]
\label{thm:clustering}
Let $P_1,\dots,P_k$ be $s$-subgaussian distributions over $\R^d$, with means $\mu_i = \E_{P_i}[X]$ for $i \in [k]$.
Let $\alpha_1,\dots,\alpha_k$ be the mixture weights with $\alpha := \min_i \alpha_i$ and let $\eps$ denote the contamination parameter with $\eps \ll \alpha$.
Let $P$ denote the mixture distribution $\sum_i\alpha_i P_i$ 
and $\Delta$ denote the minimum pairwise separation, i.e., 
$\Delta = \min_{i \neq j} \|\mu_i-\mu_j\|_2$.
Suppose that $\Delta  \gtrsim k^{\gamma}$ for any constant $\gamma>0$.
Let $S$ be a set of $\eps$-corrupted 
$n$ samples from $P$. %
If $n \gg  \poly(d^{1/\gamma},k,1/\alpha)$,
then there is an algorithm that takes as input $S, \alpha, s, \Delta$, 
runs in time $(nd)^{O(1/\gamma)}$, 
and outputs $\widehat{\mu}_1,\dots, \widehat{\mu}_k$ such that with probability at least $9/10$: for all $j\in[k]$, $\min_{i \in [k]}\|\widehat{\mu}_i - {\mu}_j\|_2 \lesssim \poly_s(1/k)$.  
\end{theorem}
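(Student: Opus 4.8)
The proof is a black-box combination of \Cref{thm:certifiability} with the sum-of-squares clustering algorithms of \cite{HopLi18,KotSS18}, so the plan is largely a matter of verifying that the hypotheses of those algorithms are met. First I would fix the smallest even integer $m \geq c/\gamma$, where $c$ is an absolute constant chosen large enough that the separation threshold $s\sqrt m\, k^{O(1/m)}$ required by the clustering algorithms is dominated by $k^\gamma \lesssim \Delta$. By \Cref{thm:certifiability}, every component $P_i$ is $Cs$-certifiably subgaussian, and in particular $(C's\sqrt m, m)$-certifiably bounded. What the algorithm actually needs, however, is that the \emph{empirical} distribution over the inlier samples falling in each cluster is certifiably bounded with high probability; this follows by combining \Cref{cor:certifiable-moments} with Markov's inequality and a union bound over $i \in [k]$. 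Concretely, the $n$ inliers drawn i.i.d.\ from $P = \sum_i \alpha_i P_i$ split into clusters of sizes $N_i \sim \mathrm{Binomial}(n,\alpha_i)$, and conditioned on $N_i$ the $i$-th cluster is i.i.d.\ from $P_i$; since $n \gg \poly(d^{1/\gamma},k,1/\alpha)$, a Chernoff bound gives $N_i \geq \alpha_i n/2 \geq \poly(d^m)$ for all $i$ with high probability, and then \Cref{cor:certifiable-moments} plus Markov gives, with probability $\geq 1 - 1/(100k)$ per cluster, that the uniform distribution over that cluster is $(C''s\sqrt m, m)$-certifiably bounded.

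Next I would invoke the robust clustering algorithm of \cite{HopLi18,KotSS18}. Given an $\eps$-corrupted sample set from a mixture $\sum_i \alpha_i P_i$ in which (i) $\eps \ll \min_i \alpha_i$, so no cluster is erased by the corruption, (ii) the empirical distribution of the inliers within each cluster is $(O(s\sqrt m), m)$-certifiably bounded, and (iii) the means are pairwise separated by $\Delta \gtrsim s\sqrt m\, k^{O(1/m)}$, that algorithm runs in time $(nd)^{O(m)}$ and outputs $\widehat\mu_1,\dots,\widehat\mu_k$ such that, with probability $\geq 9/10$, $\min_{i\in[k]} \|\widehat\mu_i - \mu_j\|_2 \lesssim \poly_s(1/k)$ for every $j\in[k]$. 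With $m = \Theta(1/\gamma)$ the separation requirement reads $\Delta \gtrsim k^{O(\gamma)}$, which holds by hypothesis once the constant $c$ above is fixed, the runtime becomes $(nd)^{O(1/\gamma)}$, and the sample requirement $n \geq \poly(d^m,k,1/\alpha)$ is exactly $n \gg \poly(d^{1/\gamma},k,1/\alpha)$. Combining the two failure probabilities by a union bound yields the stated guarantee. It is immaterial that the certificates produced by \Cref{thm:certifiability} are non-constructive: the clustering algorithm solves an SDP relaxation, and its analysis uses only the \emph{existence} of a low-degree sum-of-squares certificate of certifiable boundedness of the inliers, exactly as in the ``proofs to algorithms'' template of \cite{HopLi18,KotSS18}.

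The one point requiring care---and the main, if modest, obstacle---is the interface between \Cref{thm:certifiability}, which concerns a fixed population distribution, and the hypotheses of \cite{HopLi18,KotSS18}, which are phrased in terms of certifiable bounds on the \emph{empirical} moments of the inliers and, moreover, in a proof system that in those works carries the extra axiom $\|v\|_2^2 = 1$. The population-to-empirical gap is bridged by \Cref{cor:certifiable-moments} together with Markov's inequality as above, and the proof-system discrepancy is a non-issue: by \Cref{rem:unit-norm}, our unconstrained certificate of $(C's\sqrt m,m)$-certifiable boundedness immediately implies the constrained certificate used in \cite{HopLi18,KotSS18}. What is then left is bookkeeping---propagating the $s$ and $\sqrt m$ factors through the separation condition and the final error bound, handling the random cluster sizes, and checking that the $\eps$-corruption with $\eps \ll \alpha$ is absorbed by the robustness of the algorithm rather than degrading its parameters---none of which presents a genuine difficulty once the certifiability input is in hand.
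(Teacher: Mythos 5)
Your proposal is correct and matches the route the paper intends by labeling this theorem a ``consequence'' of \Cref{thm:certifiability} and \cite{HopLi18,KotSS18}: establish per-cluster empirical certifiable boundedness of the inliers (via \Cref{cor:certifiable-moments}), handle random cluster sizes by Chernoff, and then plug into the SoS clustering algorithm, with \Cref{rem:unit-norm} resolving the $\|v\|_2^2 = 1$ axiom discrepancy. One small imprecision worth flagging: applying plain Markov to \Cref{cor:certifiable-moments} with target failure probability $1/(100k)$ inflates the per-cluster certifiable bound to roughly $(100k)^{1/m}\cdot O(s\sqrt m)$ rather than the constant $C''s\sqrt m$ you state---this $k^{O(1/m)}$ overhead is harmless because it is absorbed into the $k^{O(1/m)}$ separation threshold exactly as you later argue, but a cleaner derivation would invoke the $L_q$ moment bound of \Cref{thm:multi-scale-resilience} with $q \asymp m\log k$ (still within the sample budget $n \gtrsim \poly(d^m,k,1/\alpha)$) to get the high-probability bound without any degradation in $k$.
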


We remark that the sample-time tradeoff achieved by the algorithm above 
is qualitatively optimal within SQ algorithms and low-degree polynomial 
tests~\cite{DiaKPZ23-sq-mixture}. The corresponding information-computation 
tradeoff established in \cite{DiaKPZ23-sq-mixture} applies 
even when each $P_i$ is a bounded covariance Gaussian, 
i.e.,  $P_i=\cN(\mu_i, \vec\Sigma)$ for 
an unknown $\vec\Sigma$ with $\vec\Sigma \preceq \bI_d$.
That is, it turns out that 
essentially the hardest instance for the task of clustering subgaussian mixtures 
corresponds to the special case of Gaussian components (with common but unknown covariance).

\section{Conclusions and Open Problems} \label{sec:concl}

In this paper, we proved that all subgaussian distributions\footnote{We note that our result applies to a slightly larger distribution family: A simple truncation argument shows that \Cref{thm:certifiability} also applies to distributions that are $O(C\sqrt{m},m)$-certifiably bounded for $m \leq d \polylog(d)$ (as opposed to all $m \in \N$).} are certifiably subgaussian, i.e.,  that subgaussian moment bounds have a low-degree SoS proof. As a corollary, we obtained novel algorithmic applications for a number of well-studied high-dimensional learning tasks---in robust statistics and beyond. Our work suggests several 
intriguing research directions and concrete open problems.

\paragraph{Certifiability Beyond Subgaussians?} 
Perhaps the most obvious direction is to explore whether the statement of \Cref{thm:certifiability} holds for broader families of well-behaved distributions---with the class of {\em subexponential}\footnote{A distribution over $\R^d$ is $s$-subexponential if it is $(Csm,m)$-bounded for all $m \in \N$.}  distributions being the first natural candidate. At this time, it remains open whether subexponential distributions are even $(4, O(1))$-certifiably bounded. 
At a technical level, this is because \Cref{fact:canonical-subgaussian} 
is known to fail for subexponential distributions (for some $T$), 
even when comparing with the canonical distribution with subexponential tail, the Laplace distribution.
That said, we believe that the covering number of the set of pseudoexpectations 
that we establish in our work (for example, $\cT$ in \Cref{thm:multi-scale-resilience}) 
may be useful towards making further progress.\footnote{These covering numbers (in the Euclidean norm) 
follow from generic chaining guarantees~\cite{Talagrand21}. 
Analogous results for more general $\ell_p$ metrics 
follow from \cite[Chapter 8]{Talagrand21} by analyzing the corresponding product distributions.
}

\looseness=-1More broadly, it would interesting to close the gap between 
more general distribution families and their certifiable 
versions. Specifically, suppose that a distribution $P$ is $(m,B_m)$-
bounded. What is the largest $m'\geq 4$ and the smallest $B'_{m'}$ 
such that $P$ is $(m',B'_{m'})$-certifiably bounded 
(perhaps after allowing for degree-$m''$ SoS proofs 
for some $m''\geq m$ and under the axiom $\|v\|_2^2 =1$ as in \Cref{rem:unit-norm})?
The reduction-based hardness results of 
\cite{HopLi19} show that,  under SSEH,  for any constants $m, B_m$, 
even for the smallest non-trivial choice of $m' = 4$, the certifiable 
bound $B'_{4}$ must scale with the dimension $d$. 
To the best of our 
knowledge, the case of $m= \polylog(d), m'=4$ and 
$B_m = \poly(m), B'_4=\poly(m)$ remains open.  
We hope that our work will serve as motivation for further research on these important directions.

\paragraph{Faster Sample-Efficient Algorithms.}
Our main result yields efficient algorithms, 
via the Sum-of-Squares method, for a range of high-dimensional 
learning tasks involving subgaussian distributions (\Cref{sec:algo-implications}). The sample complexity of these algorithms
is qualitatively optimal (matching information-computation tradeoffs for SQ 
algorithms and low-degree tests). It would be interesting to obtain a more 
``fine-grained'' understanding of the complexity of these tasks, both with 
respect to sample and computational complexity. Can we 
design efficient learning algorithms that do not rely on 
solving large convex programs? For example, 
does there exist an algorithm for robust subgaussian mean estimation, 
achieving error $O(\eps^{3/4})$, using $n=O_\eps(d^2)$ samples 
and $\widetilde{O}(nd)$ runtime (as opposed to $\poly(nd)$ runtime)?

A related question is whether there exist non-trivial generalizations of gaussianity for which the aforementioned information-computation tradeoffs do not apply. For example, does there exist a natural subclass of subgaussian distributions for which we can robustly estimate the mean within error $\tilde{O}(\eps)$ in $\poly(d/\eps)$ time?
\

\section*{Acknowledgements}
We thank David Steurer for his $\ell_p$-norm reinterpretation of our proof.

\printbibliography

\end{document}